\theoremstyle{plain}
\newtheorem{theorem}{Theorem}[section]
\newtheorem{lemma}[theorem]{Lemma}
\newtheorem{corollary}[theorem]{Corollary}
\newtheorem{claim}[theorem]{Claim}
\theoremstyle{definition}
\newtheorem{definition}[theorem]{Definition}
\theoremstyle{remark}
\icmltitlerunning{Private Lossless Multiple Release}
\DeclareMathOperator{\EX}{\mathbb{E}}
\newcommand{\R}{\ensuremath{\mathbb{R}}}
\crefname{lstlisting}{Listing}{Listings}
\newcommand{\cA}{\mathcal{A}}
\newcommand{\cD}{\mathcal{D}}
\newcommand{\cF}{\mathcal{F}}
\newcommand{\cM}{\mathcal{M}}
\newcommand{\cT}{\mathcal{T}}
\newcommand{\cX}{\mathcal{X}}
\newcommand{\cY}{\mathcal{Y}}
\newcommand{\allbold}[1]{{\bfseries\boldmath #1}}
\newcommand{\lap}{\ensuremath{\operatorname{Lap}}}
\newcommand{\ber}{\ensuremath{\operatorname{Ber}}} \newcommand{\X}{\ensuremath{\mathbf{x}}}
\DeclareMathOperator*{\var}{Var}
\newcommand{\NN}{\ensuremath{\operatorname{N}}}
\renewcommand{\epsilon}{\varepsilon}
\newcommand{\releaseGeneric}{\ensuremath{\operatorname{Generic Multiple Release}\xspace}}
\newcommand{\releaseGenericS}{\ensuremath{\operatorname{Simplified Generic Multiple Release}\xspace}}
\newcommand{\releaseFf}{\ensuremath{\operatorname{Factorization Multiple Release}\xspace}}  \newcommand{\GLSGH}{\ensuremath{\operatorname{Histogram Gradual Release }}\xspace} \newcommand{\EGLSGH}{\ensuremath{\operatorname{Efficient Histogram Gradual Release}}\xspace}  
\newcommand{\rhomax}{\ensuremath{\rho_{\text{max}}}\xspace}
\newcommand{\zcdp}{$\rho$-zCDP\xspace}
\newcommand{\zcdplong}{$\rho$-zero-concentrated differential privacy (\zcdp)\xspace}
\newcommand{\users}{\ensuremath{S}\xspace}
\newcommand{\user}{\ensuremath{s}\xspace}
\newcommand{\adversarialnodes}{\ensuremath{\users'}\xspace}
\newcommand{\adversarialnode}{\ensuremath{\user'}\xspace}
\newcommand{\userbudget}{\ensuremath{\rho_\user}\xspace}
\newcommand{\adversarialbudget}{\ensuremath{\rho_{\adversarialnode}}\xspace}
\renewcommand\paragraph[1]{{{\textbf{#1.}}}}
\newcommand{\MM}{\mathcal{M}}
\newcommand{\MMM}{\textsc{M}}
\newcommand{\DD}{\mathcal{D}}
\newcommand{\CC}{\mathcal{C}}
\newcommand{\XX}{\mathcal{X}}
\newcommand{\YY}{\mathcal{Y}}
\newcounter{casenum}
\newenvironment{caseof}{\setcounter{casenum}{1}}{\vskip.5\baselineskip}
\newcommand{\case}[2]{\vskip.5\baselineskip\par\noindent {\bfseries Case \arabic{casenum}:} #1\\#2\addtocounter{casenum}{1}}
\newcommand{\IAN}{independent additive noise\xspace}
\definecolor{orangeish}{HTML}{f1a340} \definecolor{grayish}{HTML}{f7f7f7} \definecolor{purpleish}{HTML}{998ec3} \definecolor{blueish}{HTML}{004488}
    \definecolor{darkgreen}{RGB}{2,100,64} \definecolor{lightgreen}{HTML}{b2f2bb}
    \definecolor{lightblueish}{RGB}{230,244,255}
    \definecolor{gold}{rgb}{0.83, 0.69, 0.22}
\newcommand{\PARAMETERS}{\item[\algorithmicparameters]}
\newcommand{\algorithmicparameters}{\textbf{Parameters:}}
\newcommand{\INPUTS}{\item[\algorithmicinputs]}
\newcommand{\algorithmicinputs}{\textbf{Inputs:}}
\tikzstyle{BLACK}=[draw=black, shape=circle, fill=black, inner sep=3pt]
\tikzstyle{DOM}=[fill={rgb,255: red,122; green,0; blue,42}, draw=black, shape=circle, inner sep=3pt]
\tikzstyle{NONE}=[fill={rgb,255: red,247; green,33; blue,212}, draw=black, shape=circle, inner sep=3pt]
\tikzstyle{NTWO}=[fill={rgb,255: red,0; green,177; blue,219}, draw=black, shape=circle, inner sep=3pt]
\tikzstyle{NTHR}=[fill={rgb,255: red,140; green,143; blue,14}, draw=black, shape=circle, inner sep=3pt]
\tikzstyle{GRAYN}=[fill={rgb,255: red,128; green,128; blue,128}, draw=black, shape=circle, inner sep=3pt]
\tikzstyle{BRGRAY}=[fill={rgb,255: red,236; green,236; blue,236}, draw=none, shape=circle, inner sep=3pt]
\tikzstyle{TEXTSTD}=[fill=white, draw=black, shape=rectangle, tikzit shape=rectangle, text width=3cm, rounded corners]
\tikzstyle{TEXTHIGH}=[fill={rgb,255: red,231; green,245; blue,255}, draw=black, shape=rectangle, tikzit shape=rectangle, text width=3cm, rounded corners]
\tikzstyle{TEXTHIGHGRAY}=[fill={rgb,255: red,220; green,220; blue,220}, draw=black, shape=rectangle, tikzit shape=rectangle, text width=3cm, rounded corners]
\tikzstyle{TEXTHIGHGREEN}=[fill={rgb,255: red,149; green,242; blue,145}, draw=black, shape=rectangle, tikzit shape=rectangle, text width=3.4cm, rounded corners]
\tikzstyle{TEXTHIGHYELLOW}=[fill={rgb,255: red,250; green,250; blue,160}, draw=black, shape=rectangle, tikzit shape=rectangle, text width=3cm, rounded corners]
\tikzstyle{EDGE}=[-, fill=none, line width=2pt]
\tikzstyle{BLUE}=[-, draw={rgb,255: red,0; green,101; blue,189}, line width=1.5pt]
\tikzstyle{GRAY}=[-, fill={rgb,255: red,128; green,128; blue,128}]
\tikzstyle{DARKGREEN}=[-, fill=none, draw={rgb,255: red,19; green,100; blue,13}, line width=1.25pt]
\tikzstyle{big dash}=[-, dashed, dash pattern=on 1mm off 2mm, fill={rgb,255: red,178; green,255; blue,253}]
\tikzstyle{big dash thick}=[-, thick, dashed, dash pattern=on 4mm off 2mm, fill={rgb,255: red,178; green,255; blue,253}]
\tikzstyle{new edge style 0}=[-, fill={rgb,255: red,70; green,255; blue,243}, line width=1.5pt]
\tikzstyle{FILLGREEN}=[-, fill={rgb,255: red,144; green,237; blue,134}, line width=1pt]
\tikzstyle{FILLRED}=[-, fill={rgb,255: red,237; green,59; blue,36}, line width=1pt]
\tikzstyle{FILLBLUE}=[-, fill={rgb,255: red,162; green,232; blue,230}, line width=1pt]
\tikzstyle{FILLPURPLE}=[-, fill={rgb,255: red,255; green,233; blue,239}]
\tikzstyle{FILLDARKBLUE}=[-, fill={rgb,255: red,238; green,230; blue,255}]
\tikzstyle{FILLDARKBLUEINVISIBLE}=[-, draw=none, fill={rgb,255: red,238; green,230; blue,255}]
\tikzstyle{FILLGREENINVISBLE}=[-, draw=none, fill={rgb,255: red,232; green,255; blue,207}]
\tikzstyle{ULTRAGRAY}=[-, draw={rgb,255: red,150; green,150; blue,150}]
\tikzstyle{EDGEDASHED}=[-, dashed, fill=none, line width=1pt]
\tikzstyle{simple directed edge}=[draw=black, thick, line width=1pt, ->]
\tikzstyle{simple dashed directed edge}=[->, draw=black, dashed]
\tikzstyle{SLIGHTLYDASHED}=[-, dotted, fill=none, draw={rgb,255: red,128; green,128; blue,128}]
\tikzstyle{simple dashed directed edge}=[->, draw=black, line width=0.5mm, dashed]
\tikzstyle{simple dashed directed edge not thick}=[->, draw=black, line width=0.45mm, dashed]
\tikzstyle{simple}=[->, draw=black, thick]
\tikzstyle{noise arrow}=[draw={rgb,255: red,128; green,44; blue,127}, fill=none, ->, line width=0.4mm]
\tikzstyle{invis}=[-, draw=none]
\tikzstyle{red-edge}=[-, draw={rgb,255: red,191; green,0; blue,64}, fill=none, line width=1.5pt]
\tikzstyle{thick}=[-, fill=none, line width=1.25pt]
\tikzstyle{noise arrow stopper}=[-, line width=0.35mm, draw={rgb,255: red,116; green,66; blue,128}]
\tikzstyle{black dotted}=[-, dotted, line width=1pt]
\tikzstyle{new edge style 1}=[line width=0.5mm, draw=black, ->, fill=none]
\tikzstyle{thick2pt}=[-, line width=1pt]
\tikzstyle{purple arrow}=[line width=1pt, draw={rgb,255: red,128; green,0; blue,128}, ->]
\tikzstyle{new edge style 2}=[line width=1pt, ->, draw={rgb,255: red,0; green,101; blue,189}]
\tikzstyle{FILLBLUEINVISBLE}=[-, fill={rgb,255: red,198; green,238; blue,255}, draw=none]
\begin{document}

\twocolumn[
\icmltitle{Private Lossless Multiple Release}

\icmlsetsymbol{equal}{*}

\begin{icmlauthorlist}
    \icmlauthor{Joel Daniel Andersson}{barc,diku}
    \icmlauthor{Lukas Retschmeier}{barc,diku}
    \icmlauthor{Boel Nelson}{diku}
    \icmlauthor{Rasmus Pagh}{barc,diku}
\end{icmlauthorlist}

\icmlaffiliation{barc}{Basic Algorithms Research Copenhagen (BARC), Denmark}
\icmlaffiliation{diku}{University of Copenhagen, Denmark}

\icmlcorrespondingauthor{Joel Daniel Andersson}{jda@di.ku.dk}
\icmlcorrespondingauthor{Lukas Retschmeier}{lure@di.ku.dk}
\icmlcorrespondingauthor{Boel Nelson}{bn@di.ku.dk}
\icmlcorrespondingauthor{Rasmus Pagh}{pagh@di.ku.dk}

\icmlkeywords{Differential Privacy, Algorithms}

\vskip 0.3in
]

\printAffiliationsAndNotice{}  

\begin{abstract}
Koufogiannis et al.~(2016) showed a \emph{gradual release} result for Laplace noise-based differentially private mechanisms: given an \(\varepsilon\)-DP release, a new release with privacy parameter \(\varepsilon' > \varepsilon\) can be computed such that the combined privacy loss of both releases is at most \(\varepsilon'\) and the distribution of the latter is the same as a single release with parameter \(\varepsilon'\).
They also showed gradual release techniques for Gaussian noise, later also explored by Whitehouse et al.~(2022).

In this paper, we consider a more general \emph{multiple release} setting in which analysts hold private releases with different privacy parameters corresponding to different access/trust levels.
These releases are determined one by one, with privacy parameters in arbitrary order. 
A multiple release is \emph{lossless} if having access to a subset~\(S\) of the releases has the same privacy guarantee as the least private release in \(S\), and each release has the same distribution as a single release with the same privacy parameter.
Our main result is that lossless multiple release is possible for a large class of additive noise mechanisms.
For the Gaussian mechanism we give a simple method for lossless multiple release with a short, self-contained analysis that does not require knowledge of the mathematics of Brownian motion.
We also present lossless multiple release for the Laplace and Poisson mechanisms.
Finally, we consider how to efficiently do gradual release of sparse histograms, and present a mechanism with running time independent of the number of dimensions.
\end{abstract}

\section{Introduction}\label{sec:introduction}

\emph{Differential privacy}~\cite{dwork_calibrating_2006} is a statistical notion that provides provable privacy guarantees. 
Differentially private (DP) algorithms typically introduce inaccuracy through noise to achieve privacy, and the resulting privacy-accuracy trade-off is the key object of study in the area. 
Of specific interest is the \emph{privacy budget} that determines how much information the output of a differentially private algorithm may reveal about its inputs---a smaller budget means more private but less accurate results.

\paragraph{Motivation} In deployments of differential privacy, it may be hard to determine the appropriate privacy budget to grant an analyst since it depends on trust and accuracy assumptions that may change over time. 
A system may also have different security clearance levels---a data analyst might have a higher clearance level than a developer, but both might require access to \emph{some} statistics. 
Similarly, a company that wants to release, for example, user statistics could make an accurate release for their own data analysts, a less accurate release for external consultants, and include an even less accurate release in a report for shareholders or other external actors. 
A different example setting is users that want to sell their data on data markets: users could let the accuracy of the release depend on how much they are paid, and use different budgets for different releases.

Another usage scenario, relevant in distributed or federated settings, is \emph{local differential privacy} where the data owner could be sharing a differentially private function of their data with multiple servers, and the set of servers might change over time.
Here the privacy budget might depend on the server: for example, a patient may trust their local hospital more than their national hospitals, but might not trust the hospitals not to collude by sharing data among themselves.

\paragraph{Multiple releases}
These scenarios motivate creating \emph{multiple releases} with different privacy budgets aimed at different analysts.
However, these releases should be \emph{coordinated} such that a group of analysts who combine their information do not gain more knowledge about the input than the most knowledgeable member of the group.
This kind of collusion resilience was first studied by~\citet{XiaoTC09} with a non-DP privacy objective---we refer to their work for additional motivation.

A related aspect is that we may want to provide an analyst with a less private, more accurate release after the trust we place in them increases.
In this case, we want the accuracy of the latest release to match the accuracy that can be obtained, given the combined privacy budget of both releases. 
That is, no additional cost should be incurred for making two releases rather than one. 
For example, an external consultant that later gets employed directly by the company should be able to get access to the more accurate release without constituting a privacy violation or requiring an increased privacy budget to reach the same accuracy.

\paragraph{Baseline}
It is possible to create multiple releases for \emph{any} differentially private mechanism if we are willing to increase the privacy budget by a constant factor.
In particular, we can create a sequence of independent releases with geometrically increasing privacy parameters, referred to as ``$\varepsilon$-doubling'' by~\citet{LigettNRWW17}, and provide each analyst with the most accurate release they are entitled to.
Using composition results, the combined privacy budget for the information given to a set of analysts is a constant factor from the highest budget of a single analyst in the group.
In this paper, we study how to make multiple releases in a \emph{lossless} way without accuracy or privacy penalty.

\subsection{Related Work}
\citet*{koufogiannis_2016} introduced the concept of \emph{gradual release}, which makes it possible to increase the privacy budget with \emph{no loss} in accuracy.
They also considered privacy tightening for the Laplace mechanism, where successive releases are increasingly private.
In the journal version of the paper, they also introduce gradual release for the Gaussian mechanism under approximate DP which is based on the machinery of Brownian motion.

This technique was later applied in a noise reduction framework by~\citet*{LigettNRWW17} with the goal of producing mechanisms with ex-post privacy, i.e., where the privacy budget is set based on what is required to achieve a desired accuracy level.
Follow-up work by~\citet*{whitehouse2022} gave results for noise reduction under approximate DP using Brownian motion.
These works were motivated by work on privacy filters and odometers~\citep{RogersVRU16}, keeping track of privacy budgets over time, rather than multiple release settings.
Recently, \citet*{pan2024randomized} demonstrated lossless gradual release for randomized response.

Similar research questions have been investigated outside of the DP literature.
\citet*{XiaoTC09} studied releasing a sensitive dataset where each element is kept with probability $p$, and otherwise sampled uniformly from the universe.
\citet*{LiCLZ12} considered privatizing data by additive Gaussian noise of scale $\sigma$.
Both works dealt with arbitrary sequences of parameters (probabilities $p$ or noise scalings $\sigma$), and demonstrated how to correlate releases to guarantee that (1) each release matches the single-release case and, (2) limiting the sensitive information derived from combining releases.
Our method for adaptively producing Gaussian releases deviates from~\cite{LiCLZ12}, in that ours does not need to maintain any covariance matrix for past releases.
This makes our approach more time- and space-efficient.

\begin{figure}[t]
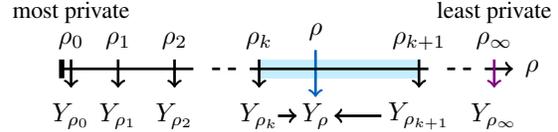

    \centering
    \scalebox{1.0}{
        \tikzfig{figures/vis}
    }
   \caption{
   The idea behind \emph{lossless multiple release}.
   For concreteness we consider additive noise mechanism and zero-concentrated differential privacy.
   Each $Y_{\rho_i}$ denotes a noisy estimate, and to release a new estimate with $\rho > 0$, we can combine the adjacent estimates for $\rho_k$ and $\rho_{k+1}$ together with some fresh noise to obtain a new release $Y_\rho$ that is exactly $\rho$-zCDP.
   Note that these estimates do not need to be strictly increasing (or decreasing) but can be released in \emph{any order}. 
   Furthermore, releasing any subset of these estimates is exactly $\max(S)$-zCDP, where $S$ is the set of privacy parameters.}
\label{fig:the-idea}
\end{figure}

\subsection{Basic Technique}
We illustrate our framework in the setting of additive Gaussian noise and \zcdplong \footnote{see definitions in \cref{sec:preliminaries}}.
We want to release a private estimate of a real-valued query ${f: \mathcal{X} \rightarrow \R}$ with $\ell_2$-sensitivity $\Delta_2 = 1$ using the Gaussian mechanism.
Now consider the simple case where one wants to privately release an estimate with two privacy levels ${\rho < \rho'}$.
Then releasing ${Y_\rho = f(\X) + \NN(0, \frac{1}{2\rho})}$ together with ${Y_{\rho'-\rho} = f(\X) + \NN(0, \frac{1}{2(\rho'-\rho)})}$ is $\rho'$-zCDP by composition.
Now observe that we can combine these estimates to produce a better estimate using inverse variance weighting: $Y = \frac{\rho}{\rho+(\rho'-\rho)} Y_\rho + \frac{\rho' - \rho}{\rho+(\rho' - \rho)}Y_{\rho'-\rho}$ yields \emph{exactly} the same utility as a single release under $\rho'$-zCDP.
That is, instead of using a privacy budget of $\rho + \rho'$ for independently releasing both estimates, the overall budget spent is just the maximum of both, which is $\rho'$.

To do multiple lossless releases for more general additive noise mechanisms, it turns out that one can always combine existing releases with fresh noise as illustrated in \Cref{fig:the-idea}.
In fact, to make any new release with privacy parameter $\rho$, it suffices to have saved the two ``adjacent'' releases whose privacy parameters are closest to $\rho$.
Our approach applies more generally to a class of (independent) additive noise mechanisms that support gradual release, but the releases can be made in any order, and we do not require the set of releases to be known in advance. 

\paragraph{Our contributions} 
We introduce a framework for \emph{lossless multiple release}, generalizing past work on gradual release.
In addition to allowing for making multiple private releases and having the privacy loss only scale with the least private release, we impose no specific ordering on the privacy parameters used.
Furthermore, we formalize a general theorem (\Cref{sec:meta-theorem}) showing that lossless multiple release is possible for a large class of mechanisms based on adding i.i.d. noise to each coordinate whose distribution satisfies a \emph{convolution preorder}, as defined next.
\begin{definition}[Convolution preorder]\label{def:decomposable}
    A family of real-valued distributions $\DD(\rho)$ parameterized by $\rho\in\R_+$ is said to satisfy a \emph{convolution preorder}, if given $\DD(\rho_1)$ and $\DD(\rho_2)$ for $\rho_1 < \rho_2$, there exists a distribution $\CC(\rho_2, \rho_1)$ such that $\DD(\rho_2) * \CC(\rho_2, \rho_1) = \DD(\rho_1)$.
\end{definition}
This relation is a natural way of stating that the distributions become more noisy as the parameter $\rho$ gets smaller.
We are unaware of any existing term in the literature but the definition is closely related to \emph{convolution order}~\cite{stochasticorders2007}. 
Examples of noise distributions satisfying Definition~\ref{def:decomposable} include the Laplace, Poisson, and the Gaussian mechanisms, parameterized by a decreasing function of their variance.

\begin{theorem}[Meta theorem, informal]\label{th:meta-informal}
    Let ${\cA_{f,\rho}:\cX \to \R^d}$ be a mechanism that adds independent, identically distributed noise to coordinates of a function $f:\cX \to \R^d$.
    If the noise distribution of $\cA_{f, \rho}$ satisfies \emph{convolution preorder}, then there exists an algorithm enabling \emph{lossless multiple release}.
    Also, any invertible post-processing of $\cA_{f,\rho}$ preserves this property.
\end{theorem}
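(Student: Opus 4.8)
The plan is to reduce to the scalar ($d=1$) case, build one ``canonical'' coupling of the noise variables across all privacy levels that has independent increments — and is therefore Markov in $\rho$ — and then argue that the online algorithm merely samples, level by level, from the conditionals of this coupling, so that regardless of the order in which levels arrive, the joint law of all releases equals that of one fixed sorted chain. The reduction to $d=1$ is immediate because $\cA_{f,\rho}$ adds i.i.d.\ noise coordinatewise and the algorithm acts on each coordinate independently, so the vector statement follows by independence across coordinates. The invertible-post-processing claim is then also immediate: given releases of $\phi\circ\cA_{f,\rho}$ one recovers the $\cA_{f,\rho}$-releases, runs the scalar algorithm, and re-applies $\phi$; marginals are preserved by definition of $\phi$, and any subset of the $\phi$-releases is a bijective post-processing of the corresponding subset of the $\cA$-releases, so every divergence-based privacy guarantee transfers.

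For the scalar case, fix a finite set of levels $\rho_1<\cdots<\rho_m$ and, using \cref{def:decomposable} repeatedly, form the ``chain'' $Y_{\rho_m}=f(x)+Z$ with $Z\sim\DD(\rho_m)$ and $Y_{\rho_i}=Y_{\rho_{i+1}}+C_i$ for $i<m$, with independent $C_i\sim\CC(\rho_{i+1},\rho_i)$. Telescoping the convolution identity shows each $Y_{\rho_i}$ has exactly the single-release law $f(x)+\DD(\rho_i)$, and the whole vector is by construction a data-independent randomized post-processing of $Y_{\rho_m}=\cA_{f,\rho_m}(x)$, so any DP or \zcdp guarantee of $\cA_{f,\rho_m}$ is inherited by every subset — this already gives the lossless property for a known, sorted set of levels. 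What remains is to realize this chain with an algorithm that sees the $\rho_i$ online and in arbitrary order.

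The algorithm keeps the sorted list of (level, release) pairs; on a new level $\rho$ with in-list neighbours $\rho_k<\rho<\rho_{k+1}$ (either possibly absent) it does one of four things: if both are absent, output $f(x)+\DD(\rho)$; if only the upper neighbour is present, output $Y_{\rho_{k+1}}+C$ with $C\sim\CC(\rho_{k+1},\rho)$, a pure post-processing; if only the lower neighbour is present, sample $Z_\rho$ from the conditional of $Z_\rho$ given $Z_\rho+C'=Y_{\rho_k}-f(x)$ with $C'\sim\CC(\rho,\rho_k)$, which the mechanism can compute since it holds $f(x)$; and if both are present, ``bridge'' by splitting the observed increment $Y_{\rho_k}-Y_{\rho_{k+1}}\sim\CC(\rho_{k+1},\rho_k)$ into independent $\CC(\rho_{k+1},\rho)$ and $\CC(\rho,\rho_k)$ pieces conditioned on their sum, and set $Z_\rho=Z_{\rho_{k+1}}+(\text{first piece})$. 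Correctness is an induction on the number of releases: the joint law after each step is that of the sorted chain on the levels released so far, the only substantive point being that inserting a level in the middle and bridge-sampling is distributionally identical to having built the longer chain from the start. That identity is exactly the independent-increments property of the chain, and it is also what lets the algorithm store only the two adjacent releases.

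The main obstacle is making the bridge step rigorous. It requires the family $\CC$ to be \emph{consistent}, i.e.\ $\CC(\rho_3,\rho_2)*\CC(\rho_2,\rho_1)=\CC(\rho_3,\rho_1)$ for $\rho_1<\rho_2<\rho_3$, and it requires the increment-splitting conditional law to exist and be samplable. Consistency is not forced syntactically by \cref{def:decomposable}, since the witnessing $\CC$ need not be unique, but it holds as soon as deconvolution is unique — which is the case for the Laplace, Gaussian, and Poisson families because their characteristic functions never vanish — so one either imposes this mild non-degeneracy or fixes $\CC$ through the characteristic-function quotient and verifies consistency there. Existence of the conditional law is automatic on $\R$, and the remaining steps (coordinatewise independence, the telescoping identity, and the post-processing inequality) are routine.
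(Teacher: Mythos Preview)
Your proposal is correct and takes essentially the same approach as the paper (\Cref{lemma:additive_noise_means_lossless} for the IAN part, \Cref{lemma:invertible_postp} for invertible post-processing): build the independent-increment chain~\eqref{eq:multiple-release}, telescope for the marginals, observe that everything is a data-independent post-processing of $Y_{\rho_m}$, and handle adaptivity by conditional (bridge) sampling via induction on the number of levels. One point where you are actually more careful than the paper: you correctly flag that the bridge step needs the consistency $\CC(\rho_3,\rho_2)*\CC(\rho_2,\rho_1)=\CC(\rho_3,\rho_1)$, which \Cref{def:decomposable} does not literally force but which holds whenever deconvolution by $\DD$ is unique---as it is for Gaussian, Laplace, and Poisson---whereas the paper leaves this implicit.
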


For the Gaussian mechanism in particular, we show how to get lossless multiple release by only using basic properties of Gaussians, and we show similar results for the Laplace and Poisson mechanisms from first principles.
We give concrete instantiations of \emph{Gaussian sparse histograms} (\cref{sec:sparse-histograms}) and \emph{factorization mechanisms} (\cref{sec:factorization-mechanism}).

 \section{Threat Model and Goals}\label{sec:threat-model}

Our setting is a multi-user system with a set of analysts/users~\users, all able to query the same dataset using a differentially private mechanism. 
User \user has their own security clearance level with corresponding privacy budget $\userbudget$. 
This setting allows for multi-level security where each user belongs to a security clearance level, like in the classic Bell-LaPadula model~\cite{bell_secure_1976}, where users with high clearance levels have higher values of $\rho$. 
A common example of such clearance levels is using increasing levels with labels such as \emph{public}, \emph{restricted}, \emph{confidential}, and \emph{top secret}. 
The user with the highest clearance in the system has a privacy budget of $\underset{\user \in \users}{\max}(\userbudget)$, which we denote \rhomax.

We consider an adversary who gains partial knowledge, that is, an adversary that sees \emph{some} of the releases. Our goal is to design a differentially private mechanism such that an adversary with access to releases from some users ${\adversarialnodes \subseteq \users}$, can learn \emph{at most} what they could have learned from a release with privacy parameter $\underset{\adversarialnode \in \adversarialnodes}{\max}(\adversarialbudget)$.
This means that even by compromising or colluding with more users, the adversary's knowledge may not increase. 
In case an adversary observes all releases (e.g., by compromising all users), the privacy loss would be bounded by \rhomax. 

However, our goal is to design a mechanism where the releases are \emph{lossless} in the sense that the noise distribution from multiple releases with a combined budget \rhomax would be indistinguishable from one single release with the privacy budget \rhomax. 
In other words, the privacy loss should be determined by \rhomax, while independent releases would usually have privacy parameter $\underset{\user \in \users}{\sum}\userbudget$ due to composition.
 
\section{Gaussian Lossless Multiple Release}\label{sec:gradual-release}

Extending the work in~\cite{koufogiannis_2016,LiCLZ12}, we demonstrate next that in the case of the Gaussian mechanism, providing lossless multiple release is clean and follows immediately from simple properties of Gaussians.
Throughout the paper, we use the symbol $Y$ to be a random variable that depends on the private dataset and $Z$ to be one that does not.

We first consider the gradual release setting where an increasingly accurate estimate is released, and the overall privacy loss is determined solely by the latest, least private release.
In \Cref{lemma:gaussian_seq_ooo}, we drop this restriction, allowing releases in any order while still guaranteeing that the overall privacy guarantee is the maximum $\rho$ value provided.
For simplicity, we consider the one-dimensional case, where we want to release some query $f:\mathcal{X} \rightarrow \R$.
The $d$-dimensional setting is handled by sampling each coordinate independently.

\paragraph{Getting started}
Foreshadowing the usage of $\rho$-zCDP, we initially use $1/(2\rho)$ for denoting the variance of a Gaussian.
Our inquiry starts with a basic observation about inverse-variance weighting of Gaussians.
\begin{lemma}\label{lemma:inverse-variance-weighting-is-nice}
    Let $Y_\rho\sim\NN(\beta, \frac{1}{2\rho})$ and $Y_{\rho'}\sim\NN(\beta, \frac{1}{2\rho'})$ where $\beta\in\mathbb{R}$ and $\rho, \rho' > 0$.
    Then
    \begin{equation*}
        Y = \tfrac{\rho}{\rho + \rho'}Y_{\rho} + \tfrac{\rho'}{\rho + \rho'}Y_{\rho'} \sim\NN\bigg(\beta, \tfrac{1}{2(\rho + \rho')}\bigg)\,.
    \end{equation*}
\end{lemma}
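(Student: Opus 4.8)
The plan is to invoke the elementary fact that an affine combination of independent Gaussian random variables is again Gaussian, and then simply match the first two moments. First I would make explicit the (here implicit, cf.\ the ``getting started'' discussion) assumption that $Y_\rho$ and $Y_{\rho'}$ are \emph{independent}; without independence the claimed variance would not hold. Setting $w=\tfrac{\rho}{\rho+\rho'}$ so that $1-w=\tfrac{\rho'}{\rho+\rho'}$, the variable $Y=wY_\rho+(1-w)Y_{\rho'}$ is a deterministic affine function of an independent Gaussian pair, hence Gaussian; it therefore suffices to compute $\EX[Y]$ and $\var(Y)$.

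For the mean, linearity of expectation gives $\EX[Y]=w\beta+(1-w)\beta=\beta$, regardless of the weights. For the variance, independence lets me add scaled variances: $\var(Y)=w^{2}\cdot\tfrac{1}{2\rho}+(1-w)^{2}\cdot\tfrac{1}{2\rho'}$. Substituting the weights and factoring out $\tfrac{1}{2(\rho+\rho')^{2}}$, the remaining bracket is $\rho+\rho'$, leaving $\var(Y)=\tfrac{1}{2(\rho+\rho')}$. Hence $Y\sim\NN\!\big(\beta,\tfrac{1}{2(\rho+\rho')}\big)$, as claimed.

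The point worth flagging is conceptual rather than technical: there is no real obstacle, since the computation is a few lines, but one should observe \emph{why} the weights are what they are. The coefficients $\tfrac{\rho}{\rho+\rho'}$ and $\tfrac{\rho'}{\rho+\rho'}$ are exactly the normalized precisions (inverse variances, up to the common factor $2$) of the two inputs, which is precisely the choice that makes the precision of $Y$ equal the sum of the input precisions, i.e.\ $\var(Y)^{-1}=\var(Y_\rho)^{-1}+\var(Y_{\rho'})^{-1}$; this is the inverse-variance-weighting identity. If a derivation that keeps the independence hypothesis visible is preferred, I could instead use characteristic functions: by independence $\EX[e^{itY}]=\EX[e^{itwY_\rho}]\,\EX[e^{it(1-w)Y_{\rho'}}]$ factors into a product of two Gaussian characteristic functions, whose product is again the characteristic function of a Gaussian with mean $\beta$ and variance $\tfrac{1}{2(\rho+\rho')}$, giving the result by uniqueness of characteristic functions.
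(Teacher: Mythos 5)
Your proof is correct and takes essentially the same route as the paper's: compute the mean by linearity and the variance by summing the squared-weight-scaled variances, then note that a linear combination of Gaussians is Gaussian. Your explicit flagging of the independence hypothesis (which the paper leaves implicit but clearly relies on in its variance computation) is a reasonable clarification, not a different approach.
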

\begin{proof}
   The mean of $Y$ is immediate by the fact that $Y$ is a weighted-average of $Y_{\rho}$ and $Y_{\rho'}$, both of mean $\beta$.
   For the variance, direct computation yields:
   \begin{equation*}
       \var[Y]=\frac{\rho^2/(2\rho) + \rho'^2/(2\rho')}{(\rho_1 + \rho')^2} = \frac{1}{2(\rho + \rho')}\,.\end{equation*}
   As the sum of two Gaussians is itself Gaussian, we are done.
\end{proof}

The essence of what is being claimed is that a Gaussian of variance $\frac{1}{2\rho}$ and another Gaussian of variance $\frac{1}{2\rho'}$ with the same mean can be combined into a new Gaussian with variance $\frac{1}{2(\rho + \rho')}$ and the same mean.
Inspired by this, we can repeatedly invoke the lemma for the following result.
\begin{lemma}\label{lemma:gaussian_seq}
    Given $0 < \rho_1 < \dots < \rho_m$ and $\beta\in\R$ define $Y_{\rho_1}, \dots, Y_{\rho_m}\in\R$ where $Y_{\rho_1} \sim \NN(\beta,\frac{1}{\rho_1})$, and for $k > 1$:
    \begin{equation*}
    Y_{\rho_{k+1}} = \tfrac{\rho_{k}}{\rho_{k+1}} Y_{\rho_k} + \tfrac{\rho_{k+1} - \rho_{k}}{\rho_{k+1}} \cdot \NN\bigg(\beta, \frac{1}{2(\rho_{k+1} - \rho_{k})}\bigg)\,.
    \end{equation*}
    Then for any $i\in[m] : Y_{\rho_i}\sim\NN(\beta, \frac{1}{2\rho_i})$ and for any ${j\in[m]} : \mathrm{Cov}(Y_{\rho_i}, Y_{\rho_j}) = \frac{1}{2\max(\rho_i, \rho_j)}$.
\end{lemma}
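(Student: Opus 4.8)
The plan is to establish the two assertions separately, both ultimately resting on \Cref{lemma:inverse-variance-weighting-is-nice}. For the marginal claim I would induct on $i$, showing $Y_{\rho_i}\sim\NN(\beta,\tfrac{1}{2\rho_i})$. The base case $i=1$ is the definition of $Y_{\rho_1}$ (reading its variance as $\tfrac{1}{2\rho_1}$, consistent with the stated conclusion). For the step, assume $Y_{\rho_k}\sim\NN(\beta,\tfrac{1}{2\rho_k})$ and write $Z_{k+1}\sim\NN(\beta,\tfrac{1}{2(\rho_{k+1}-\rho_k)})$ for the fresh noise used at step $k+1$, drawn independently of $Y_{\rho_k}$. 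The defining recursion is then exactly the inverse-variance-weighting combination of \Cref{lemma:inverse-variance-weighting-is-nice} with the substitution $\rho\mapsto\rho_k$, $\rho'\mapsto\rho_{k+1}-\rho_k$: the weights $\tfrac{\rho_k}{\rho_{k+1}}$ and $\tfrac{\rho_{k+1}-\rho_k}{\rho_{k+1}}$ are precisely $\tfrac{\rho}{\rho+\rho'}$ and $\tfrac{\rho'}{\rho+\rho'}$ since $\rho+\rho'=\rho_{k+1}$. Hence $Y_{\rho_{k+1}}\sim\NN(\beta,\tfrac{1}{2\rho_{k+1}})$, closing the induction.

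For the covariance claim, fix $i\le j$ (the case $i>j$ follows by symmetry of covariance, and $i=j$ is the variance just computed), so $\max(\rho_i,\rho_j)=\rho_j$. The key step is to unroll the recursion from index $j$ back to index $i$. Writing $Z_k$ for the fresh noise used in forming $Y_{\rho_k}$ from $Y_{\rho_{k-1}}$, a short induction on $j$ (repeatedly substituting the recursion and observing that the coefficient of $Y_{\rho_i}$ telescopes) gives the closed form
\[
Y_{\rho_j} \;=\; \frac{\rho_i}{\rho_j}\,Y_{\rho_i} \;+\; \sum_{k=i+1}^{j}\frac{\rho_k-\rho_{k-1}}{\rho_j}\,Z_k\,,
\]
where each $Z_k$ with $k>i$ is independent of $Y_{\rho_i}$ by construction. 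Taking covariance with $Y_{\rho_i}$, every term of the sum drops out, leaving $\mathrm{Cov}(Y_{\rho_i},Y_{\rho_j})=\tfrac{\rho_i}{\rho_j}\var[Y_{\rho_i}]=\tfrac{\rho_i}{\rho_j}\cdot\tfrac{1}{2\rho_i}=\tfrac{1}{2\rho_j}$. (As a sanity check, computing $\var[Y_{\rho_j}]$ from the same closed form and using $\sum_{k=i+1}^j(\rho_k-\rho_{k-1})=\rho_j-\rho_i$ re-derives the marginal variance $\tfrac{1}{2\rho_j}$.)

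I do not expect a genuine obstacle: the argument is essentially bookkeeping. The one point needing care is the independence structure — one must state explicitly that each $Z_{k+1}$ is sampled independently of $Y_{\rho_k}$ (hence of every $Y_{\rho_\ell}, Z_\ell$ with $\ell\le k$), since both the invocation of \Cref{lemma:inverse-variance-weighting-is-nice} in the marginal step and the vanishing of the cross terms in the covariance step depend on it. A secondary minor point is the apparent discrepancy in the stated base case ($\tfrac{1}{\rho_1}$ versus $\tfrac{1}{2\rho_1}$), which I would read as $\tfrac{1}{2\rho_1}$ so that the claimed marginal holds uniformly for all $i$, including $i=1$.
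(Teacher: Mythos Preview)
Your proposal is correct and follows essentially the same approach as the paper: induction on $k$ invoking \Cref{lemma:inverse-variance-weighting-is-nice} for the marginals, and unrolling the recursion to isolate the $\tfrac{\rho_i}{\rho_j}Y_{\rho_i}$ term (discarding independent fresh noise) for the covariance. Your version is simply more explicit about the telescoping closed form and the independence assumptions, and you correctly flag the $\tfrac{1}{\rho_1}$ vs.\ $\tfrac{1}{2\rho_1}$ typo in the base case.
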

\begin{proof}
   We first show the distribution of $Y_{\rho_k}$ by induction on~$k$.
   Assume that $Y_{\rho_k}\sim\NN(\beta, \frac{1}{\rho_k})$, which is true for the base case of $k=1$.
   Note that the inductive step follows immediately from invoking~\cref{lemma:inverse-variance-weighting-is-nice}.
   For the covariance, note that we can expand the expressions inside the covariance and \enquote{throw away} the independent noise added in each recurrence.
   Assuming ${i \leq j}$ we get ${\mathrm{Cov}(Y_{\rho_i}, Y_{\rho_j}) = \mathrm{Cov}(Y_{\rho_i}, \frac{\rho_i}{\rho_j}Y_{\rho_i}) = \frac{\rho_i}{\rho_j}\var[Y_{\rho_i}] = \frac{1}{2\rho_j}}$, implying the stated covariance.
\end{proof}

\paragraph{Releases in arbitrary order}
The Gaussian sequence in \Cref{lemma:gaussian_seq} will be the basis for our lossless multiple release version of the Gaussian mechanism.
What the lemma does not address is generating the sequence in arbitrary order: 
Statically, given the full sequence $(\rho_k)_{k\in[m]}$, we can generate the Gaussians, but what if we receive them one-by-one and in arbitrary order?
We will address this next.
\begin{lemma}\label{lemma:gaussian_seq_ooo}
    For $\rho_\infty\in\R_+$ (possibly $\rho_\infty=\infty$) and $\beta\in\R$, let $M = \{(0, \infty), (\rho_\infty, Y_{\rho_\infty})\}$ where $Y_{\rho_\infty}\in\NN(\beta, \frac{1}{2\rho_\infty})$.
    Consider a finite subset $S\subset (0, \rho_\infty)$ and the following process that runs for $\lvert S \rvert$ iterations:
    \begin{enumerate}
        \item Pick an arbitrary $\rho\in S$ and delete it from $S$;
        \item Let $(\rho_l, Y_{\rho_l}), (\rho_r, Y_{\rho_r})\in M$ where $\rho\in(\rho_l, \rho_r)$ and $\rho_r - \rho_l$ is minimal;
        \item Sample $Z\sim \NN\bigg(0, \tfrac{(1-\rho_l / \rho)(1 / \rho - 1 / \rho_r)}{2(1-\rho_l/\rho_r)}\bigg)$, let 
        \begin{align*}
            Y_\rho & =  \tfrac{1-\rho_l / \rho}{1 - \rho_l / \rho_r} Y_{\rho_r} + \tfrac{\rho_l/\rho - \rho_l/\rho_r}{1 - \rho_l / \rho_r}Y_{\rho_l} + Z,
        \end{align*}
        and add $(\rho, Y_\rho)$ to $M$.
    \end{enumerate}
    Then the sequence of random variables generated by the process has the same distribution as described in \Cref{lemma:gaussian_seq}.
\end{lemma}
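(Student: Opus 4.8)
The plan is to reduce the claim to a computation of first and second moments. First I would note that the whole collection $\{Y_\rho\}$ produced by the process is jointly Gaussian: $Y_{\rho_\infty}$ is Gaussian (a degenerate one, the constant $\beta$, when $\rho_\infty=\infty$), and in each iteration $Y_\rho$ is an affine combination of two previously generated variables plus an independent Gaussian $Z$, so joint Gaussianity is preserved by induction on the iterations; the same holds for the collection of \Cref{lemma:gaussian_seq}. Since a multivariate Gaussian is determined by its mean vector and covariance matrix, it suffices to prove that the process maintains the invariant that for every pair $(\rho,Y_\rho),(\rho',Y_{\rho'})\in M$ with finite second coordinate we have $\EX[Y_\rho]=\beta$, $\var[Y_\rho]=\tfrac{1}{2\rho}$, and $\mathrm{Cov}(Y_\rho,Y_{\rho'})=\tfrac{1}{2\max(\rho,\rho')}$ --- these are precisely the moments asserted in \Cref{lemma:gaussian_seq} (reading $\tfrac{1}{2\rho_\infty}=0$ when $\rho_\infty=\infty$). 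The invariant holds at initialization, since $M$ then contains only the sentinel $(0,\infty)$ and $(\rho_\infty,Y_{\rho_\infty})$.

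For the inductive step, consider the iteration inserting $\rho$ with neighbours $\rho_l<\rho<\rho_r$ (possibly $\rho_l=0$, in which case there is no variable $Y_{\rho_l}$, or $\rho_r=\rho_\infty$). Write $w_r=\tfrac{1-\rho_l/\rho}{1-\rho_l/\rho_r}$ and $w_l=\tfrac{\rho_l/\rho-\rho_l/\rho_r}{1-\rho_l/\rho_r}$ for the two weights. A one-line check gives $w_r+w_l=1$, hence $\EX[Y_\rho]=w_r\beta+w_l\beta+0=\beta$ (and $w_l=0$ when $\rho_l=0$, so the absent $Y_{\rho_l}$ is harmless). Two structural facts then drive the rest. (i) The sampled $Z$ is independent of everything currently in $M$, so for any third element $(\rho',Y_{\rho'})\in M$ one has $\mathrm{Cov}(Y_\rho,Y_{\rho'})=w_r\,\mathrm{Cov}(Y_{\rho_r},Y_{\rho'})+w_l\,\mathrm{Cov}(Y_{\rho_l},Y_{\rho'})$, which is evaluated by the inductive hypothesis. (ii) Since $\rho_l,\rho_r$ are adjacent in $M$, the inductive hypothesis gives $\mathrm{Cov}(Y_{\rho_l},Y_{\rho_r})=\tfrac{1}{2\rho_r}=\var[Y_{\rho_r}]$, so $Y_{\rho_l}-Y_{\rho_r}$ is uncorrelated with $Y_{\rho_r}$ and, by joint Gaussianity, independent of it; writing $Y_{\rho_l}=Y_{\rho_r}+W$ with $W\perp Y_{\rho_r}$, $\EX[W]=0$, $\var[W]=\tfrac{1}{2\rho_l}-\tfrac{1}{2\rho_r}$, we get $Y_\rho=Y_{\rho_r}+w_lW+Z$ as a sum of three independent terms.

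It then remains to verify the variance and covariances, which is routine algebra; it is cleanest to substitute $a=1/\rho_l,\ b=1/\rho,\ c=1/\rho_r$ (so $a>b>c\ge 0$), giving $w_r=\tfrac{a-b}{a-c}$, $w_l=\tfrac{b-c}{a-c}$, and $\var[Z]=\tfrac{(a-b)(b-c)}{2(a-c)}$. For the variance, $\var[Y_\rho]=\var[Y_{\rho_r}]+w_l^2\var[W]+\var[Z]=\tfrac{c}{2}+\tfrac{(b-c)^2}{2(a-c)}+\tfrac{(a-b)(b-c)}{2(a-c)}=\tfrac{c}{2}+\tfrac{(b-c)(a-c)}{2(a-c)}=\tfrac{b}{2}=\tfrac{1}{2\rho}$. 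For the covariances, any $\rho'\in M$ distinct from $\rho_l,\rho_r$ satisfies $\rho'\le\rho_l$ or $\rho'\ge\rho_r$ (nothing lies strictly between the neighbours): in the first case fact (i) and the inductive hypothesis give $\mathrm{Cov}(Y_\rho,Y_{\rho'})=w_r\tfrac{1}{2\rho_r}+w_l\tfrac{1}{2\rho_l}=\tfrac{(a-b)c+(b-c)a}{2(a-c)}=\tfrac{b}{2}=\tfrac{1}{2\max(\rho,\rho')}$, and in the second case $\mathrm{Cov}(Y_\rho,Y_{\rho'})=(w_r+w_l)\tfrac{1}{2\rho'}=\tfrac{1}{2\max(\rho,\rho')}$; the boundary sub-cases $\rho'=\rho_l$ and $\rho'=\rho_r$ are identical upon reading $\mathrm{Cov}(Y_{\rho_l},Y_{\rho_l})$ as $\var[Y_{\rho_l}]$, etc. The degenerate endpoints $\rho_l=0$ (then $w_l=0$, $Y_\rho=Y_{\rho_r}+Z$, recovering the original gradual-release step) and $\rho_r=\infty$ (then $c=0$, $Y_{\rho_r}\equiv\beta$) are checked by the same formulas. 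This re-establishes the invariant after the insertion, and hence the lemma. I expect the main obstacle to be not the algebra --- which collapses thanks to the decomposition $Y_\rho=Y_{\rho_r}+w_lW+Z$ --- but setting up the right invariant (tracking the full covariance structure, including observation (ii)) and handling the sentinel entry and the two degenerate endpoints uniformly.
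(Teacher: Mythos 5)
Your proposal is correct, and while it follows the same overall skeleton as the paper's proof (induction over insertions, verifying the mean/variance/covariance invariant of \Cref{lemma:gaussian_seq}, with the endpoint cases $\rho_l=0$ and $\rho_r=\rho_\infty$ handled via the sentinel), the key computational step is genuinely different and, in my view, better. The paper expands $\var\bigl[w_rY_{\rho_r}+w_lY_{\rho_l}+Z\bigr]$ directly, which requires the cross term $2w_rw_l\,\mathrm{Cov}(Y_{\rho_l},Y_{\rho_r})=2w_rw_l\cdot\tfrac{1}{2\rho_r}$; as written, the paper's Case~3 display omits this term, and the algebraic identity it invokes, $(a-b)(b-c)(a-c)+c(a-b)^2+a(b-c)^2=b(a-c)^2$, is in fact false for $c\neq 0$ (try $a=3$, $b=2$, $c=1$: the left side is $6$, the right side is $8$); the correct identity carries the extra summand $2c(a-b)(b-c)$, after which the conclusion $\var[Y_\rho]=\tfrac{1}{2\rho}$ does hold. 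Your orthogonal decomposition $Y_{\rho_l}=Y_{\rho_r}+W$ with $W\perp Y_{\rho_r}$ — justified exactly by the inductive covariance hypothesis $\mathrm{Cov}(Y_{\rho_l},Y_{\rho_r})=\var[Y_{\rho_r}]$ — sidesteps this entirely by writing $Y_\rho=Y_{\rho_r}+w_lW+Z$ as a sum of three independent terms, so the variance is a clean sum of three variances and the correlation bookkeeping disappears. You also make explicit two points the paper leaves implicit: that all variables remain jointly Gaussian (so that matching first and second moments suffices to conclude equality of joint distributions), and that the substitution $a=1/\rho_l$, $b=1/\rho$, $c=1/\rho_r$ handles the degenerate endpoints uniformly rather than as separate cases. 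The one thing the paper's case analysis makes slightly more visible is the special treatment of $\rho_\infty=\infty$, where $Y_{\rho_\infty}\equiv\beta$ is degenerate, but your reading $c=0$, $\var[Y_{\rho_r}]=0$ covers it.
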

\begin{proof}[Proof sketch.]
    The argument is inductive.
    Under the hypothesis that all values generated up to the given point have the distribution described by \Cref{lemma:gaussian_seq}, we argue that the newly generated value does too.
    The argument considers the four different cases for $\rho_l, \rho_r$, e.g., $\rho_l = 0, \rho_r \neq \rho_\infty$.
As the proof involves tedious computation we refer the reader to \Cref{appendix:omitted-proofs} for the formal proof.
\end{proof}

\paragraph{Formalizing lossless multiple release}
\Cref{lemma:gaussian_seq_ooo} will constitute the basis for our implementation of lossless multiple release, but we have yet to formally define this notion.
We do so next.
\begin{definition}[Lossless multiple release]\label{def:lossless}
    Let ${\mathcal{M}_\rho : \mathcal{X} \to \mathcal{Y}}$ be a family of mechanisms on a domain $\mathcal{X}$, indexed by a privacy parameter \(\rho\in \R_+\).
    We say that $\textsc{M}: \mathcal{X} \times \R_+ \to \mathcal{Y}$ implements $\mathcal{M}_\rho$ with \emph{lossless multiple release} if for every $x\in \mathcal{X}$ it satisfies:
    \begin{enumerate}
        \item \(\forall \rho\): \(\textsc{M}(x,\rho)\) and \(\mathcal{M}_{\rho}(x)\) are identically distributed.
        \item For every finite subset $S \subset \R_+$, processed in arbitrary order by $\MMM$, and $y\in\mathcal{Y}$, conditioned on $\textsc{M}(\X,\max(S)) = y$ the joint distribution of $(\textsc{M}(\X,\rho))_{\rho \in S}$ is uniquely determined by $y$ and $S$.
    \end{enumerate}
\end{definition}
Functionally, a mechanism meets the definition if its outputs can be correlated such that for any set of outputs, their joint distribution can be viewed as (randomized) post-processing of the least private release.
An implementation necessarily has to store information about releases that have been made, i.e., the sequence of inputs to $\MMM$ and the corresponding outputs, to fulfill the requirement that releases for different privacy parameters are correlated. 
When $\MMM$ only supports outputting releases for a sequence of \emph{increasing} privacy parameters, we call it \emph{lossless gradual release}.

Having stated the definition for lossless multiple release, consider \Cref{alg:gaussian-multi-release}.
It implements \Cref{lemma:gaussian_seq_ooo}, and the idea is visualized in \cref{fig:the-idea}.

\begin{algorithm}[t]\caption{$\operatorname{Gaussian Multiple Release}$}
\label{alg:gaussian-multi-release}
\begin{algorithmic}[1]
\PARAMETERS $\ell_2$ sensitivity $\Delta_2$
\INPUTS Set of releases $M$, privacy parameter $\rho$ 
\STATE ~~~Find ${(\rho_k, Y_{\rho_k}), (\rho_{k+1}, Y_{\rho_{k+1}})\in M}$ such that 
\STATE ~~~~~~$\rho\in [\rho_k, \rho_{k+1})$ and $\forall (\rho', \cdot)\in M: \rho' \notin (\rho_k, \rho_{k+1})$\label{line:closest-rhos}
\STATE ~~~Sample $Z_\rho \sim\NN(0, \Delta_2^2 \cdot \frac{(1-\rho_k/\rho)\cdot (1/\rho-1/\rho_{k+1})}{2(1 - \rho_k/\rho_{k+1})})$\label{line:sample-new}
\STATE~~~Let $Y_\rho := Z_\rho + \frac{(1 -\rho_k/\rho)Y_{\rho_{k+1}} + (\rho_k/\rho - \rho_k/\rho_{k+1}) Y_{\rho_k}}{1 - \rho_k / \rho_{k+1}}$ \label{line:estimate-new}
\STATE ~~~Add $M = M \cup \{(\rho, Y_\rho)\}$\label{line:store-new}
\STATE ~~~\textbf{Return} $Y_\rho$
\end{algorithmic}
\end{algorithm}
\begin{corollary}\label{cor:gm_lossless}
    With $M$ initialized as $M=\{(0, \infty), (\infty, f(\X))\}$, \Cref{alg:gaussian-multi-release} implements the Gaussian mechanism with lossless multiple release.
\end{corollary}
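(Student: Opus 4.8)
The plan is to verify the two requirements of \Cref{def:lossless} for $\textsc{M}$ defined by \Cref{alg:gaussian-multi-release} (started from $M=\{(0,\infty),(\infty,f(\X))\}$) against the family $\mathcal{M}_\rho(x)=f(x)+\NN(0,\Delta_2^2/(2\rho))$, i.e.\ the $\rho$-zCDP Gaussian mechanism. The first step is to recognize that \Cref{alg:gaussian-multi-release} is exactly the process of \Cref{lemma:gaussian_seq_ooo}, instantiated with $\beta=f(\X)$, $\rho_\infty=\infty$ (so the initial element $(\infty,f(\X))$ of $M$ plays the role of $(\rho_\infty,Y_{\rho_\infty})$, with $Y_{\rho_\infty}=f(\X)$ deterministically), and every Gaussian variance multiplied by $\Delta_2^2$. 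Concretely, lines~\ref{line:closest-rhos} pick the adjacent pair $(\rho_l,Y_{\rho_l}),(\rho_r,Y_{\rho_r})$ of step~2; line~\ref{line:sample-new} is the draw of $Z$ in step~3; line~\ref{line:estimate-new} is the affine combination defining $Y_\rho$; and line~\ref{line:store-new} appends $(\rho,Y_\rho)$ to $M$. Running with variances scaled by $\Delta_2^2$ is the same as running the unscaled process of \Cref{lemma:gaussian_seq_ooo} on $f(\X)/\Delta_2$ and multiplying every output by $\Delta_2$; since all recurrences are linear this rescaling commutes through, so \Cref{lemma:gaussian_seq_ooo} applies verbatim after the substitution.

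Consequently, for any finite $S=\{\rho_1<\dots<\rho_m\}$ processed in arbitrary order, the generated vector $(Y_{\rho_i})_{i\in[m]}$ has exactly the joint law of \Cref{lemma:gaussian_seq} scaled by $\Delta_2^2$: it is jointly Gaussian with marginals $Y_{\rho_i}\sim\NN(f(\X),\Delta_2^2/(2\rho_i))$ and $\mathrm{Cov}(Y_{\rho_i},Y_{\rho_j})=\Delta_2^2/(2\max(\rho_i,\rho_j))$. The marginal statement says precisely that $\textsc{M}(\X,\rho_i)$ is distributed as $\mathcal{M}_{\rho_i}(\X)$, which is condition~1 of \Cref{def:lossless}; note in particular that the order of processing is irrelevant, since \Cref{lemma:gaussian_seq_ooo} fixes the joint law regardless of order.

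For condition~2, set $\rho^\star=\max(S)$ and consider $D=(Y_{\rho_i}-Y_{\rho^\star})_{i\in[m]}$. The covariance formula gives, for every $i$, $\mathrm{Cov}(Y_{\rho_i}-Y_{\rho^\star},Y_{\rho^\star})=\Delta_2^2/(2\rho^\star)-\Delta_2^2/(2\rho^\star)=0$, and since $(D,Y_{\rho^\star})$ is jointly Gaussian, $D$ is independent of $Y_{\rho^\star}$. Moreover $D$ has mean zero and $\mathrm{Cov}(Y_{\rho_i}-Y_{\rho^\star},Y_{\rho_j}-Y_{\rho^\star})=\Delta_2^2/(2\max(\rho_i,\rho_j))-\Delta_2^2/(2\rho^\star)$, which depends only on $S$ and the fixed parameter $\Delta_2$ --- crucially not on the data-dependent quantity $f(\X)$. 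Hence, conditioned on $\textsc{M}(\X,\rho^\star)=Y_{\rho^\star}=y$, we have $(\textsc{M}(\X,\rho_i))_{i\in[m]}=y\cdot\mathbf{1}+D$, a fixed deterministic shift of a centered Gaussian whose covariance is determined by $S$; this conditional law is uniquely determined by $y$ and $S$, establishing condition~2.

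I expect no real obstacle, as the heavy lifting is already in \Cref{lemma:gaussian_seq_ooo}; the two points needing care are (i) spelling out that the $\Delta_2^2$ rescaling preserves the conclusion of \Cref{lemma:gaussian_seq_ooo}, which follows from linearity of the recurrences, and (ii) in condition~2, making explicit that the conditional distribution does not depend on $f(\X)$ --- this is exactly what the identity $\mathrm{Cov}(Y_{\rho_i},Y_{\rho^\star})=\var[Y_{\rho^\star}]$ buys, turning the least private release $Y_{\rho^\star}$ into a sufficient statistic for all the others.
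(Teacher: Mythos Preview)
Your proposal is correct and follows essentially the same approach as the paper: identify \Cref{alg:gaussian-multi-release} with the process of \Cref{lemma:gaussian_seq_ooo}, read off the marginals for property~1, and use the covariance identity $\mathrm{Cov}(Y_{\rho_i},Y_{\rho^\star})=\var[Y_{\rho^\star}]$ to exhibit each release as the least-private one plus independent zero-mean Gaussian noise for property~2. You are in fact more careful than the paper on two points it leaves implicit --- the $\Delta_2^2$ rescaling (the paper's lemmas take $\Delta_2=1$) and the explicit verification that the conditional law does not depend on $f(\X)$ --- both of which you handle correctly.
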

\begin{proof}
Let $\{ Y_{\rho} \}_{\rho \in S}$ be the set of outputs produced by \Cref{alg:gaussian-multi-release} on receiving the set $S$ of privacy parameters in arbitrary order.
Observe that the algorithm is implementing the (adaptive) sampling in \Cref{lemma:gaussian_seq_ooo}, and so it produces outputs with the same distribution as \Cref{lemma:gaussian_seq}.
    Property 1 of \Cref{def:lossless} follows immediately from observing that $Y_\rho \sim \NN(f(x), \frac{1}{2\rho})$.
    For property 2, note that every release in \Cref{lemma:gaussian_seq} can be viewed as randomized post-processing of the least private release.
    One way to see this is to note that for any two releases $Y_{\rho}$ and $Y_{\rho'}$ where $\rho < \rho'$, we have that $\mathrm{Cov}(Y_{\rho}, Y_{\rho'}) = \mathrm{Var}[Y_{\rho'}]$, implying that $Y_{\rho} = Y_{\rho'} + Z$ for aptly scaled zero-mean Gaussian noise $Z$.
\end{proof}

 \section{Extending to Independent Additive Noise}\label{sec:lossless-multiple-release}
We will next show that lossless multiple release holds for a larger class of mechanisms.
To proceed we introduce the notion of an \emph{independent additive noise mechanism}.
\begin{definition}[Independent Additive Noise Mechanism]\label{def:additive-noise}
We define an \emph{independent additive noise} \emph{mechanism} $\mathcal{A}_{f, \rho} : \XX \to \R^d$ as a mechanism of the form
\begin{equation*}
    \cA_{f, \rho}(\X) = f(\X) + Z\,,\quad Z \sim\mathcal{D}(\rho)\,,
\end{equation*}
where $\mathcal{D}(\rho)$ is a probability distribution parameterized in $\rho$, that draws a $d$-dimensional vector with i.i.d.\ samples.
\end{definition}

It turns out that \emph{any} \IAN mechanism $\mathcal{A}_{f,\rho}$ satisfying Definition~\ref{def:decomposable} supports lossless multiple release.
\begin{lemma}\label{lemma:additive_noise_means_lossless}
    Any \IAN mechanism $\mathcal{A}_{f,\rho}$ with noise distribution $\cD(\rho)$ satisfying convolution preorder can be implemented with lossless multiple release.
\end{lemma}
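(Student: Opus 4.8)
\textbf{Approach.} The plan is to mimic the Gaussian construction but replace the specific arithmetic of inverse-variance weighting with the structure guaranteed by the convolution preorder. The crux is this: in the Gaussian case we exploited that $\mathrm{Cov}(Y_\rho, Y_{\rho'}) = \mathrm{Var}[Y_{\rho'}]$ for $\rho < \rho'$, which let us write $Y_\rho = Y_{\rho'} + Z$ with $Z$ fresh noise. For a general \IAN mechanism with noise family $\cD(\rho)$, the convolution preorder gives exactly the abstract analogue of this: for $\rho < \rho'$ there is a distribution $\CC(\rho', \rho)$ with $\cD(\rho') * \CC(\rho', \rho) = \cD(\rho)$. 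So the right invariant to maintain is that for every pair of stored releases $Y_\rho, Y_{\rho'}$ with $\rho < \rho'$, we have $Y_\rho = Y_{\rho'} + W$ where $W \sim \CC(\rho', \rho)$ is independent of $Y_{\rho'}$. Since each coordinate is handled i.i.d., it suffices to argue in one dimension and take products.

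\textbf{Key steps.} First, I would define the multiple-release implementation $\MMM$: initialize the stored set $M = \{(0, \perp), (\infty, f(\X))\}$ (using the convention $\cD(\infty)$ is the point mass at $0$, $\cD(0)$ infinitely noisy, so these endpoints are formal sentinels). On input $\rho$, locate the adjacent stored parameters $\rho_l < \rho < \rho_r$ with $(\rho_r - \rho_l)$ minimal, as in \Cref{alg:gaussian-multi-release}. Second — and this is where I need a small additional observation — I must produce $Y_\rho$ from $Y_{\rho_l}, Y_{\rho_r}$ such that simultaneously $Y_{\rho_l} = Y_\rho + (\text{noise} \sim \CC(\rho, \rho_l))$ and $Y_\rho = Y_{\rho_r} + (\text{noise} \sim \CC(\rho_r, \rho))$. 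The consistency requirement is that the two fresh convolution steps compose: $\CC(\rho_r, \rho) * \CC(\rho, \rho_l) = \CC(\rho_r, \rho_l)$, and we already know $Y_{\rho_l} = Y_{\rho_r} + W$ with $W \sim \CC(\rho_r, \rho_l)$. So the task reduces to a \emph{noise-splitting} lemma: given a sample $W$ from $\CC(\rho_r, \rho_l)$, sample $W_2 \sim \CC(\rho, \rho_l)$ conditioned on being ``compatible'', i.e. draw from the conditional distribution of $(W_1, W_2)$ given $W_1 + W_2 = W$ under independent $W_1 \sim \CC(\rho_r, \rho)$, $W_2 \sim \CC(\rho, \rho_l)$, and set $Y_\rho = Y_{\rho_r} + W_1 = Y_{\rho_l} - W_2$. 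This backward-sampling of a summand given the sum is exactly what the Gaussian bridge was doing concretely. Third, I would prove by induction on the number of processed parameters that after each step, the pairwise relation ``$Y_\rho = Y_{\rho'} + \CC(\rho',\rho)$-noise, independent'' holds for \emph{all} pairs in $M$, not just adjacent ones — this uses that the relation is transitive via the composition identity for $\CC$, and that the newly inserted $Y_\rho$ sits correctly between its neighbors by construction, while its relation to non-neighbors follows by composing through a neighbor. Fourth, I would read off the two properties of \Cref{def:lossless}: property~1 is immediate since the telescoping from $f(\X)$ down to $\rho$ through the chain of convolutions reconstructs $\cD(\rho)$ exactly (so $Y_\rho \sim f(\X) + \cD(\rho) = \cA_{f,\rho}(\X)$); property~2 follows because every stored $Y_\rho$ equals $Y_{\max(S)} + W$ for $W \sim \CC(\max(S), \rho)$ independent of $Y_{\max(S)}$, hence the whole tuple $(Y_\rho)_{\rho \in S}$ is a randomized post-processing of $Y_{\max(S)}$ whose law depends only on $S$ and the observed value. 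Finally, the post-processing remark at the end of \Cref{th:meta-informal} follows since an invertible map $g$ applied to each coordinate lets us transport the entire correlated construction through $g$ and $g^{-1}$.

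\textbf{Main obstacle.} The genuinely delicate point is the noise-splitting / backward-sampling step: I need that the conditional law of a summand given the sum is well-defined and that plugging it in actually yields a sample with the correct \emph{unconditional} marginals $\CC(\rho_r,\rho)$ and $\CC(\rho,\rho_l)$ while remaining consistent with the already-committed value $Y_{\rho_l}$ (equivalently $Y_{\rho_r}$). This is a statement about reconstructing a joint distribution from a marginal plus conditionals — routine measure-theoretically (it is just the existence of regular conditional distributions, or on discrete/continuous supports an explicit formula), but it is the one place where ``$\cD(\rho)$ is an arbitrary distribution satisfying a convolution identity'' is used in an essential, non-algebraic way, and it is where the proof must be careful that the construction is well-defined regardless of the order in which parameters arrive. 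The transitivity-of-$\CC$ bookkeeping in the inductive step is conceptually easy but is the part most likely to need a clean invariant stated up front to avoid a case explosion like the one deferred to the appendix in the Gaussian proof.
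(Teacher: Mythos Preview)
Your proposal is correct and takes essentially the same approach as the paper: both establish the telescoping structure $Y_{\rho_k} = Y_{\rho_m} + \sum_{j=k}^{m-1} W_{\rho_{j+1},\rho_j}$ with independent bridging noise $W\sim\CC(\cdot,\cdot)$, read off properties~1 and~2 of \Cref{def:lossless} from it, and handle arbitrary-order insertion by conditionally sampling the new release given the existing ones---your ``noise-splitting'' step is exactly the paper's conditional sampling (made explicit in \eqref{eq:how-to-sample} and \Cref{appendix:sampling}). The only cosmetic difference is that the paper's proof first writes down the target joint distribution \eqref{eq:multiple-release} and then invokes abstract conditional sampling, whereas you build the algorithm constructively and carry an invariant; note your invariant should be the full Markov/telescoping structure rather than just the pairwise relations, and the sentinel $(\infty,f(\X))$ with $\cD(\infty)=\delta_0$ is not guaranteed by the hypotheses---the paper sidesteps this by working directly with $\max(S)$.
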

\begin{proof}
    The proof is for the one-dimensional case, but the same argument can be invoked for the multidimensional case as each coordinate has independent noise.
    We begin by proving that given $S=\{\rho_k : k\in[m]\}\subset \R^+$, where $\rho_1 < \dots < \rho_m$, it is possible to construct a set of releases $\{ Y_{\rho_k} : k\in[m]\}$ that satisfy \Cref{def:lossless}.
    By \Cref{def:decomposable}, it is possible to express each $Y_{\rho_k}$ as
    \begin{equation}\label{eq:multiple-release}
         Y_{\rho_k} = f(\X) + Z_{\rho_m} + \sum_{j=k}^{m-1} W_{\rho_{j+1}, \rho_j}\,,
    \end{equation}
    where $Z_{\rho_m}\sim\DD(\rho_m)$ and $W_{\rho_{j+1}, \rho_j}\sim\CC(\rho_{j+1}, \rho_j)$ are sampled independently.
    To prove that $Y_{\rho_k} \stackrel{d}{=} \mathcal{A}_{f, \rho_k}(\X)$, observe that $Y_{\rho_k}$ is distributed as $f(\X)$ plus a random variable drawn from $\DD(\rho_m) \ast \CC(\rho_m, \rho_{m-1}) \ast \dots \ast \CC(\rho_{k+1}, \rho_k) = \DD(\rho_k)$, as needed.
    For the second property in \Cref{def:lossless}, observe that conditioning on $Y_{\rho_m}=y$ we can express every other $Y_{\rho_k}$ for $k\in[m-1]$ as
    \begin{equation*}
        Y_{\rho_k} = y + \sum_{j=k}^{m-1} W_{\rho_{j+1}, \rho_j}\,.
    \end{equation*}
    As a result, the joint distribution on $(Y_{\rho_1}, \dots, Y_{\rho_m})_{| Y_{\rho_m}=y}$ is uniquely determined by $y$, proving the second property.

    We will now argue (inductively) that we can produce these releases adaptively.
    Let $S_1 \subset S_2 \dots \subset S_m = S$ be a sequence of subsets of $S$ where $|S_i| = i$.
    For the base case of $S_1$ we can make a single release using $\mathcal{A}_{f, \rho}$.
    For our inductive hypothesis, assume we have produced releases corresponding to all the privacy parameters in $S_n$.
    Now, for $S_{n+1}$, we re-label the privacy parameters such that $S_{n+1} = \{\rho_k\}_{k\in[n+1]}$ where $\rho_1 < \dots < \rho_{n+1}$.
    For the unique $\rho_i\in S_{n+1}\setminus S_n$, we can conditionally sample it from the joint distribution over $(Y_{\rho_k})_{k\in [n+1]}$, conditioned on the value of each release made in the previous round.
    By induction, it follows that we can \emph{adaptively} release $S$.
\end{proof}

\subsection{Sampling in Concrete Settings}
\Cref{lemma:additive_noise_means_lossless} proves the existence of a sampling procedure for lossless multiple release.
In \Cref{sec:gradual-release} we showed a sampling procedure in the particular case of Gaussian noise.
In \Cref{appendix:sampling} we show that the structure of \Cref{alg:gaussian-multi-release} holds for general \IAN{} mechanisms.
Namely, if the privacy parameters we support come from a bounded range $(\rho_0, \rho_\infty) \subset \mathbb{R}_+$, then the corresponding algorithm has a similar structure (see \Cref{alg:generic-sampling-simpler}).

More precisely, consider two neighboring releases $Y_{\rho_k}$ and $Y_{\rho_{k+1}}$ with noise parameters $\rho_k$ and $\rho_{k+1}$, respectively, and a new lossless release $Y_\rho$ with parameter ${\rho\in [\rho_{k}, \rho_{k+1}]}$.
We can use (\ref{eq:multiple-release}) on this set of $m+1$ releases and condition on the values of the $m$ previous releases $Y_{\rho_1},\dots,Y_{\rho_m}$.
Next, write $W_{\rho_{k+1},\rho_k} = W_1 + W_2$ where $W_1 = Y_{\rho}-Y_{\rho_{k+1}}$ and $W_2 = Y_{\rho_k}-Y_{\rho}$.
In~\Cref{appendix:sampling}, we show that sampling $Y_\rho$ can be reduced to the following task:
\begin{equation}\label{eq:how-to-sample}
    \text{Sample }\, W_1\, \text{ conditioned on }\ W_1 + W_2 = Y_{\rho_k} - Y_{\rho_{k+1}}.
\end{equation}
\paragraph{Example: Poisson mechanism}
Consider the \IAN mechanism using the Poisson distribution, $\text{Poi}(\lambda)$.
This mechanism has the property that noise is always a non-negative integer, making it a natural noise distribution for integer vectors in settings where negative noise is undesirable.
Appendix~\ref{sec:poisson} states some basic privacy properties of the Poisson mechanism.

The family of Poisson distributions parameterized by $\rho = 1/\lambda$ satisfies convolution preorder since for any $\lambda_1 > \lambda_2$, $\text{Poi}(\lambda_1) = \text{Poi}(\lambda_1 - \lambda_2) * \text{Poi}(\lambda_2)$.
To adaptively perform private lossless multiple release of a value $f(\X)$ using the Poisson mechanism and parameters ${\lambda_1 > \dots > \lambda_m}$ we notice that the ``bridging'' distribution in the $k$\textsuperscript{th} term of the sum in (\ref{eq:multiple-release}) has distribution $\text{Poi}(\lambda_k - \lambda_{k+1})$.
Note that without conditioning on $Y_{\rho_1},\dots,Y_{\rho_m}$, $W_1 \sim \text{Poi}(\lambda - \lambda_{k+1})$ and $W_2 \sim \text{Poi}(\lambda_k - \lambda)$.
By \Cref{lemma:conditional-poisson} in the appendix we have that the sampling in \eqref{eq:how-to-sample} reduces to $W_1 \sim \operatorname{Binomial}(N,p)$ for $N = Y_{\rho_k} - Y_{\rho_{k+1}}$ and $p = (\lambda - \lambda_{k+1})/(\lambda_k - \lambda_{k+1})$.

\paragraph{Example: Laplace Mechanism}
\citet{koufogiannis_2016} have already shown that the Laplace distribution parameterized by $\rho = 1/b$ satisfies convolution preorder for any scale parameters $b_1 > b_2$.
Let $\operatorname{LapBridge}(b_2, b_1)$ be the probability distribution that draws $0$ with probability $b_2^2/b_1^2$ and from $\lap(0, b_1)$ with the remaining probability.
Then $\lap(0, b_2) \ast \operatorname{LapBridge}(b_2, b_1) = \lap(0, b_1)$ exactly.
To implement lossless release via the sampling in \eqref{eq:how-to-sample}, it turns out that the conditional distribution is a mixture $W_1\sim\operatorname{LapBridge}(b, b_2)$ of three distributions:
With some probability $W_1$ is equal to either $0$ or $Y_{\rho_k}-Y_{\rho_{k+1}}$, and otherwise it is sampled from the convolution of two Laplace distributions. 
We also show that the related exponential distribution satisfies convolution preorder, see \cref{sec:laplace} for the full details.

\subsection{Lossless Multiple Release as a Blackbox}\label{sec:meta-theorem}
Inspired by Corollary 15 in \citet{koufogiannis_2016}, we provide a meta theorem for lossless multiple release based on \IAN{} mechanisms.
The central component is the following lemma, showing that the class of lossless multiple release mechanisms is closed under invertible post-processing.
\begin{lemma}\label{lemma:invertible_postp}
    Let $\MM_\rho : \XX \to \YY$ satisfy lossless multiple release, and let $H : \YY \to \YY'$ be an invertible function.
    Then $\MM_\rho' = H\circ \MM_\rho$ also satisfies lossless multiple release.
\end{lemma}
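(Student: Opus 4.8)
The plan is to build the implementation $\MMM'$ of $\MM_\rho'$ directly on top of the given implementation $\MMM$ of $\MM_\rho$: on input $(x,\rho)$, run $\MMM$ as a subroutine to obtain $Y_\rho = \MMM(x,\rho)$, updating $\MMM$'s stored set of releases exactly as before, and then output $H(Y_\rho)$. Since $H$ is a fixed deterministic map applied only to the returned value — it never touches the stored releases nor the order in which privacy parameters arrive — $\MMM'$ inherits from $\MMM$ the ability to process any finite $S\subset\R_+$ in arbitrary order. It then remains to verify the two conditions of \Cref{def:lossless}.

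For the first condition, note $\MMM'(x,\rho) = H(\MMM(x,\rho))$, and by hypothesis $\MMM(x,\rho)$ and $\MM_\rho(x)$ are identically distributed; applying the deterministic map $H$ to both preserves equality in distribution, so $\MMM'(x,\rho) \stackrel{d}{=} H(\MM_\rho(x)) = \MM_\rho'(x)$. For the second condition, fix a finite $S\subset\R_+$, write $\rho^\star = \max(S)$, and fix a target value $y'$ in the range of $H$. The crucial observation is that because $H$ is invertible, the event $\{\MMM'(\X,\rho^\star) = y'\}$ is literally the same event as $\{\MMM(\X,\rho^\star) = H^{-1}(y')\}$; set $y = H^{-1}(y')$. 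By the lossless property of $\MMM$, conditioned on $\MMM(\X,\rho^\star) = y$ the joint law of $(\MMM(\X,\rho))_{\rho\in S}$ is a distribution determined by $y$ and $S$ alone. Hence the joint law of $(\MMM'(\X,\rho))_{\rho\in S} = (H(\MMM(\X,\rho)))_{\rho\in S}$, conditioned on $\MMM'(\X,\rho^\star)=y'$, is the pushforward of that law under the coordinatewise application of $H$, so it too depends only on $y$ and $S$, i.e.\ only on $y' = H(y)$ and $S$. That is exactly condition~2.

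The step carrying all the weight — the only place where the hypothesis is used, and the main (if mild) obstacle — is the translation of a conditioning event for $\MMM'$ into a conditioning event for $\MMM$ on a single value. This requires injectivity of $H$: if $H$ were merely a measurable many-to-one map, conditioning on $\MMM'(\X,\rho^\star)=y'$ would correspond to a mixture over the fibre $H^{-1}(y')$, and the conditional law of the remaining releases need not be a function of $y'$ alone, breaking losslessness. Surjectivity of $H$ onto $\YY'$ ensures every admissible output $y'$ arises, so no case is vacuous. For full rigour one should also read "invertible function" as including measurability of $H$ and $H^{-1}$, so that the pushforward and the conditional laws above are well-defined; this is implicit in $H$ being a map between the relevant (measurable) spaces, and otherwise the argument is purely formal.
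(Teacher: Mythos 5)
Your proposal is correct and follows essentially the same route as the paper: define $\MMM' = H\circ\MMM$, observe that property~1 is immediate, and for property~2 use invertibility of $H$ to translate conditioning on $\MMM'(\X,\max(S))=y'$ into conditioning on $\MMM(\X,\max(S))=H^{-1}(y')$, then push forward the resulting determined joint law through $H$. The extra remarks on measurability and why mere surjectivity or many-to-one maps would fail are sound but beyond what the paper records.
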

\begin{proof}
    Let $\MMM : \XX \times \R_+ \to \YY$ be an implementation of $\MM_\rho$ satisfying lossless multiple release, and let $\MMM' = H\circ \MMM$, which we will argue implements lossless multiple release.
    The only property that does not trivially hold for $\MMM'$ is the second property of \Cref{def:lossless}.
    For a set $S\subset \R_+$, note that to condition on $\MMM'(\X,\max(S))=y$ is equivalent to conditioning on ${H^{-1}(\MMM'(\X, \max(S))) = \MMM(\X, \max(S)) = H^{-1}(y)\in\YY}$.
    We thus have that conditioning $\MMM'(\X, \max(S))=y$ implies that the joint distribution over $(\MMM(\X, \rho))_{\rho\in S}$ is fully determined, and consequently so is ${(H(\MMM(\X, \rho)))_{\rho\in S} = (\MMM'(\X, \rho))_{\rho\in S}}$, and we are done. 
\end{proof}
\begin{theorem}[Meta theorem]\label{thm:meta}
    Let $\mathcal{M}_\rho : \mathcal{X} \to \mathcal{Y}$ be a family of mechanisms on a domain $\mathcal{X}$, indexed by a privacy parameter \(\rho\in \R_+\).
    Furthermore, assume $\mathcal{M}_\rho$ can be decomposed as $\mathcal{M}_\rho = H\circ\mathcal{A}_{f, \rho}$ where
    \begin{itemize}
        \item $\mathcal{A}_{f,\rho} : \XX \to \R^d$ is an \IAN mechanism for releasing $f : \XX \to \R^d$ with privacy parameter $\rho$ and with noise distribution satisfying a convolution preorder (\Cref{def:decomposable});
        \item $H : \R^d \to \mathcal{Y}$ is an \emph{invertible} post-processing step;
    \end{itemize}
Then $\mathcal{M}_\rho$ can be implemented with \emph{lossless multiple release}.\end{theorem}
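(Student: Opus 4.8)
The plan is to derive \Cref{thm:meta} directly from the two lemmas just established, with essentially no new work: the hypotheses are tailor-made for exactly this. First I would apply \Cref{lemma:additive_noise_means_lossless} to $\mathcal{A}_{f,\rho}$. Since $\mathcal{A}_{f,\rho}$ is an \IAN{} mechanism whose noise distribution satisfies a convolution preorder (\Cref{def:decomposable}), that lemma furnishes an implementation $\AAA : \XX \times \R_+ \to \R^d$ of $\mathcal{A}_{f,\rho}$ satisfying both conditions of \Cref{def:lossless} --- concretely, $\AAA$ stores the set $M$ of past (parameter, release) pairs and, for each incoming $\rho$, conditionally samples the ``bridging'' noise implicit in the decomposition \eqref{eq:multiple-release}.

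Next I would invoke \Cref{lemma:invertible_postp} with $\YY := \R^d$ (the codomain of $\mathcal{A}_{f,\rho}$), $\YY' := \mathcal{Y}$, the implementation $\AAA$ playing the role of $\MMM$, and the given invertible $H : \R^d \to \mathcal{Y}$. The lemma then asserts that $\MMM := H \circ \AAA$ implements $\MM_\rho := H \circ \mathcal{A}_{f,\rho} = \mathcal{M}_\rho$ with lossless multiple release. Unwinding its proof for completeness: property~1 of \Cref{def:lossless} holds because $H(\AAA(x,\rho))$ is distributed as $H(\mathcal{A}_{f,\rho}(x)) = \mathcal{M}_\rho(x)$; property~2 holds because conditioning on $H(\AAA(\X,\max(S))) = y$ is the same event as conditioning on $\AAA(\X,\max(S)) = H^{-1}(y)$, which by property~2 for $\AAA$ pins down the joint law of $(\AAA(\X,\rho))_{\rho\in S}$, hence also of its deterministic image $(H(\AAA(\X,\rho)))_{\rho\in S} = (\MMM(\X,\rho))_{\rho\in S}$. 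Composing the two steps completes the argument.

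The only points requiring any care are bookkeeping rather than substance: that the codomain $\R^d$ of $\mathcal{A}_{f,\rho}$ coincides with the domain of $H$ so the composition is well defined, and that ``invertible'' is used only in the weak sense \Cref{lemma:invertible_postp} needs, namely that $H$ admits a left inverse on its image so that the conditioning events transfer in both directions. I do not anticipate a genuine obstacle: all of the technical content --- existence of the adaptive sampler (resting on \Cref{def:decomposable} via \eqref{eq:multiple-release}) and closure under invertible post-processing --- has already been isolated into the preceding lemmas, and this theorem is simply their composition, in the spirit of Corollary~15 of \citet{koufogiannis_2016}.
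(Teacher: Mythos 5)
Your proposal is correct and follows exactly the paper's own route: the paper proves \cref{thm:meta} by invoking \cref{lemma:additive_noise_means_lossless} to obtain a lossless multiple release implementation of $\mathcal{A}_{f,\rho}$ and then applying \cref{lemma:invertible_postp} to the invertible post-processing $H$. Your unwinding of the conditioning argument for property~2 matches the paper's proof of \cref{lemma:invertible_postp} as well.
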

\begin{proof}
    The theorem follows from invoking \Cref{lemma:additive_noise_means_lossless} together with \Cref{lemma:invertible_postp}.
\end{proof}

\paragraph{Supporting non-invertible post-processing}
To support non-invertible post-processing, we also introduce a weaker notion: \emph{weakly lossless multiple release}.
\vspace{2mm}
\begin{definition}[Weakly Lossless Multiple Release]\label{def:weakly-lossless}
   A mechanism  $\MM_\rho$ supports \emph{weakly lossless multiple release} if it can be written as  $\MM_\rho = H\circ\MM_{\rho}'$  where $\MM_\rho'$ supports lossless multiple release and $H$ is an arbitrary function (possibly chosen from some distribution).
\end{definition}
\vspace{2mm}
We define \emph{weakly lossless gradual release} analogously.
It follows that these classes of mechanisms are closed under all post-processing.
While this notion is indeed weaker, any algorithm $\MMM(\X, \rho)$ implementing weakly lossless multiple release for a $\rho$-private mechanism $\MM_\rho(\X)$, will have the property that the set of releases $(\MMM(\X, \rho))_{\rho\in S}$ are $\max(S)$-private, if $\rho$-privacy is closed under post-processing.
Examples of such privacy notions are $\varepsilon$-DP and $\rho$-zCDP.
We get the following immediate corollary to \Cref{thm:meta}.
\vspace{2mm}
\begin{corollary}\label{cor:noninv_postproc_meta}
    If the function $H$ in \Cref{thm:meta} is not invertible, then $\MM_\rho = H\circ A_{f,\rho}$ supports weakly lossless multiple release.
\end{corollary}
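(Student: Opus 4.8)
The plan is to observe that the conclusion is essentially immediate once the invertibility assumption on $H$ is dropped, so the "proof" amounts to unwinding two definitions. Recall the hypothesis of \Cref{thm:meta}: $\mathcal{M}_\rho = H\circ\mathcal{A}_{f,\rho}$, where $\mathcal{A}_{f,\rho} : \XX\to\R^d$ is an \IAN{} mechanism whose noise distribution satisfies a convolution preorder (\Cref{def:decomposable}), and now $H$ is an arbitrary (not necessarily invertible) post-processing map. The first step is to apply \Cref{lemma:additive_noise_means_lossless} to $\mathcal{A}_{f,\rho}$: viewed as its own family of mechanisms, $\MM_\rho' := \mathcal{A}_{f,\rho}$ admits an implementation with lossless multiple release.

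The second step is to match this against \Cref{def:weakly-lossless}. We have exhibited a decomposition $\MM_\rho = H\circ \MM_\rho'$ in which $\MM_\rho'$ supports lossless multiple release and $H$ is an arbitrary function; this is literally the defining condition for weakly lossless multiple release, so no further work is required. In particular, if $\MMM'$ is the lossless implementation of $\mathcal{A}_{f,\rho}$, then $\MMM := H\circ\MMM'$ is the desired weakly lossless implementation of $\MM_\rho$.

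There is no real obstacle here; the only point that warrants care is checking that the decomposition demanded by \Cref{def:weakly-lossless} genuinely permits the same, possibly highly non-injective (and possibly randomized) $H$ appearing in \Cref{thm:meta} — which it does, since that definition imposes no restriction on $H$ beyond being a function (optionally drawn from a distribution). It is worth adding one sentence of commentary, mirroring the discussion preceding the corollary: whenever $\rho$-privacy is preserved under post-processing (for instance $\varepsilon$-DP or $\rho$-zCDP), the resulting releases $(\MMM(\X,\rho))_{\rho\in S}$ are jointly $\max(S)$-private, because they are a post-processing of the lossless releases produced by $\MMM'$, which in turn are a post-processing of the single least-private release $\mathcal{A}_{f,\max(S)}(\X)$.
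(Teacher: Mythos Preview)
Your proposal is correct and matches the paper's approach exactly: the paper states the result as an ``immediate corollary'' with no separate proof, and your argument is precisely the unwinding of \Cref{def:weakly-lossless} together with \Cref{lemma:additive_noise_means_lossless} that makes it immediate.
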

\vspace{2mm}
Weakly lossless multiple release will play a role in the next section, where we consider non-invertible post-processing such as truncation.

\vspace{2mm}
\section{Applications}\label{sec:applications}
In this section, we describe two applications supported by our framework for lossless multiple release: 
\begin{itemize}
\item Factorization mechanisms \cite{LiMHMR15}, where we want to privately release a linear query $A\X$, and
\item Sparse Gaussian histograms, also known as stability histograms \cite{wilkins2024,googlelibthreshold}.
\end{itemize}
We will make use of both \emph{lossless} multiple release (\Cref{def:lossless}), and its weaker variant (\Cref{def:weakly-lossless}).
This will be necessary since, e.g., the truncation used for sparse Gaussian histograms is not an invertible function, and so not covered by~\Cref{thm:meta}.
\vspace{2mm}

Because the noise generation for histograms on large domains is expensive, we give a dimension-independent algorithm that works in the gradual release setting.
Throughout, we state our results as post-processings of the Gaussian mechanism, but analogous results hold for any other \IAN{} mechanism meeting \Cref{def:decomposable}.
\pagebreak

\subsection{Lossless Multiple Release Factorization Mechanism}\label{sec:factorization-mechanism}

Let $A$ be a query matrix and fix a public factorization ${A = LR}$. Denote the \emph{factorization mechanism}~\cite{LiMHMR15} as ${\cF_\rho(\X) = A \X + L z = L (R \X + Z)}$ on some private dataset $\X$ where 
$Z$ is a vector drawn from a distribution $\cD(\rho)$ satisfying \cref{def:decomposable} parameterized by $\rho$, typically Gaussian or Laplace noise.
\begin{lemma}\label{lem:lossless-fact}
$\mathcal{F}_\rho$ can be implemented with weakly lossless multiple release, and lossless multiple release if $L$ has a left-inverse.
\end{lemma}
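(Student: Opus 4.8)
The plan is to exhibit $\cF_\rho$ as a post-processing of an \IAN{} mechanism and then invoke \Cref{thm:meta} and \Cref{cor:noninv_postproc_meta}. Write $\cF_\rho = H \circ \cA_{f,\rho}$, where $f(\X) = R\X$, the inner mechanism is $\cA_{f,\rho}(\X) = R\X + Z$ with $Z \sim \cD(\rho)$ sampling i.i.d.\ coordinates, and $H$ is the linear map $v \mapsto Lv$. This decomposition is exact, since $\cF_\rho(\X) = A\X + LZ = L(R\X + Z) = H(\cA_{f,\rho}(\X))$. By hypothesis $\cD(\rho)$ satisfies the convolution preorder, so $\cA_{f,\rho}$ is an \IAN{} mechanism whose noise distribution meets \Cref{def:decomposable}; \Cref{lemma:additive_noise_means_lossless} then gives that $\cA_{f,\rho}$ admits lossless multiple release.

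For the general statement (weakly lossless multiple release) I would simply observe that $H$ is an arbitrary function, so \Cref{def:weakly-lossless} applies with $\MM_\rho' = \cA_{f,\rho}$ and $H$ as above; this is exactly \Cref{cor:noninv_postproc_meta}, and nothing further is needed.

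For the second part, assume $L$ has a left-inverse $L^{+}$, so that $L^{+}L = I$ and $H$ is injective. Although \Cref{thm:meta} is stated for an invertible $H$, inspection of the proof of \Cref{lemma:invertible_postp} shows that it only applies $H^{-1}$ to realized outputs of the mechanism, which lie in the image of $H$; on that image $L^{+}$ is a genuine inverse, since $L^{+}(Lv) = v$. Hence the conditioning argument goes through with $H^{-1}$ replaced by $L^{+}$: conditioning on $\cF_{\max(S)}(\X) = y$ is equivalent to conditioning on $\cA_{f,\max(S)}(\X) = L^{+}y$, which by \Cref{lemma:additive_noise_means_lossless} fixes the joint law of $(\cA_{f,\rho}(\X))_{\rho\in S}$ and therefore of $(L\,\cA_{f,\rho}(\X))_{\rho\in S} = (\cF_\rho(\X))_{\rho\in S}$; property~1 of \Cref{def:lossless} is immediate from the construction. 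So $\cF_\rho$ supports lossless multiple release.

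The one step that deserves care — and the only place where this is not pure bookkeeping — is the relaxation from requiring $H$ to be invertible to only requiring $L$ to have a left-inverse. The substantive observation is that the argument of \Cref{lemma:invertible_postp} only ever needs to invert $H$ on outputs the mechanism actually produces, all of which lie in the image of $H$, so injectivity of $H$ suffices. Everything else follows directly from the already-established results.
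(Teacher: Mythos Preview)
Your proof is correct and follows essentially the same route as the paper: decompose $\cF_\rho = L \circ \cA_{R,\rho}$ with $\cA_{R,\rho}(\X) = R\X + Z$, then invoke \Cref{thm:meta} and \Cref{cor:noninv_postproc_meta}. If anything you are more careful than the paper on the invertibility point---the paper simply asserts that ``the linear map $L$ is invertible exactly when $L$ has a left-inverse,'' whereas you correctly observe that a left-inverse only gives injectivity and then justify why that suffices for the argument of \Cref{lemma:invertible_postp}.
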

\begin{proof}
Observe that $\mathcal{F}_\rho(\X) = L \circ \mathcal{A}_{R, \rho}$ for an IAN mechanism $\mathcal{A}_{R, \rho}(\X) = R\X + Z$ where $Z\sim\cD(\rho)$ for a $\cD(\rho)$ satisfying \Cref{def:decomposable}.
The result follows from \Cref{thm:meta} and \Cref{cor:noninv_postproc_meta}, and noting that the linear map $L$ is invertible exactly when $L$ has a left-inverse.
\end{proof}

Besides proving existence in \cref{lem:lossless-fact}, we also give an explicit instantiation in \cref{alg:gradual-release-factorization} using the Gaussian mechanism.
\begin{algorithm}[t]
\caption{$\releaseFf$}\label{alg:gradual-release-factorization}
\begin{algorithmic}[1]
\PARAMETERS Factorization ${A = LR}$, $\ell_2$ sensitivity $\Delta_2$
\INPUTS Set of releases $M$, privacy parameter $\rho$
\STATE  Find ${(\rho_k, Y_{\rho_k}), (\rho_{k+1}, Y_{\rho_{k+1}})\in M}$ such that 
\STATE ~~~$\rho\in [\rho_k, \rho_{k+1})$ and $\forall (\rho', \cdot)\in M: \rho' \notin (\rho_k, \rho_{k+1})$\label{line:closest-rhos-fact}
\STATE $Z_\rho \sim  \NN\left(0, \Delta_2^2 \cdot \frac{(1-\rho_k/\rho)\cdot (1/\rho-1/\rho_{k+1})}{2(1 - \rho_k/\rho_{k+1})}\right)^d$\label{line:sample-new-fact}
\STATE Let $Y_\rho := L \cdot Z_\rho + \frac{(1 -\rho_k/\rho)Y_{\rho_{k+1}} + (\rho_k/\rho - \rho_k/\rho_{k+1}) Y_{\rho_k}}{1 - \rho_k / \rho_{k+1}}$ \label{line:estimate-new-fact}
\STATE Set $M = M \cup \{(\rho, Y_\rho)\}$\label{line:store-new-fact}
\STATE \textbf{Return} $Z$
\end{algorithmic}
\end{algorithm}
\begin{lemma}\label{lemma:gaussian_mechanism_is_lossless}
    With initialization $M=\{(0, \infty), (\infty, A\X)\}$, \Cref{alg:gradual-release-factorization} implements the Gaussian noise factorization mechanism $\mathcal{F}_\rho$ with (weakly) lossless multiple release.
\end{lemma}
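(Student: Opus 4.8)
The plan is to observe that \Cref{alg:gradual-release-factorization} is exactly the $d$-dimensional Gaussian multiple release of \Cref{alg:gaussian-multi-release}, run on $R\X$, followed by the deterministic linear post-processing $z\mapsto Lz$. Once this is established, the statement follows by combining \Cref{cor:gm_lossless} (applied coordinatewise, using the $d$-dimensional extension noted in \Cref{sec:gradual-release}) with \Cref{lemma:invertible_postp} when $L$ has a left-inverse, and with \Cref{cor:noninv_postproc_meta} otherwise --- precisely the dichotomy already proved abstractly in \Cref{lem:lossless-fact}, now realized by an explicit algorithm.

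First I would make the identification precise. Consider running \Cref{alg:gaussian-multi-release} with the same sensitivity $\Delta_2$ on the map $R : \XX \to \R^d$ (i.i.d.\ Gaussian noise per coordinate), with store initialized to $M' = \{(0,\infty),(\infty, R\X)\}$; write $\tilde Y_\rho$, $\tilde Z_\rho$ for its releases and fresh noise. Couple the two executions so that line~\ref{line:sample-new-fact} of \Cref{alg:gradual-release-factorization} uses $Z_\rho = \tilde Z_\rho$. I would then prove by induction on the number of processed parameters that the store $M$ of \Cref{alg:gradual-release-factorization} always equals $\{(\rho, L\tilde Y_\rho) : (\rho,\tilde Y_\rho)\in M'\}$, where the common sentinel $(0,\infty)$ matches trivially and the top entries $(\infty, A\X)$ resp.\ $(\infty, R\X)$ are related by $A\X = L(R\X)$ (the base case). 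For the inductive step, both algorithms select the same neighbours $\rho_k,\rho_{k+1}$, since the $\rho$-bookkeeping is identical, and applying the linear map $L$ to the update of \Cref{alg:gaussian-multi-release} gives
\[
  L\tilde Y_\rho = L\tilde Z_\rho + \frac{(1-\rho_k/\rho)\,L\tilde Y_{\rho_{k+1}} + (\rho_k/\rho - \rho_k/\rho_{k+1})\,L\tilde Y_{\rho_k}}{1-\rho_k/\rho_{k+1}},
\]
which, under the coupling and the induction hypothesis, is exactly the quantity computed in line~\ref{line:estimate-new-fact}. Hence the outputs $(Y_\rho)_{\rho\in S}$ of \Cref{alg:gradual-release-factorization} are the deterministic image under $L$ of the outputs $(\tilde Y_\rho)_{\rho\in S}$ of \Cref{alg:gaussian-multi-release}.

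To conclude, by \Cref{cor:gm_lossless} (applied per coordinate) the family $(\tilde Y_\rho)$ implements the \IAN mechanism $\mathcal{A}_{R,\rho}(\X) = R\X + \NN(0,\tfrac{\Delta_2^2}{2\rho})^d$ with lossless multiple release; in particular each $\tilde Y_\rho$ has the correct marginal, so $Y_\rho = L\tilde Y_\rho \stackrel{d}{=} L(R\X + Z) = A\X + LZ = \mathcal{F}_\rho(\X)$, giving Property~1 of \Cref{def:lossless}. For Property~2, since $\mathcal{F}_\rho = L\circ\mathcal{A}_{R,\rho}$ with $L$ linear, \Cref{lemma:invertible_postp} applies whenever $L$ has a left-inverse (so that $L$ is invertible on its image), yielding lossless multiple release, and \Cref{cor:noninv_postproc_meta} otherwise yields weakly lossless multiple release. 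This matches \Cref{lem:lossless-fact} and finishes the proof.

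The main obstacle is the inductive bookkeeping in the first step: one must verify that neighbour selection and the stored parameter set evolve identically in the two algorithms, and that the noise distribution in line~\ref{line:sample-new-fact} is exactly $L$ applied to a sample of the noise used in line~\ref{line:sample-new} of \Cref{alg:gaussian-multi-release}, so that the entire execution of \Cref{alg:gradual-release-factorization} factors through the deterministic map $L$. Everything after that is a direct appeal to results already established (\Cref{cor:gm_lossless}, \Cref{lemma:invertible_postp}, \Cref{cor:noninv_postproc_meta}), with no new computation required.
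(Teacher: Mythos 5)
Your proposal is correct and follows essentially the same route as the paper, which gives only a two-sentence sketch observing that \Cref{alg:gradual-release-factorization} is a copy of \Cref{alg:gaussian-multi-release} applied to $R\X$ with the linear post-processing $L$ folded into the stored state. Your explicit coupling and induction simply supply the details that the paper's sketch omits, and the final appeals to \Cref{cor:gm_lossless}, \Cref{lemma:invertible_postp}, and \Cref{cor:noninv_postproc_meta} match the intended argument.
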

\begin{proof}[Proof sketch.]
    Note that the algorithm is practically a copy of \Cref{alg:gaussian-multi-release}, but for the specific sensitive function $R\X$.
    The only structural difference is that the linearity of the post-processing $L$ allows for storing correlated Gaussian noise directly in $M$.
\end{proof}

\subsection{Weakly Lossless Multiple Release of Histograms}\label{sec:sparse-histograms}
We will now describe an example where the post-processing is neither linear nor invertible: releasing sparse (Gaussian) histograms \cite{korolova2009releasing,googlelibthreshold,BalleW18}.
Let $\X = (X_i)^n$ be a dataset of $n$ users, and we want to privately release the histogram $H(\X) = \sum_{i = 1}^n X_i$ where $X_i \in \{0,1\}^d$ and a user can contribute to $l$ distinct counts and the Gaussian mechanism which scales proportional to $\sqrt{l}$ is preferable.
In many natural settings, the domain size $d$ can be very large, and therefore, the resulting histogram is usually very sparse, $k = \|H(\X)\|_0  \ll d$.
Releasing a noisy histogram where noise has been added to \emph{each} coordinate would destroy the sparsity.
One can cope by introducing a threshold $\tau > 0$ where only noisy counts that exceed $\tau$ are released.
$\tau$ is usually set high enough such that the noise to a zero count will (after thresholding with high probability) still be zero.
We denote the \emph{support} of the histogram as ${U(H(\X)) = \{i \in [d]: H(\X)_i \neq 0\}}$ as the index of the non-zero coordinates.
Define the thresholding function ${T_{\tau} : \mathbb{R}^d \to \mathbb{R}^d}$ as:
\[T_{\tau}(x) = \left( y_i \right)_{i=1}^{d}, \quad \text{where} \quad {y_i =
\begin{cases} 
x_i, & \text{if } x_i \geq \tau \\
0, & \text{otherwise}
\end{cases}}\,.
\]

Denote the (\IAN{}) sparse histogram mechanism as ${\mathcal{H}_{\rho}(\X)= T_\tau(H(\X) + Z})$ where ${Z \sim \mathcal{D}(\rho)}$ is a distribution satisfying a convolution preorder (\cref{def:decomposable}).
\begin{lemma}\label{lem:meta-gsh}
$\mathcal{H}_\rho$ can be implemented with weakly lossless multiple release.
\end{lemma}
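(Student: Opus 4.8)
The plan is to exhibit $\mathcal{H}_\rho$ as a post-processing of an \IAN{} mechanism that itself supports lossless multiple release, and then invoke \Cref{cor:noninv_postproc_meta}. Concretely, write $\mathcal{H}_\rho = T_\tau \circ \mathcal{A}_{H, \rho}$ where $\mathcal{A}_{H, \rho}(\X) = H(\X) + Z$ with $Z \sim \mathcal{D}(\rho)$ a noise distribution satisfying \Cref{def:decomposable} (e.g.\ the Gaussian mechanism scaled to the $\ell_2$-sensitivity $\sqrt{l}$ of $H$). By \Cref{lemma:additive_noise_means_lossless}, $\mathcal{A}_{H, \rho}$ can be implemented with lossless multiple release. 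The thresholding map $T_\tau : \R^d \to \R^d$ is a deterministic (hence fixed, non-randomized) function, so it is an allowed choice of $H$ in \Cref{def:weakly-lossless}.

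The key steps, in order, are: (i) identify $H(\X)$ as the query and verify it has finite $\ell_2$-sensitivity so that a convolution-preorder noise family (Gaussian, Laplace, Poisson) applies; (ii) cite \Cref{lemma:additive_noise_means_lossless} to conclude $\mathcal{A}_{H,\rho}$ admits lossless multiple release; (iii) observe $T_\tau$ is a function $\R^d \to \R^d$ and invoke \Cref{cor:noninv_postproc_meta} (equivalently, directly apply \Cref{def:weakly-lossless} with $\MM_\rho' = \mathcal{A}_{H,\rho}$ and $H = T_\tau$) to conclude $\mathcal{H}_\rho = T_\tau \circ \mathcal{A}_{H,\rho}$ supports weakly lossless multiple release. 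Since $T_\tau$ is manifestly not invertible — it collapses the whole half-open interval $(-\infty, \tau)$ to $0$ coordinate-wise — the stronger \Cref{thm:meta} does not apply, which is precisely why the weak notion is needed here.

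There is not much of an obstacle in this particular lemma: the content is entirely in already-established machinery, and the proof amounts to recognizing the right decomposition. The one subtlety worth a sentence is that \Cref{def:weakly-lossless} permits $H$ to be randomized or even drawn from a distribution, but here $T_\tau$ is deterministic, so no care about independence between the post-processing randomness and $\mathcal{A}_{H,\rho}$'s randomness is required. I would also remark that the same argument goes through verbatim if one uses the Laplace or Poisson \IAN{} mechanism in place of the Gaussian one, since only \Cref{def:decomposable} is used. The more interesting work — making this efficient in the gradual-release setting with running time independent of $d$ — is deferred to the algorithmic development that follows, and is not part of this existence statement.
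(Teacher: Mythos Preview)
Your proposal is correct and follows essentially the same approach as the paper: decompose $\mathcal{H}_\rho = T_\tau \circ \mathcal{A}_{H,\rho}$ with $\mathcal{A}_{H,\rho}$ an \IAN{} mechanism whose noise satisfies convolution preorder, then invoke \Cref{cor:noninv_postproc_meta}. The paper's proof is terser but identical in substance; your additional remarks on the non-invertibility of $T_\tau$ and the determinism of the post-processing are accurate but not strictly needed.
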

\begin{proof}
Observe that $\mathcal{H}_\rho$ can be expressed as a composite function $T_\tau\circ \mathcal{A}_{\rho,f}(\X)$ for an \IAN mechanism $\mathcal{A}_{\rho,f}(\X)$ meeting \Cref{def:decomposable}.
The statement thus follows from \Cref{cor:noninv_postproc_meta}.
\end{proof}
It is straightforward to implement weakly lossless gradual release for stability histograms using our framework with time and space complexity linear in dimension $d$; see \Cref{alg:gradual-release-histogram}.
The following lemma is proved in~\Cref{sec:efficient-sparse}.
\begin{lemma}\label{lemma:gradual_histograms}
    With initialization $M=\{(0, 0)\}$, \Cref{alg:gradual-release-histogram} implements $\mathcal{H}_\rho$ with weakly lossless gradual release.
\end{lemma}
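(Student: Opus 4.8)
The plan is to exhibit Algorithm~\ref{alg:gradual-release-histogram} as an instance of the generic construction and then verify the two requirements of weakly lossless gradual release directly. First I would recall that, by Lemma~\ref{lem:meta-gsh}, $\mathcal{H}_\rho = T_\tau\circ\mathcal{A}_{f,\rho}$ with $\mathcal{A}_{f,\rho}(\X) = H(\X) + Z$, $Z\sim\mathcal{D}(\rho)$ an \IAN{} mechanism whose noise distribution satisfies \Cref{def:decomposable}. Since we are in the \emph{gradual} (increasing-$\rho$) setting, the abstract sampler of \Cref{lemma:additive_noise_means_lossless} always inserts the new parameter as the right endpoint, so the only ``neighbor'' needed is the previous release; concretely, to go from $\rho_k$ to $\rho_{k+1} > \rho_k$ one adds fresh noise distributed as the bridging distribution $\mathcal{C}(\rho_k,\rho_{k+1})$ to the stored noisy vector (before thresholding). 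The key point is that Algorithm~\ref{alg:gradual-release-histogram} maintains in $M$ the \emph{pre-threshold} noisy histogram $Y_{\rho_k} = H(\X) + Z_{\rho_k}$ (this is why it is initialized with $(0,0)$, i.e.\ infinite-variance noise / no information), updates it coordinatewise by the bridging noise, stores the updated vector, and only applies $T_\tau$ to the returned value. Thus the output sequence is exactly $\big(T_\tau(Y_{\rho_k})\big)_k$ for the correlated family $\{Y_{\rho_k}\}$ produced by \Cref{lemma:additive_noise_means_lossless}.

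Given this identification, property~1 of \Cref{def:lossless} is immediate: each $Y_{\rho_k}$ has the law $H(\X) + \mathcal{D}(\rho_k)$ because the convolution $\mathcal{D}(\rho_1)\ast\mathcal{C}(\rho_1,\rho_2)\ast\cdots\ast\mathcal{C}(\rho_{k-1},\rho_k) = \mathcal{D}(\rho_k)$ telescopes by \Cref{def:decomposable}, so $T_\tau(Y_{\rho_k})$ is distributed exactly as $\mathcal{H}_{\rho_k}(\X)$. For property~2 in the weak sense, note that $\mathcal{H}_\rho = T_\tau\circ\mathcal{M}'_\rho$ where $\mathcal{M}'_\rho$ is the \IAN{} mechanism implemented (losslessly, by \Cref{lemma:additive_noise_means_lossless}) via the stored pre-threshold vectors; this is precisely the form required by \Cref{def:weakly-lossless}, so \Cref{cor:noninv_postproc_meta} applies and the released sequence is $\max(S)$-private whenever the base privacy notion is closed under post-processing. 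The remaining obligation is simply to check that the coordinatewise noise the algorithm samples on line updating $M$ is indeed $\mathcal{C}(\rho_k,\rho_{k+1})$ for the relevant instantiation (e.g.\ the scaled Gaussian of \Cref{alg:gaussian-multi-release}, the $\operatorname{LapBridge}$ mixture, or the Poisson difference) — a routine substitution into \eqref{eq:multiple-release} with the right endpoint being the new release.

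I would close by noting the efficiency claim, which is really a remark rather than part of the correctness proof: time and space are $O(d)$ because the algorithm stores one $d$-dimensional vector and does one $O(d)$ coordinatewise update per release, independent of $|S|$, and no covariance matrix is needed (each coordinate evolves as an independent scalar gradual-release chain). The main obstacle, such as it is, is bookkeeping: one must be careful that storing the \emph{pre-threshold} vector (not the thresholded output) is what makes the correlation structure of \Cref{lemma:additive_noise_means_lossless} go through — thresholding is applied only at the output, so the internal state is exactly the lossless \IAN{} state, and $T_\tau$ sits on the outside as the non-invertible $H$ of \Cref{def:weakly-lossless}. Everything else reduces to the already-established \Cref{lemma:additive_noise_means_lossless}, \Cref{cor:noninv_postproc_meta}, and the concrete bridging-distribution identities.
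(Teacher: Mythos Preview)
Your approach is essentially the paper's: identify the internal state of Algorithm~\ref{alg:gradual-release-histogram} as the state of a lossless gradual-release Gaussian mechanism, observe that the returned output applies the non-invertible $T_\tau$ to it, and conclude via \Cref{cor:noninv_postproc_meta}. One small bookkeeping correction: what the algorithm stores in $M$ is the \emph{noise} vector $Z$ (centered at $\beta=0$), not the pre-threshold noisy histogram $H(\X)+Z$; the initialization $(0,0)$ encodes $\rho'=0,\,Z'=0$, and the update on line~3 then reproduces the recursion of \Cref{lemma:gaussian_seq} with $\beta=0$, which is precisely what the paper's proof verifies. Your more abstract framing via a generic bridging distribution $\mathcal{C}(\rho_k,\rho_{k+1})$ is valid but more than is needed, since Algorithm~\ref{alg:gradual-release-histogram} is explicitly Gaussian.
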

\begin{algorithm}[t]
\caption{\GLSGH}
\label{alg:gradual-release-histogram}
\begin{algorithmic}[1]
\PARAMETERS $\ell_2$ sensitivity $\Delta_2$ \\
\INPUTS histogram $H(\X)$, set of releases $M$,\\
privacy parameter $\rho$, threshold $\tau$\\
\STATE Extract the single element ${(\rho', Z')\in M}$ 
\STATE Sample $\tilde{Z} \sim \NN\left(0, \frac{1}{2}\Delta_2^2(\rho - \rho')\right)^d$
\STATE Let $Z  = \dfrac{\rho'}{\rho}Z' + \dfrac{1}{\rho}\tilde{Z}$
\FOR{each $i\in [d]$}\label{gsh-post-processing}
    \STATE{\textbf{if} $H(\X)_i + Z_i > \tau$ \textbf{then}} $Y_i = H(\X)_i + Z_i$
    \STATE{\textbf{else}} $Y_i = 0$
\ENDFOR
\STATE Set $M = \{(\rho, Z)\}$\label{line:store-new}
\STATE \textbf{Return} $Y$
\end{algorithmic}
\end{algorithm}

\paragraph{Improving efficiency for weakly lossless gradual release}
\citet*{korolova2009releasing} showed that under approximate differential privacy, one can skip the noise generation for zero counts if thresholding is enabled, adding only a small probability for infinite privacy loss if this coordinate is non-zero in a neighboring dataset. 
Unfortunately, this approach does not fit our meta theorem, so instead, we apply a technique due to \citet*{cormode2012dpsummaries} that efficiently simulates the noise distribution of those zero counts that would exceed the threshold and thus gives an identical distribution to the truncated noisy histogram.
The procedure is described in the following lemma.
\begin{lemma}\label{lemma:smart-histogram-sampling}
    For a fixed noise distribution $\mathcal{D}$, the output distribution of releasing $\cT_\tau(H(\X) + Z)$ with $Z\sim \mathcal{D}$ are independent noise samples is equal to the following process:
    \textbf{(a)} add noise to every non-zero coordinate in ${H(\X)_i}$ for ${i \in U(H(\X))}$ and then \textbf{(b)} draw ${q\sim\operatorname{Binomial}(d-k, p)}$ where ${p = \Pr_{Z \sim \mathcal{D}}[Z > \tau]}$.
    \textbf{(c)} Sample a subset ${Q\subseteq [d]\setminus U(H(\X))}$ of size $|Q| = q$ uniformly at random and \textbf{(d)} set the noise for ${H(\X)}$ to ${Z \sim \mathcal{D}}$ conditioned on being above the threshold $\tau$.
\end{lemma}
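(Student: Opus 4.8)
The plan is to exploit the product structure of the noise. Since $Z\sim\mathcal{D}$ has i.i.d.\ coordinates and $T_\tau$ acts coordinatewise, the output $T_\tau(H(\X)+Z)$ is a product distribution over the $d$ coordinates, so it suffices to match the law of the block of coordinates in $U(H(\X))$ and, independently, the law of the block in $[d]\setminus U(H(\X))$ against the corresponding pieces of the described process. For $i\in U(H(\X))$ --- there are $k=\|H(\X)\|_0$ such indices --- the output coordinate is just $T_\tau(H(\X)_i+Z_i)$ with a fresh $Z_i\sim\mathcal{D}$, which is exactly step (a); there is nothing to prove for this block.

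The content is in the $d-k$ zero coordinates. For such $i$ we have $H(\X)_i=0$, so the output coordinate equals $Z_i$ when $Z_i$ clears the threshold and $0$ otherwise. Writing $B_i=\mathbf{1}[Z_i \geq \tau]$, the vector $(B_i)_{i\notin U(H(\X))}$ is i.i.d.\ $\ber(p)$ with $p=\Pr_{Z\sim\mathcal{D}}[Z\geq\tau]$, and conditioned on $B_i=1$ the value $Z_i$ is distributed as $\mathcal{D}$ restricted to $\{Z\geq\tau\}$, independently across coordinates. I would then invoke the elementary exchangeability fact that if each element of a ground set of size $d-k$ is kept independently with probability $p$, then the number kept is $\operatorname{Binomial}(d-k,p)$ and, conditioned on that number being $q$, the kept set is uniform over all $\binom{d-k}{q}$ subsets of size $q$; moreover the retained values are i.i.d.\ draws from $\mathcal{D}\mid\{Z\geq\tau\}$ and are independent of which subset was retained. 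These three observations are precisely steps (b), (c), and (d). Concatenating the two independent blocks recovers the full output law, which proves the equality.

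The argument is essentially bookkeeping --- pushing a coordinatewise threshold through a product measure --- so there is no genuine obstacle; the only points needing a line of care are (i) the threshold boundary convention (the function $T_\tau$ uses $x_i\geq\tau$, so strictly one should take $p=\Pr[Z\geq\tau]$; for continuous $\mathcal{D}$ this agrees with the $\Pr[Z>\tau]$ written in the statement, and in general it is immaterial up to fixing the convention), and (ii) being explicit that, for the zero coordinates, the magnitudes on the surviving indices are independent of the identity of the surviving set, so that sampling the set (steps (b)--(c)) and then its values (step (d)) is a faithful factorization of the conditional law. I would spell out (ii) via the i.i.d.\ Bernoulli--value pairs $(B_i, Z_i)$ and note that conditioning on the positions of the ones leaves the conditional value distributions untouched.
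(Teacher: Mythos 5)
Your proof is correct and follows essentially the same route as the paper's own argument, which is only a sketch deferring the formal details to Cormode et al.\ (2012): split off the support coordinates, then use the i.i.d.\ Bernoulli indicators on the zero coordinates to get the Binomial count, the uniform subset, and the conditional noise law. Your write-up is in fact more complete than the paper's sketch, and your two points of care (the $\geq$ versus $>$ boundary convention and the independence of the surviving values from the identity of the surviving set) are exactly the right ones to make explicit.
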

\begin{proof}[Proof sketch]
A formal argument can be found in~\citet{cormode2012dpsummaries}, here we just provide a sketch. 
It is clear that every entry in the support gets the correct output distribution.
Furthermore, the number of zero counts $q$ whose noise exceeds $\tau$ follows a binomial distribution, and we can add conditional noise to a uniformly drawn subset of size $q$.
\end{proof}

\Cref{alg:efficient-gradual-release} in~\Cref{sec:efficient-sparse} implements the efficient routine described by \Cref{lemma:smart-histogram-sampling} under weakly lossless gradual release.
The key challenge is that the probability for a histogram count to exceed the threshold in a given round now depends on whether it exceeded the threshold in any prior round.
Intuitively, all zero counts that have never exceeded the threshold have an equal probability to exceed the threshold in any given round, and so the simulation technique in \Cref{lemma:smart-histogram-sampling} can be used for these counts (with different probabilities and conditional distributions).
However, once any zero count exceeds the threshold, it will have to be treated the same as non-zero counts in all future rounds.

The utility and privacy of \Cref{alg:efficient-gradual-release} is proved by arguing that its output distribution matches that of \Cref{alg:gradual-release-histogram}.
A proof sketch for the following lemma is given in \Cref{sec:efficient-sparse}.
\begin{lemma}\label{lemma:efficient-histogram-id}
    Let $H(\mathbf{x})$ be a histogram over $[d]$, $\rho_1, \dots, \rho_m$ be a sequence increasing of privacy budgets, $\tau_1, \dots, \tau_m$ be a sequence of thresholds and $\Delta_2 > 0$ the $\ell_2$-sensitivity of the histogram.
    Also let the sequences of outputs $(Y^{(1)}, \dots, Y^{(m)})$ and $(\hat{Y}^{(1)}, \dots, \hat{Y}^{(m)})$ be derived from running \Cref{alg:gradual-release-histogram} and \Cref{alg:efficient-gradual-release} respectively with the preceding parameters as input.
    Then the sequences $(Y^{(1)}, \dots, Y^{(m)})$ and $(\hat{Y}^{(1)}, \dots, \hat{Y}^{(m)})$ are identically distributed.
\end{lemma}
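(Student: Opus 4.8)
The plan is to show the output sequences of \Cref{alg:gradual-release-histogram} and \Cref{alg:efficient-gradual-release} are identically distributed by reducing both to a coordinatewise description and matching the resulting laws. The first step unpacks \Cref{alg:gradual-release-histogram}: by \Cref{lemma:gaussian_seq} (applied with mean $0$ and all variances scaled by $\Delta_2^2$), the noise vectors $Z^{(1)},\dots,Z^{(m)}$ it generates have the property that, for each fixed coordinate $i\in[d]$, the sequence $(Z^{(1)}_i,\dots,Z^{(m)}_i)$ follows the one-dimensional Gaussian gradual-release law, and these per-coordinate sequences are mutually independent (since the increments $\tilde Z^{(t)}$ have i.i.d.\ entries). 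Because $Y^{(t)}_i = (H(\X)_i + Z^{(t)}_i)\,\mathbf{1}[H(\X)_i + Z^{(t)}_i > \tau_t]$ is a deterministic function of $H(\X)_i$ and $(Z^{(s)}_i)_s$, the output sequences $(Y^{(1)}_i,\dots,Y^{(m)}_i)$ are independent across $i$ given $H(\X)$: for $i\in U(H(\X))$ the law depends only on $H(\X)_i$, while for the $d-k$ zero coordinates (with $k=\|H(\X)\|_0$) the sequences are i.i.d., each distributed as $(Z^{(t)}\mathbf{1}[Z^{(t)}>\tau_t])_{t=1}^m$ for the single-coordinate walk. Call this law $P_{\mathrm{zero}}$. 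So the entire output of \Cref{alg:gradual-release-histogram} is an independent draw per support coordinate plus $d-k$ i.i.d.\ draws from $P_{\mathrm{zero}}$.

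The second step exposes the structure of $P_{\mathrm{zero}}$. Rescaling by $W^{(t)} := \rho_t Z^{(t)}$ turns the gradual-release recurrence into $W^{(t)} = W^{(t-1)} + \tilde Z^{(t)}$ with independent increments (using that the $\rho_t$ are increasing), so $(W^{(1)},\dots,W^{(m)})$ is a Gaussian random walk, hence a Markov chain, and ``output nonzero at round $t$'' is the event $\{W^{(t)} > \rho_t\tau_t\}$. A zero coordinate therefore outputs the all-zero vector up to the first crossing round $\sigma$, outputs $W^{(\sigma)}/\rho_\sigma$ at round $\sigma$, and afterwards outputs the thresholded walk. The crucial point is that ``dormant through round $t-1$'' is measurable with respect to $(W^{(1)},\dots,W^{(t-1)})$, so by the Markov property, conditioned on this event and on $W^{(t)}$, the future $(W^{(t+1)},\dots,W^{(m)})$ depends only on $W^{(t)}$: once a coordinate wakes up, its remaining output is the clean thresholded recurrence seeded at $W^{(\sigma)}$, and the pre-$\sigma$ conditioning may be discarded.

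The third step matches \Cref{alg:efficient-gradual-release} against this description. That algorithm tracks the set $A_t$ of coordinates that are either in $U(H(\X))$ or have already crossed the threshold, carrying their correlated noise and thresholding exactly as \Cref{alg:gradual-release-histogram}, so support and already-woken coordinates are handled identically. For the pool of coordinates still dormant at the start of round $t$, all are conditioned on exactly the same event $\{W^{(s)}\le \rho_s\tau_s : s<t\}$ and their underlying walks are independent, so each crosses at round $t$ independently with the same probability $p_t'$; hence the number that cross is $\operatorname{Binomial}(|\text{pool}|,p_t')$, the crossing set is uniform of that size, and their round-$t$ values are i.i.d.\ from the corresponding conditional crossing law — which is precisely the round-$t$ instantiation (with the extra dormancy conditioning) of the simulation in \Cref{lemma:smart-histogram-sampling} that the algorithm performs. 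Newly woken coordinates are folded into $A_{t+1}$ and evolved by the same recurrence, consistently with the Markov argument of step two. By exchangeability, ``count $+$ uniform subset $+$ i.i.d.\ values'' reconstructs exactly $d-k$ i.i.d.\ samples from $P_{\mathrm{zero}}$; combined with the identical treatment of the support this gives $(\hat Y^{(1)},\dots,\hat Y^{(m)}) \stackrel{d}{=} (Y^{(1)},\dots,Y^{(m)})$.

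The main obstacle is the conditioning bookkeeping in step three: one must verify that membership in the dormant pool at round $t$ coincides exactly with ``dormant through round $t-1$'' (so all pooled coordinates share one conditional law), that this conditioning preserves independence across coordinates (so the crossing count is genuinely binomial and the crosser set genuinely uniform), and that the pre-crossing conditioning can be dropped the instant a coordinate wakes up — the last point being exactly where the Markov property of the rescaled walk does the work a Brownian-motion argument would otherwise require. Identifying $p_t'$ and the conditional crossing distribution with the quantities \Cref{alg:efficient-gradual-release} actually computes is then routine given this setup.
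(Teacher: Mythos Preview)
Your proposal is correct and rests on the same ingredients as the paper's sketch---coordinate independence, exchangeability of the dormant zero counts (giving the binomial-count/uniform-subset/i.i.d.-value decomposition that the paper phrases via \Cref{lemma:smart-histogram-sampling}), and the Markov structure of the noise recurrence---but the organization differs. The paper argues by induction on rounds, showing that $(\hat Y^{(1)},\dots,\hat Y^{(k+1)})$ matches $(Y^{(1)},\dots,Y^{(k+1)})$ given the length-$k$ case; you instead first pin down the full per-zero-coordinate law $P_{\mathrm{zero}}$ and then match \Cref{alg:efficient-gradual-release} against it in one pass. Your explicit use of the rescaled walk $W^{(t)}=\rho_t Z^{(t)}$ and its Markov property is a clean way to say why post-wake evolution needs no residual conditioning, something the paper's sketch leaves implicit. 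Both arguments defer to the same level of hand-waving the check that the sequential truncated-Gaussian sampling in \Cref{alg:efficient-gradual-release} actually reproduces the correct conditional crossing law---you call it ``routine''; the paper simply asserts it.
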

The benefit of the efficient sampling technique is that the number of sampled Gaussians will \emph{never} exceed that of the na\"ive approach and can potentially be in the order of the sparsity depending on the parameter regime.
\begin{lemma}\label{lemma:sampled_gaussians}
   Let $c$ be the number of zero-counts in the histogram output by \Cref{alg:efficient-gradual-release} least once over its execution.
   Then \Cref{alg:efficient-gradual-release} will sample $(k+c)m$ (truncated or otherwise) Gaussians.
\end{lemma}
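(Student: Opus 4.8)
The plan is to prove the exact count by attributing every Gaussian that \Cref{alg:efficient-gradual-release} draws to the coordinate whose noise value it (partly) realises, and then summing over three disjoint classes: the $k$ coordinates in the support $U(H(\X))$, the $c$ zero-coordinates (those in $[d]\setminus U(H(\X))$) that are output as non-zero in at least one of the $m$ rounds, and the remaining $d-k-c$ zero-coordinates. First I would note that the discrete ``how many fresh zero-coordinates cross the threshold this round'' draws used in the simulation of \Cref{lemma:smart-histogram-sampling} cost no Gaussian; only per-coordinate noise values do, so it suffices to count those.

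For the support coordinates: \Cref{alg:efficient-gradual-release} keeps, for each $i\in U(H(\X))$, a noise value that it updates in lockstep with the internal state of \Cref{alg:gradual-release-histogram}, i.e.\ one fresh Gaussian in round $1$ and one fresh correction Gaussian in each of rounds $2,\dots,m$ (the term combined via $Z=\tfrac{\rho'}{\rho}Z'+\tfrac1\rho\tilde Z$). Hence each of the $k$ support coordinates accounts for exactly $m$ Gaussian draws, for $km$ in total.

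For a zero-coordinate $i\notin U(H(\X))$ that is first output as non-zero in round $t_i$ (by definition of $c$ there are exactly $c$ such coordinates, each with $t_i\in[m]$), I would argue as follows. In rounds $1,\dots,t_i-1$ coordinate $i$ is accounted for only through the binomial count of still-below-threshold zero-coordinates and draws no Gaussian of its own. When it is activated in round $t_i$, the algorithm must produce a round-$t_i$ noise value jointly consistent with everything sampled so far, namely distributed as coordinate $i$ of \Cref{alg:gradual-release-histogram} conditioned on being below threshold in rounds $1,\dots,t_i-1$ and above in round $t_i$; by construction it does so by materialising that coordinate's entire gradual-release noise history $Z^{(1)},\dots,Z^{(t_i)}$, one (truncated or ordinary) Gaussian per round, i.e.\ $t_i$ draws. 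From round $t_i+1$ onward the coordinate is treated exactly like a support coordinate, contributing one fresh Gaussian in each of the remaining $m-t_i$ rounds. Summing, coordinate $i$ costs $t_i+(m-t_i)=m$ Gaussians, \emph{independently of $t_i$}; over all $c$ such coordinates this is $cm$. Every other zero-coordinate is never output as non-zero, hence never activated, and draws no Gaussian. Adding the three contributions gives $km+cm+0=(k+c)m$.

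The main obstacle I expect is the middle step: pinning down that activating a coordinate in round $t_i$ costs exactly $t_i$ Gaussians — that its conditional noise history is generated one draw per round rather than by rejection or some other scheme — and then observing the ``$t_i$ now, $m-t_i$ later'' telescoping. Equivalently, in the round-indexed view where $c_t$ is the number of zero-coordinates first crossing in round $t$ (so $\sum_{t=1}^m c_t=c$) and $a_t=\sum_{s<t}c_s$ is the number active before round $t$, round $t$ costs $k+a_t+t\,c_t$ draws, and $\sum_{t=1}^m(k+a_t+t\,c_t)=mk+(mc-\sum_t t\,c_t)+\sum_t t\,c_t=(k+c)m$; this rearrangement is routine, so the real content is verifying the per-round draw count against the construction of \Cref{alg:efficient-gradual-release}.
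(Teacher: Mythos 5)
Your proposal is correct and follows essentially the same accounting as the paper's proof: each of the $k$ support coordinates draws one fresh Gaussian per round, and a zero-coordinate first activated in round $t_i$ draws $t_i$ (truncated) Gaussians at activation via the for-loop on line~17 and one per remaining round thereafter, giving $m$ draws per activated coordinate and hence $(k+c)m$ in total. Your version is, if anything, slightly more careful than the paper's sketch (which conflates the round index with the sparsity parameter $k$), but the argument is the same.
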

We again refer the reader to \Cref{sec:efficient-sparse} for a proof.

\section{Empirical Evaluation}\label{sec:empirical-evaluation}

\begin{figure}[t]
  \centering
    \includegraphics[width=1.0\linewidth]{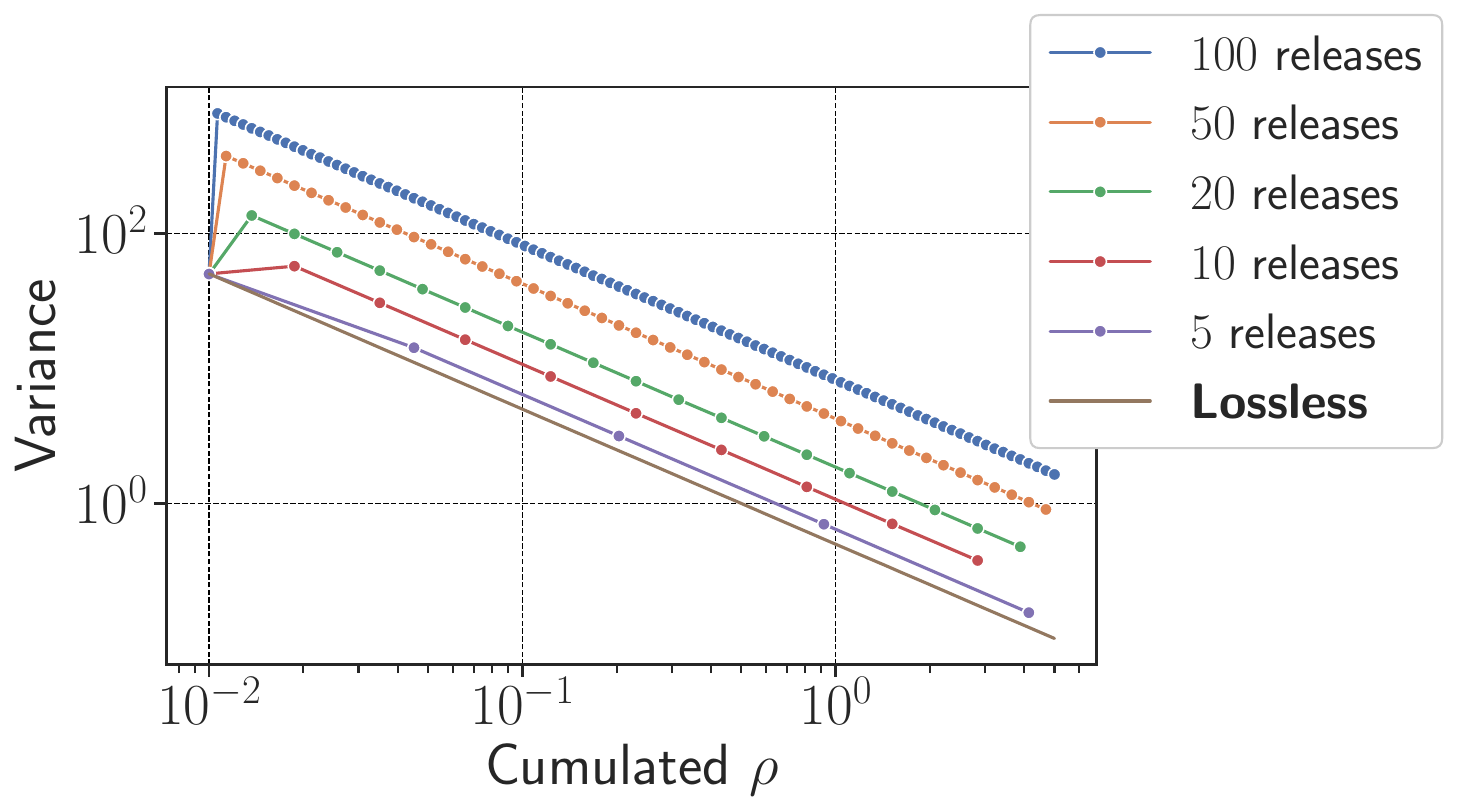}
  \caption{Accuracy comparison of multiple uncorrelated releases compared to lossless multiple release. 
  Budgets are spaced evenly on a logarithmic scale between $\rho=0.001$ and $\rho=5$ on the x-axis. 
  Creating independent releases with a denser set of privacy parameters comes at the cost of increased variance.
  In the lossless setting we get the best possible variance with no bound on the number of releases.
  } \label{fig:experiments}
\end{figure}

To confirm our theoretical claims, we empirically evaluate the accuracy of lossless multiple release against a baseline algorithm that uses independent releases.
We evaluate the impact by focusing on noise in isolation to avoid capturing the effect of specific queries. 
The baseline algorithm is a simple Gaussian mechanism where noise is drawn independently for each consecutive release. 
To showcase our algorithm's performance, we demonstrate how the cost incurred by uncoordinated releases grow with the amount of releases, in contrast to the lossless multiple release where there is no additional cost. 
We repeat our experiments $10^6$ times, and measure the variance of the noise. 
The plot (\Cref{fig:experiments}) shows, as expected, that our mechanism does not lose any utility from making multiple releases. 
As we can see, there is an expected increase in variance when going from one release to multiple releases---the initial jump is larger the more releases we want to make as the budget used for the second release is the difference between the starting point ($\rho=0.001$) and a constant increase in budget, whereas the subsequent releases all have the same difference in budget.

\section{Conclusion and Open Questions}\label{sec:conclusion}

We have initiated a systematic study of differential privacy with multiple releases, motivated by settings in which many levels of privacy or trust may co-exist.
The main message is that it is possible to generalize a large class of lossless gradual release techniques to this setting, even when releases are determined online and in no particular order.
For mechanisms based on Gaussian, Laplace, or Poisson additive noise we give simple and efficient sampling procedures for creating new lossless releases.
In particular, we are able to do lossless multiple release for any factorization mechanism using invertible matrices.
Finally, we consider algorithmic challenges related to lossless gradual release and show that private sparse histograms may be computed much more efficiently than what a direct application of our general results would imply.

There are still many open questions concerning mechanisms that do not inherit their privacy guarantee from an \IAN mechanism.
In particular, it would be interesting to determine under which conditions the exponential mechanism~\citep{mcsherry2007mechanism} supports lossless multiple release. 
\citet{koufogiannis_2016} conjecture that this is always possible.
Other central private algorithms, such as report noisy max~\cite{dwork2014algorithmic} also do not have known lossless multiple release mechanisms, even in the gradual release setting.
A final challenge we would like to mention is implementing our mechanisms on a finite computer, e.g., creating a multiple release version of the discrete Gaussian mechanism~\citep{canonne2020discrete}.
An appealing approach would be to base this on Poisson noise, which meets the technical conditions for our framework and approaches the Gaussian distribution in the limit.

\section*{Acknowledgments}
Andersson, Pagh and Retschmeier carried out this work at Basic Algorithms Research Copenhagen (BARC), which was supported by the VILLUM Investigator grant 54451. 
Providentia, a Data Science Distinguished Investigator grant from the Novo Nordisk Fonden, supported Andersson, Pagh and Retschmeier. 
Nelson carried out part of this work at Uppsala University.
 
\section*{Impact Statement}
This paper presents work whose goal is to advance the field of private machine learning. There are many potential societal consequences of our work, none which we feel must be specifically highlighted here.

\medskip

\bibliographystyle{icml2025}
\bibliography{main} 

\appendix
\onecolumn

\section{On Lossless Multiple Release Sampling}\label{appendix:sampling}
\Cref{lemma:additive_noise_means_lossless} makes no claim about how easy it is to sample noise from the conditional noise distribution, but guarantees its existence.
We will use this section for discussing this sampling in greater detail.
Consider the joint distribution of the releases defined by \cref{eq:multiple-release} from the proof of \Cref{lemma:additive_noise_means_lossless}, re-stated below for convenience:
\begin{equation}\tag{1}
     Y_{\rho_k} = f(\X) + Z_{\rho_m} + \sum_{j=k}^{m-1} W_{\rho_{j+1}, \rho_j}\,,\qquad k=1, \dots, m\,,
\end{equation}
where $Y_{\rho_k}$ is the $\rho_k$-private release, $Z_{\rho_m}\sim \mathcal{D}(\rho_m)$ and $W_{\rho_{j+1}, \rho_j}\sim\mathcal{C}(\rho_{j+1}, \rho_j)$.
Recall that $\mathcal{C}(\rho', \rho)$ is the unique distribution where for any $0 < \rho < \rho' : \mathcal{D}(\rho) = \mathcal{D}(\rho') * \mathcal{C}(\rho', \rho)$.
As argued for in the proof of \Cref{lemma:additive_noise_means_lossless}, to make a new $\rho$-private release, we consider the joint distribution, and sample the new releases conditioned on all past releases.

Formally, for a set of privacy parameters $\rho_1 < \dots < \rho_m$, we are interested in releasing $\rho_k$ for a single $k\in[m]$, conditioned on the set of past releases $M=\{(\rho_i, Y_{\rho_i}) : i\in[m]\setminus\{k\}\}$.
We give pseudocode for this in \Cref{alg:generic-sampling}, and a lemma for its correctness.
\begin{algorithm}[t]
\caption{$\releaseGeneric$}\label{alg:generic-sampling}
\begin{algorithmic}[1]
\PARAMETERS Noise distribution family $\mathcal{D}(\rho)$ satisfying a convolution preorder,\\
bridging noise distribution $\mathcal{C}(\rho, \rho')$ where $\mathcal{D}(\rho') * \mathcal{C}(\rho', \rho) = \mathcal{D}(\rho)$ for $0 < \rho < \rho'$.
\INPUTS Sensitive value $f(\X)$, new privacy parameter $\rho_k$, set of past releases $M = \{ (\rho_i, Y_{\rho_i}) : i\in[m]\setminus\{ k \} \}$
\IF{$M = \emptyset$}
\STATE Sample $Z_{\rho_k}$ from $\mathcal{D}(\rho_{\rho_k})$
\STATE Set $Y_{\rho_k} = f(\X) + Z_{\rho_k}$
\ELSIF{$k = 1$}
\STATE Sample $W_{\rho_{k+1}, \rho_{k}}$ from $\mathcal{C}(\rho_{k+1}, \rho_{k})$
\STATE Set $Y_{\rho_k} = Y_{\rho_{k+1}} + W_{\rho_{k+1}, \rho_{k}}$
\ELSIF{$k\in (1, m)$} \label{case3.sample}
\STATE Sample $W_{\rho_{k+1}, \rho_k}$ from $\mathcal{C}(\rho_{k+1}, \rho_k)$ conditioned on $W_{\rho_{k+1}, \rho_k} + W_{\rho_k, \rho_{k-1}} = Y_{\rho_{k-1}} - Y_{\rho_{k+1}}$ 
\STATE Set $Y_{\rho_k} = Y_{\rho_{k+1}} + W_{\rho_{k+1}, \rho_k}$
\ELSIF{$k=m$}
\STATE Sample $Z_{\rho_k}$ from $\mathcal{D}(\rho_k)$ conditioned on $Z_{\rho_k} + W_{\rho_{k}, \rho_{k-1}} = Y_{\rho_{k-1}} - f(\X)$
\STATE Set $Y_{\rho_k} = f(\X) + Z_{\rho_k}$
\ENDIF
\STATE Set $M = M \cup \{(\rho_k, Y_{\rho_k})\}$
\STATE \textbf{Return} $Y_{\rho_k}$
\end{algorithmic}
\end{algorithm}
\begin{lemma}\label{lemma:generic_sampling}
    Initializing $M=\emptyset$ and then running \Cref{alg:generic-sampling} for a set of privacy parameters $\{\rho_i : i\in[m]\}$ (processed in arbitrary order), will produce a set of outputs $\{ Y_{\rho_i} : i\in[m]\}$ with distribution given by \cref{eq:multiple-release}.
\end{lemma}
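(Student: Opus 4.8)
The plan is to prove the lemma by induction on the number of releases already produced, i.e.\ on $|M|$. The invariant I would maintain is: after processing any sub-multiset of the parameters, the pairs recorded in $M$ are jointly distributed exactly as the corresponding coordinates of \eqref{eq:multiple-release}, with the sum re-indexed to the parameters seen so far listed in increasing order. Before running the induction I would first isolate one structural property of \eqref{eq:multiple-release}, namely that it is \emph{consistent under marginalization}: if $\rho_1 < \dots < \rho_m$ and one deletes some $\rho_j$, the joint law of the surviving $Y_{\rho_i}$ is again of the form \eqref{eq:multiple-release} on the $m-1$ remaining parameters. For $1<j<m$ this holds because the two adjacent bridging terms merge, $\CC(\rho_{j+1},\rho_j)\ast\CC(\rho_j,\rho_{j-1})=\CC(\rho_{j+1},\rho_{j-1})$, where the identity is forced by $\DD(\rho_{j-1})=\DD(\rho_{j+1})\ast\CC(\rho_{j+1},\rho_j)\ast\CC(\rho_j,\rho_{j-1})$ together with the uniqueness of the bridging distribution in \Cref{def:decomposable}; deleting $\rho_1$ merely drops the outermost $W$-term, and deleting $\rho_m$ replaces $Z_{\rho_m}$ by something distributed as $\DD(\rho_m)\ast\CC(\rho_m,\rho_{m-1})=\DD(\rho_{m-1})$.

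\textbf{Base case and reduction.} For $|M|=0$ the algorithm returns $f(\X)+Z_{\rho_k}$ with $Z_{\rho_k}\sim\DD(\rho_k)$, which is \eqref{eq:multiple-release} on a single parameter. For the inductive step, suppose the claim holds after $n$ releases and a fresh parameter is processed; I would relabel so that the $n+1$ parameters in play are $\rho_1<\dots<\rho_{n+1}$ and the new one is $\rho_k$. By the consistency property, the $n$ outputs already in $M$ carry exactly the marginal of \eqref{eq:multiple-release} on $n+1$ parameters obtained by deleting coordinate $k$. Hence it suffices to show that, conditioned on those $n$ recorded values, the algorithm draws $Y_{\rho_k}$ from the conditional law of the $k$-th coordinate of \eqref{eq:multiple-release} given the other $n$ coordinates.

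\textbf{Case analysis.} I would then read that conditional law directly off \eqref{eq:multiple-release} in the three remaining branches of \Cref{alg:generic-sampling}. If $k=1$, \eqref{eq:multiple-release} gives $Y_{\rho_1}=Y_{\rho_2}+W_{\rho_2,\rho_1}$ with $W_{\rho_2,\rho_1}\sim\CC(\rho_2,\rho_1)$ independent of everything else, so the algorithm's unconditioned sample is correct. If $1<k<n+1$, then $Y_{\rho_k}=Y_{\rho_{k+1}}+W_{\rho_{k+1},\rho_k}$ while the displacement between the two neighbours satisfies $Y_{\rho_{k-1}}-Y_{\rho_{k+1}}=W_{\rho_{k+1},\rho_k}+W_{\rho_k,\rho_{k-1}}$, and the pair $(W_{\rho_{k+1},\rho_k},W_{\rho_k,\rho_{k-1}})$ is independent of the remaining coordinates; so the conditional law of $W_{\rho_{k+1},\rho_k}$ given all $n$ other releases coincides with its law given just $Y_{\rho_{k-1}}-Y_{\rho_{k+1}}$, which is precisely the conditioned draw in that branch. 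If $k=n+1$, \eqref{eq:multiple-release} gives $Y_{\rho_{n+1}}=f(\X)+Z_{\rho_{n+1}}$ and $Y_{\rho_n}-f(\X)=Z_{\rho_{n+1}}+W_{\rho_{n+1},\rho_n}$ with this pair independent of the rest, again matching the branch. Adding $(\rho_k,Y_{\rho_k})$ to $M$ then restores the invariant on $n+1$ parameters.

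\textbf{Main obstacle.} The computations above are bookkeeping; the genuine content is the two ``independence of the rest'' claims invoked in the last two cases, i.e.\ that the bridging pair adjacent to the new parameter is independent of the remaining coordinates given the neighbouring releases. To establish them I would appeal to the explicit additive structure of \eqref{eq:multiple-release}: any coordinate $Y_{\rho_i}$ with $i\ge k+1$ is a function of $Z_{\rho_m}$ and of bridging terms with both indices at least $\rho_{k+1}$ (hence depends on the adjacent pair only through $Y_{\rho_{k+1}}$), whereas any coordinate with $i\le k-1$ equals $Y_{\rho_{k-1}}$ plus bridging terms with both indices at most $\rho_{k-1}$; since these two families of ``other'' noise variables are independent of each other and of the adjacent pair, the required conditional independence follows. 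The only further point I would make explicit is the uniqueness of the bridging distributions underpinning the consistency property, which the setup already grants.
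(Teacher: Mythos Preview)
Your proof follows the same inductive case analysis as the paper's: both argue by induction on $|M|$, split the inductive step into the three algorithm branches ($k=1$, interior $k$, and $k$ maximal), and read the required conditional law of $Y_{\rho_k}$ directly off \eqref{eq:multiple-release}; your write-up is in fact more careful than the paper's, which leaves the marginalization-consistency step and the conditional-independence justification implicit.

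One small caveat: the uniqueness of the bridging distribution that you invoke to force $\CC(\rho_{j+1},\rho_j)\ast\CC(\rho_j,\rho_{j-1})=\CC(\rho_{j+1},\rho_{j-1})$ is not actually asserted by \Cref{def:decomposable}, which only posits existence of \emph{some} $\CC(\rho_2,\rho_1)$. It does hold for every concrete noise family treated in the paper (Gaussian, Laplace, Poisson, exponential) because their characteristic functions are nowhere zero, so the convolution factor is determined; hence the gap is harmless in context, but you should not write that ``the setup already grants'' uniqueness.
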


\begin{proof}
    For $M=\emptyset$, we have that $Y_{\rho_k} = f(\X) + \cD(\rho_k) =\mathcal{A}_{f,\rho_k}(\X)$ on lines 2-3, as expected, where $\cA$ is our \IAN noise mechanism releasing $f$ with $\rho$-privacy.
    For the remaining cases, assume $M$ contains $m-1$ releases with the correct joint distribution, and we are generating a new release at privacy level $\rho_k$.
    We will argue that \Cref{alg:generic-sampling} produces a $Y_{\rho_k}$ whose joint distribution with the releases in $M$ matches \eqref{eq:multiple-release}.
    For $k=1$, observe that $Y_{\rho_1} = Y_{\rho_2} + W_{\rho_{2}, \rho_{1}}$, and so lines 5-6 are correct.
    For $1 < k < m$, note that $Y_{\rho_k} = Y_{\rho_{k+1}} + W_{\rho_{k+1}, \rho_{k}}$ and ${Y_{\rho_{k-1}} = Y_{\rho_k} + W_{\rho_{k}, \rho_{k-1}}}$, which combined allows us to identify the correct conditional distribution.
    $Y_{\rho_k}$ is given by $Y_{\rho_{k+1}} + W_{\rho_{k+1}, \rho_{k}}$, conditioned on $W_{\rho_{k+1}, \rho_{k}} + W_{\rho_{k}, \rho_{k-1}} = Y_{\rho_{k-1}} - Y_{\rho_{k+1}}$, matching lines 8-9.
    For the final case of $k=m$, then $Y_{\rho_m} = f(\X) + Z_{\rho_m}$ and we have that $Y_{\rho_{m-1}} = Y_{\rho_m} + W_{\rho_{m}, \rho_{m-1}}$.
    It follows that the correct noise distribution is $Y_{\rho_m} = f(\X) + Z_{\rho_{m}}$, conditioned on $Z_{\rho_m} + W_{\rho_{m}, \rho_{m-1}} = Y_{\rho_{m-1}} - f(\X)$, matching lines 11-12.
    An inductive argument identical to that given in the proof of \Cref{lemma:additive_noise_means_lossless} completes the proof.
\end{proof}

\subsection{Simplification and Removing Dependency on the Dataset}
Note that $f(\X)$ is only used in \Cref{alg:generic-sampling} when a more accurate release is generated (lines 1 and 10).
In the remaining cases we are only adding noise to past releases.
This already allows us to simplify the algorithm, and argue for not having to store $f(\X)$ in memory indefinitely.
The algorithm reduces to the case (see line~7 of \cref{alg:generic-sampling}) where only the two closest releases are combined into a new one.
Such an algorithm is given in \Cref{alg:generic-sampling-simpler}, together with \Cref{lemma:generic_sampling_simplified}.

\begin{algorithm}[t]
\caption{$\releaseGenericS$}
\label{alg:generic-sampling-simpler}
\begin{algorithmic}[1]
\PARAMETERS Noise distribution family $\mathcal{D}(\rho)$ satisfying a convolution preorder,\\
bridging noise distribution $\mathcal{C}(\rho, \rho')$ where $\mathcal{D}(\rho') * \mathcal{C}(\rho', \rho) = \mathcal{D}(\rho)$ for $0 < \rho < \rho'$
\INPUTS Privacy parameter $\rho$, set of releases $M$
\STATE Find ${(\rho_k, Y_{\rho_k}), (\rho_{k+1}, Y_{\rho_{k+1}})\in M}$ such that $\rho\in [\rho_k, \rho_{k+1})$ and $\forall (\rho', \cdot)\in M: \rho' \notin (\rho_k, \rho_{k+1})$\label{line:closest-rhos}
\STATE Sample $W_{\rho_{k+1}, \rho}$ from $\mathcal{C}(\rho_{k+1}, \rho)$ conditioned on $W_{\rho_{k+1}, \rho} + W_{\rho, \rho_{k}} = Y_{\rho_{k}} - Y_{\rho_{k+1}}$
\STATE Set $Y_\rho = Y_{\rho_{k+1}} + W_{\rho_{k+1}, \rho}$
\STATE Set $M = M \cup \{(\rho, Y_\rho)\}$
\STATE \textbf{Return} $Y_\rho$
\end{algorithmic}
\end{algorithm}

\begin{lemma}\label{lemma:generic_sampling_simplified}
     Let $M=\{(\rho_0, Y_{\rho_0}), (\rho_\infty, Y_{\rho_\infty})\}$ for $Y_{\rho_\infty} = \mathcal{A}_{\rho_\infty, f}(x)$, and $Y_{\rho_0} = Y_{\rho_\infty} + W_{\rho_\infty, \rho_0}$.
     Then running \Cref{alg:generic-sampling-simpler} with input $M$ and a set of privacy parameters $\{\rho_i : i\in[m]\}\subset (\rho_0, \rho_\infty)$ (processed in arbitrary order), will produce a set of outputs $\{Y_{\rho_i} : i\in[m]\}$ with distribution given by \cref{eq:multiple-release}.
     Moreover, the set $M$ will at all times satisfy $\rho_\infty$-privacy.
\end{lemma}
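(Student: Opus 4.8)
The plan is to adapt the inductive scheme behind \Cref{lemma:additive_noise_means_lossless} and \Cref{lemma:generic_sampling}, but starting from the two-element state $M=\{(\rho_0,Y_{\rho_0}),(\rho_\infty,Y_{\rho_\infty})\}$ instead of $M=\emptyset$, and to exploit the fact that this choice of $M$ pins every subsequent query into the \enquote{interior} branch of \Cref{alg:generic-sampling} (the case $1<k<m$), which is the only branch that never references $f(\X)$. For the base case one checks directly that, since $Y_{\rho_0}=Y_{\rho_\infty}+W_{\rho_\infty,\rho_0}$ with $W_{\rho_\infty,\rho_0}\sim\mathcal{C}(\rho_\infty,\rho_0)$ drawn independently of the data, the pair $(Y_{\rho_0},Y_{\rho_\infty})$ has exactly the law that \eqref{eq:multiple-release} prescribes for the two-parameter family $\{\rho_0,\rho_\infty\}$.

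I would then process the $\rho_i$'s one at a time, in whatever order they arrive, maintaining the invariant that the current contents of $M$ are distributed as in \eqref{eq:multiple-release} for the sorted set of parameters currently stored, a set that always contains $\rho_0$ and $\rho_\infty$ as its extremes. Hence a newly arriving $\rho\in(\rho_0,\rho_\infty)$ always lies strictly between two consecutive stored parameters $\rho_k<\rho<\rho_{k+1}$, so the search step of \Cref{alg:generic-sampling-simpler} succeeds. Inserting $\rho$ into the decomposition \eqref{eq:multiple-release} for the enlarged parameter set replaces the single bridging variable $W_{\rho_{k+1},\rho_k}\sim\mathcal{C}(\rho_{k+1},\rho_k)$ by the independent pair $W_{\rho_{k+1},\rho}\sim\mathcal{C}(\rho_{k+1},\rho)$, $W_{\rho,\rho_k}\sim\mathcal{C}(\rho,\rho_k)$ and leaves all other noise terms untouched; this is distributionally consistent since $\mathcal{D}(\rho_{k+1})\ast\mathcal{C}(\rho_{k+1},\rho)\ast\mathcal{C}(\rho,\rho_k)=\mathcal{D}(\rho_k)=\mathcal{D}(\rho_{k+1})\ast\mathcal{C}(\rho_{k+1},\rho_k)$, so $\mathcal{C}(\rho_{k+1},\rho)\ast\mathcal{C}(\rho,\rho_k)=\mathcal{C}(\rho_{k+1},\rho_k)$ by uniqueness of the deconvolution. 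Because the stored releases are a telescoping, hence invertible, deterministic image of the noise terms, conditioning on \emph{all} past releases is equivalent to conditioning on those noise terms, and in particular constrains the new pair only through $W_{\rho_{k+1},\rho}+W_{\rho,\rho_k}=Y_{\rho_k}-Y_{\rho_{k+1}}$; since that pair is independent of everything else, the conditional law of $W_{\rho_{k+1},\rho}$ is exactly \enquote{$\mathcal{C}(\rho_{k+1},\rho)$ conditioned on $W_{\rho_{k+1},\rho}+W_{\rho,\rho_k}=Y_{\rho_k}-Y_{\rho_{k+1}}$}, and $Y_\rho=Y_{\rho_{k+1}}+W_{\rho_{k+1},\rho}$ — precisely the two update lines of \Cref{alg:generic-sampling-simpler}. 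Marginalizing the final enlarged decomposition back to $\{\rho_i:i\in[m]\}$, which is legitimate because the $\mathcal{C}$'s compose, yields the distribution claimed by \eqref{eq:multiple-release}.

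For the \emph{moreover} claim I would observe that, writing the sorted stored parameters as $\rho_0=\sigma_1<\dots<\sigma_r=\rho_\infty$, equation \eqref{eq:multiple-release} gives $Y_{\sigma_k}=Y_{\rho_\infty}+\sum_{j=k}^{r-1}W_{\sigma_{j+1},\sigma_j}$, i.e.\ every stored release equals $Y_{\rho_\infty}=\mathcal{A}_{\rho_\infty,f}(\X)$ plus data-independent noise. Consequently the entire state of $M$ after any number of queries — and in particular the tuple of releases returned so far, even when the queried $\rho_i$ are chosen adaptively from past outputs — is a randomized post-processing of the single output $\mathcal{A}_{\rho_\infty,f}(\X)$, and hence inherits its $\rho_\infty$-privacy for any privacy notion closed under post-processing, such as $\varepsilon$-DP or $\rho$-zCDP.

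The step I expect to require the most care is the \enquote{locality} reduction in the inductive step: rigorously showing that the conditional distribution of a new release given all prior releases collapses to conditioning only on its two immediate neighbors. This hinges on writing \eqref{eq:multiple-release} for the enlarged index set correctly, pinpointing that only the bridging term between the two neighbors changes when $\rho$ is inserted, and invoking the uniqueness (or at least a consistent choice) of the deconvolution $\mathcal{C}$; the remaining computations are bookkeeping already essentially performed in the proof of \Cref{lemma:generic_sampling}.
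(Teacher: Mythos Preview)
Your argument is correct, and the core observation---that the initialization forces every new parameter into the interior case $1<k<m$ of \Cref{alg:generic-sampling}, which never touches $f(\X)$---is exactly the one the paper uses. The route differs in structure, however. The paper does not re-run the induction: it instead constructs the augmented input sequence $S'=(\rho_\infty,\rho_0,\rho_1,\dots,\rho_m)$, feeds it to \Cref{alg:generic-sampling} starting from $M'=\emptyset$, checks that the first two outputs coincide in law with the initialized $M$, and then observes that the remaining calls of \Cref{alg:generic-sampling} on $S'$ execute precisely the interior branch, which is line-for-line identical to \Cref{alg:generic-sampling-simpler}. Correctness then comes for free from \Cref{lemma:generic_sampling}. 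Your direct induction unpacks that reduction and re-derives the conditional-sampling justification (including the locality of the conditioning) from scratch; this is more self-contained and makes the independence structure explicit, at the cost of redoing work already encapsulated in \Cref{lemma:generic_sampling}. The paper's reduction is shorter and avoids the delicate step you flagged about collapsing the conditioning to the two neighbors, since that step is inherited wholesale from the earlier lemma. Your treatment of the $\rho_\infty$-privacy claim matches the paper's.
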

\begin{proof}
    The statement follows from carefully comparing to \Cref{alg:generic-sampling}.
    Let $S=(\rho_1, \dots, \rho_m)$ be a sequence of noise values from the lemma statement, and let $O=(Y_{\rho_1}, \dots, Y_{\rho_m})$ be the outputs.
    Consider a second sequence $S'=(\rho_\infty, \rho_0, \rho_1, \dots, \rho_m)$, and consider the corresponding sequence of outputs $O'=(Y_{\rho_\infty}', Y_{\rho_0}', Y_{\rho_1}', \dots, Y_{\rho_m}')$ from inputting $S'$ and $M'=\emptyset$ to \Cref{alg:generic-sampling}.
    We can directly check that $Y_{\rho_0}\stackrel{d}{=}Y_{\rho_0}'$ and $Y_{\rho_\infty}\stackrel{d}{=} Y_{\rho_\infty}'$.
    For the remaining outputs, note that just before each algorithm is called, their internal states $M$ and $M'$ are also identically distributed.
    Now, the fact that each $\rho_i\in(\rho_0, \rho_\infty)$ implies that the remaining outputs $Y_{\rho_1}', \dots, Y_{\rho_m}'$ are generated from lines 10-12 in \Cref{alg:generic-sampling}.
    Checking carefully, lines 1-3 in \Cref{alg:generic-sampling-simpler} are implementing the same routine, and since $M \stackrel{d}{=} M'$, it follows that $(Y_{\rho_1}, \dots, Y_{\rho_m}) \stackrel{d}{=} (Y_{\rho_1}', \dots, Y_{\rho_m}')$, and so the statement follows from \Cref{lemma:generic_sampling}.
    The last statement on the $\rho_\infty$-privacy of $M$ follows from \Cref{alg:generic-sampling-simpler} implements \Cref{eq:multiple-release}, and so each release in $M$ can at any time during execution be viewed as randomized post-processing of $Y_\infty$.
\end{proof}
Essentially, if we commit to supporting a bounded range of privacy parameters then we get a simpler algorithm.
\Cref{lemma:generic_sampling_simplified} also says something more:
If we commit to supporting a lowest level of privacy $\rho_\infty$, then \Cref{alg:generic-sampling-simpler} can be implemented in such a way that its internal state is $\rho_\infty$-private.
After initializing $M$ using the sensitive function $f(\X)$, we can erase $f(\X)$ from memory and $Y_\infty$ will contain enough private information to generate all future releases.
This could prove useful in settings where the time between releases is large, and we want to limit the private information leaked if the state of the algorithm were to be compromised.
This can be compared with the gradual release setting, where natural implementations would require consistent access to $f(\X)$.

\section{Omitted Proof for Gaussian Lossless Multiple Release}\label{appendix:omitted-proofs}

\begin{proof}[Proof of \Cref{lemma:gaussian_seq_ooo}]
    Throughout the proof we will ignore the factor $2$ in the denominator of the variance.
To argue that the values generated by the process match the distribution \Cref{lemma:gaussian_seq}, we will argue for increasing subsets of releases.
    Let $S_n = \{ \rho_k : k\in [n-1]\}$ where $0 < \rho_1 < \dots < \rho_n < \rho_\infty$ be the set of values in $S$ for which we have generated Gaussians at the start of the $n$\textsuperscript{th} round of the process.
    Note that we re-label the $\rho$'s between the rounds such that $\rho_k\in S_n$ always is the $k$\textsuperscript{th} smallest value in $S_n$.
    Our argument will proceed by induction: assume that all the Gaussians $\{Y_{\rho_k} : k\in [n]\}$ generated after $n$ rounds have the distribution given by \Cref{lemma:gaussian_seq}.
    Then we will show that $\{Y_{\rho_k} : k\in[n-1]\}\cup \{ Y_\rho \}$ has the same distribution as predicted by $S_{n}\cup\{\rho\}$ from invoking \Cref{lemma:gaussian_seq}.
    
    We begin with our base case.
    For $n=1$, we have that $\rho_l = 0$ and $\rho_r=\rho_\infty$, and so $Y_\rho = Y_{\rho_\infty} + \NN(0, 1/\rho - 1/\rho_\infty)$.
    Since $Y_{\rho_\infty}\sim\NN(\beta, 1/\rho_\infty)$, we have that $Y_\rho\sim\NN(\beta, 1/\rho)$, as expected, and so the base case passes.

    For $n\geq 2$, we first consider the following cases.
    
    \emph{Case 1}: $\rho_l = 0, \rho_r=\rho_1\in S_n$.
    In this case, $Y_\rho = Y_{\rho_1} + \NN(1/\rho - 1/\rho_1)$, and so $Y_\rho\sim\NN(\beta, 1/\rho)$.
    Since $\rho < \rho_1$, we have that $\forall i\in[n-1] : \mathrm{Cov}(Y_\rho, Y_{\rho_i}) = \mathrm{Cov}(Y_{\rho_1}, Y_{\rho_i}) = 1/\rho_i$, as expected for the release with the smallest value in $\{\rho\}\cup S_n$, and so the case is complete.

    \emph{Case 2}: $\rho_l = \rho_{n-1}, \rho_r = \rho_\infty = \infty$.
    We have to deal with the case where we have set $\rho_\infty=\infty$ separately, as it is a bit of a trick.
    Note that we get $Y_\rho = \frac{(1-\rho_{n-1})\beta + \rho_{n-1}Y_{\rho_{n-1}}}{\rho}  + \NN(0, \frac{1-\rho_{n-1}/\rho}{\rho})$, since $Y_{\rho_\infty} = \beta$ in this case.
    The end result is once more a sum of Gaussians, with mean $\beta$, and for the variance we can explicitly compute that $\var[Y_\rho] = \frac{\rho_{n-1}^2}{\rho_{n-1}\rho^2} - \frac{\rho - \rho_{n-1}}{\rho^2} = 1/\rho$, as expected.
    For the covariance, for $i\in[n-1] : \mathrm{Cov}(Y_\rho, Y_{\rho_i}) = \frac{\rho_{n-1}}{\rho}\mathrm{Cov}(Y_{\rho_{n-1}}, Y_{\rho_i}) = 1/\rho$, as expected for the largest value in $\{\rho\}\cup S_n$, and so we are done.
    Now we consider the most general case.
    
    \emph{Case 3}: $\rho_l\in S_{n-1}$ and $\rho_r \neq \infty$.
We begin with computing the variance of $Y$ using the hypothesis $\var[Y_{\rho_r}] = 1/\rho_r$ and $\var[Y_{\rho_l}] = 1/\rho_l$
    \begin{align*}
        \var[Y_\rho] &=  \frac{(\rho_l^{-1} -\rho^{-1})(\rho^{-1} - \rho_{r}^{-1})}{\rho_l^{-1} - \rho_{r}^{-1}}
        + \frac{(\rho_l^{-1} - \rho^{-1})^2\var[Y_{\rho_r}] + (\rho^{-1} -\rho_{r}^{-1})^2\var[Y_{\rho_l}]}{(\rho_l^{-1} - \rho_{r}^{-1})^2}\\
        &= \frac{1}{(\rho_l^{-1} - \rho_{r}^{-1})^2}\bigg[(\rho_l^{-1} - \rho^{-1})(\rho^{-1} - \rho_{r}^{-1})(\rho_l^{-1} - \rho_{r}^{-1}) +\rho_{r}^{-1}(\rho_l^{-1} - \rho^{-1})^2 + \rho_l^{-1}(\rho^{-1} - \rho_{r}^{-1})^2 \bigg]\\
        &= \frac{1}{(\rho_l^{-1} - \rho_{r}^{-1})^2}\cdot\rho^{-1}(\rho_l^{-1} - \rho_{r}^{-1})^2 = \frac{1}{\rho}\,,
    \end{align*}
    where the third equality follows from applying the identity
    \begin{align*}
        (a-b)(b-c)(a-c) + c(a-b)^2 + a(b-c)^2 = b(a-c)^2\,,
    \end{align*}
    to the expression in the brackets for $a = \rho_l^{-1}, b = \rho^{-1}$ and $c = \rho_{r}^{-1}$, 
    proving the correctness of the variance.
    Furthermore, $Y_\rho$ is a sum of Gaussians and a convex combination of two Gaussians with mean $\beta$, so $Y_\rho\sim\NN(\beta, 1/\rho)$.
    What remains to show is that the covariances match up, which we do next.

    We start with the case where $\rho_r \neq \rho_\infty$, and so there exists $\rho_k, \rho_{k+1}\in S_n$ such that $\rho\in (\rho_k, \rho_k+1)$
    For $j\in[n-1]$, we therefore have that
    \begin{align*}
        \mathrm{Cov}(Y_\rho, Y_{\rho_j}) &=
        \mathrm{Cov}\bigg(\frac{(\rho_l^{-1} - \rho^{-1})Y_{\rho_r} + (\rho^{-1} - \rho_{r}^{-1})Y_{\rho_l}}{\rho_l^{-1} - \rho_{r}^{-1}}, Y_{\rho_j}\bigg)\\
        &= \frac{1}{\rho_l^{-1} - \rho_{r}^{-1}}\bigg[(\rho_l^{-1} - \rho^{-1})\mathrm{Cov}(Y_{\rho_r}, Y_{\rho_j}) + (\rho^{-1} - \rho_{r}^{-1})\mathrm{Cov}(Y_{\rho_l}, Y_{\rho_j})\bigg]\\
        &= \frac{1}{\rho_l^{-1} - \rho_{r}^{-1}}\bigg[(\rho_l^{-1} - \rho^{-1})\min(\rho_{r}^{-1}, \rho_j^{-1}) + (\rho^{-1} - \rho_{r}^{-1})\min(\rho_{l}^{-1}, \rho_j^{-1})\bigg]\,.
\end{align*}
    We consider the cases of $j\leq k$ and $j > k$ separately.
    For $j\leq k$, we have that
    \begin{equation*}
        \mathrm{Cov}(Y_\rho, Y_{\rho_j}) = \frac{1}{\rho_l^{-1} - \rho_{r}^{-1}}\bigg[(\rho_l^{-1} - \rho^{-1})\rho_{r}^{-1} + (\rho^{-1} - \rho_{r}^{-1})\rho_{l}^{-1}\bigg]\enspace = \frac{1}{\rho}\,,
    \end{equation*}
    and similarly for $j \geq k+1$ we get
    \begin{equation*}
        \mathrm{Cov}(Y_\rho, Y_{\rho_j}) = \frac{1}{\rho_k^{-1} - \rho_{k+1}^{-1}}\bigg[(\rho_k^{-1} - \rho^{-1})\rho_{j}^{-1} + (\rho^{-1} - \rho_{k+1}^{-1})\rho_{j}^{-1}\bigg]\enspace = \frac{1}{\rho_j}\,.
    \end{equation*}
    It follows that $\mathrm{Cov}(Y_\rho, Y_{\rho_j}) = \frac{\Delta_2^2}{2\max(\rho, \rho_j)}$ for $j\in[m]$, and so we are done with this part of the covariances.

    For the last step, we consider what happens when $\rho_r=\rho_\infty < \infty$.
    In this case, $\rho_l = \rho_{n-1}$ and so for $j\in[n-1]$:
    \begin{align*}
        \mathrm{Cov}(Y_\rho, Y_{\rho_j}) &= \mathrm{Cov}\left(\frac{(1-\rho_{n-1}/\rho)Y_{\rho_\infty} + (\rho_{n-1}/\rho - \rho_{n-1}/\rho_\infty)Y_{\rho_{n-1}}}{1-\rho_{n-1}/\rho_\infty}, Y_{\rho_j}\right)\\
        &= \frac{1-\rho_{n-1}/\rho}{1-\rho_{n-1}/\rho_\infty}\mathrm{Cov}(Y_{\rho_\infty}, Y_{\rho_j}) + \frac{\rho_{n-1}/\rho - \rho_{n-1}/\rho_\infty}{1-\rho_{n-1}/\rho_\infty}\mathrm{Cov}(Y_{\rho_{n-1}}, Y_{\rho_j})\\
        &= \frac{1-\rho_{n-1}/\rho}{\rho_\infty-\rho_{n-1}} + \frac{1/\rho - 1/\rho_\infty}{1-\rho_{n-1}/\rho_\infty} = \frac{1 - \rho_{n-1}/\rho + \rho_\infty/\rho - 1}{\rho_\infty - \rho_{n-1}} = 1/\rho\,,
    \end{align*}
    and now we are done.
\end{proof}

\section{Details on (Efficient) Weakly Lossless Gradual Release of Sparse Gaussian Histograms}\label{sec:efficient-sparse}
\Cref{alg:gradual-release-histogram} implements \emph{weakly lossless gradual release} of private histograms.
We prove this next.
\begin{proof}[Proof of \Cref{lemma:gradual_histograms}]
    Observe that for any increasing sequence $(\rho_k)_{k\in m}$, the corresponding sequence of variables $Z$ on line~3 produced by the algorithm, have the same distribution as in \Cref{lemma:inverse-variance-weighting-is-nice} for $\beta=0$.
    The $Y$ that is ultimately returned, however, is a post-processing of $H(\X) + Z$, which has the same distribution as \Cref{lemma:inverse-variance-weighting-is-nice} for $\beta=H(\X)$.
    Therefore \Cref{alg:gradual-release-histogram} is implementing lossless gradual release for the Gaussian mechanism applied to $H(\X)$, combined with a non-invertible post-processing.
    The lemma statement follows from \Cref{cor:noninv_postproc_meta}.
\end{proof}
Note that one \emph{only} has to store the noisy terms from the preceding round to implement \cref{alg:gradual-release-histogram}.
Nevertheless, it might be infeasible, say, when the domain is really large, to sample the noise for the zero coordinates.
\Cref{alg:efficient-gradual-release} uses the computational trick in \Cref{lemma:smart-histogram-sampling} to speed up this computation.
The algorithm is static, where the privacy budgets are fixed upfront but can easily be converted to an online algorithm.
Recall that $U(H(\X)) = \{i \in [d]: H(\X)_i \neq 0\}$ is the support of the histogram.
\begin{algorithm}[t]
\caption{\EGLSGH}
\label{alg:efficient-gradual-release}
\begin{algorithmic}[1]
\PARAMETERS $\ell_2$-sensitivity $\Delta_2$
\INPUTS Private histogram $H(\X)$,
privacy budgets $\rho_1 < \dots < \rho_m$,
thresholds $\tau_1,\dots,\tau_m$
\STATE Let $S^{(0)} = U(H(\X))$ {\bf \color{blueish} // Tracking counts that have already been released}
\STATE Also let $Z^{(0)} = \{0\}^d$ and $\rho_0 = 0$.
\STATE $\forall r\in [m]:$ let distribution $\mathcal{D}^{(r)} =\NN\left(0, \frac{1}{2}\Delta_2^2 (\rho_r - \rho_{r-1})\right)$
\FOR{each round $r \in[m]$} 
    \STATE Initialize $Y^{(r)} = \{0\}^d$
    \FOR{each tracked count in preceding round $j\in S^{(r-1)}$}
        \STATE Draw fresh noise $\tilde{Z}^{(r)}_j\sim\mathcal{D}^{(r)}$\label{eff:fresh}
        \STATE Update aggregate noise $Z^{(r)}_j = \frac{\rho_{r-1}}{\rho_{r}}Z^{(r-1)}_j + \frac{1}{\rho_r}\tilde{Z}^{(r)}_j$
        \IF{$H(\X)_j + Z^{(r)}_j > \tau_r$}
            \STATE $Y^{(r)}_j = H(\X)_j + Z^{(r)}_j$
        \ENDIF
    \ENDFOR
    \\ {\bf \color{blueish} // Simulating Noise for zero counts}
    \STATE Let $\tilde{Z}^{(k)}\sim\mathcal{D}^{(k)}$ for $k\in[r]$ be random variables.
    \STATE Compute $p^{(r)} = \Pr\left[\sum_{k=1}^r \tilde{Z}^{(k)} > \tau_r \rho_r\ |\ \{\forall \ell\in[r-1] : \sum_{k=1}^\ell \tilde{Z}^{(k)} \leq \rho_\ell\tau_\ell \} \right]$
    \STATE Draw $q\sim\operatorname{Binomial}\left(d - \lvert S^{(r-1)}\rvert, p^{(r)}\right)$.
    \STATE Select a subset $Q\subseteq [d]\setminus S^{(r-1)}$ uniformly at random of size $q$.
    \FOR{each index $j \in Q$}\label{eff:bin}
        \STATE Initialize $Z^{(r)}_j =0$
        \FOR{$k\in[r-1]$}
            \STATE Draw fresh noise $\tilde{Z}^{(k)}_j \sim \mathcal{D}^{(k)}$ conditioned on $\tilde{Z}^{(k)}_j \leq \rho_k\tau_k - Z^{(r)}_j$
            \STATE Update aggregate noise $Z^{(r)}_j = Z^{(r)}_j + \tilde{Z}^{(k)}_j$
        \ENDFOR
        \STATE Draw fresh noise $\tilde{Z}^{(k)}_j \sim \mathcal{D}^{(k)}$ conditioned on $\tilde{Z}^{(k)}_j > \rho_k\tau_k - Z^{(r)}_j$
        \STATE Update aggregate noise $Z^{(r)}_j = Z^{(r)}_j + \tilde{Z}^{(k)}_j$
        \STATE $Y^{(r)}_j = H(\X)_j + Z^{(r)}_j$ {\color{blueish}\bf //  Guaranteed to be above the threshold.}
    \ENDFOR
    \STATE Set $S^{(r)} = S^{(r-1)}\cup Q$
    \STATE\textbf{output} private histogram $Y^{(r)}$
\ENDFOR
\end{algorithmic}
\end{algorithm}

We proceed to give proofs for \Cref{lemma:efficient-histogram-id}~and~\ref{lemma:sampled_gaussians}.
\begin{proof}[Proof sketch for \Cref{lemma:efficient-histogram-id}.]
    The claim directly follows an inductive argument.
    For the base case, note that the first iteration of \Cref{alg:efficient-gradual-release} is running the same routine as described by \Cref{lemma:smart-histogram-sampling}, and so $\hat{Y}^{(1)}$ must be identically distributed to $Y^{(1)}$.
    For our inductive hypothesis, assume that the subsequences $(Y^{(1)}, \dots, Y^{(k)})$ and $(\hat{Y}^{(1)}, \dots, \hat{Y}^{(k)})$ are identically distributed.
    Note that for the $(k+1)$\textsuperscript{st} release, \Cref{alg:efficient-gradual-release} handles all the true nonzero counts and every zero count that has ever exceeded the threshold in a past round in the same way.
    These counts in $\hat{Y}^{(k+1)}$ and $Y^{(k+1)}$ clearly have the same distribution.
    
    Now, note that each of the remaining zero counts have up to and including the $k$\textsuperscript{th} round been reported as zero in each round, and so their probability of exceeding the threshold, $p^{(k+1)}$, should be equal across all of them.
    
    The sampling performed by \Cref{alg:efficient-gradual-release} is structured in the same manner as \Cref{lemma:smart-histogram-sampling}, but with sampling probability $p^{(k+1)}$, and the noise terms added for any zero count exceeding the threshold, are different.
    For the distributions to match, the probability of exceeding the threshold and the noise distribution for an element chosen to exceed the threshold are more complex.
    The probability of exceeding the threshold is conditioned on prefixes of Gaussians, which correctly simulates the probability of not exceeding the threshold at any prior round until first doing so in the $(k+1)$\textsuperscript{st} one.
    The noise added once selected to exceed the threshold is a sum of truncated Gaussians, which simulate the same event.
    As the principle is built on the same logic as \Cref{lemma:smart-histogram-sampling}, we have that $(Y^{(1)}, \dots, Y^{(k+1)})$ and $(\hat{Y}^{(1)}, \dots, \hat{Y}^{(k+1)})$ are identically distributed, and so the claim is true by induction.
\end{proof}
\begin{proof}[Proof of \Cref{lemma:sampled_gaussians}]
    Note that each of the $k$ non-zero true counts will, in each round, have fresh noise added on line~7.
    If a non-zero count is selected by the binomial sampling in the $k$\textsuperscript{th} round, the for-loop starting on line~17 will result in sampling $k$ noise terms, after which for future rounds it will get treated as a non-zero count.
    It follows that non-zero entries contribute $km$ samples, and zero counts exceeding the threshold contribute $cm$ samples.\end{proof}
    
\section{Definitions}\label{sec:preliminaries}

We begin with the most common version of differential privacy.
\begin{definition}[$(\epsilon, \delta)$-Differential Privacy~\cite{dwork2014algorithmic}]
    A randomized algorithm $\cM : \mathcal{X}\to\mathcal{Y}$ is ($\varepsilon, \delta)$-differentially private if for all
    $S\subseteq\mathsf{Range}(\cM)$ and all pairs of neighboring inputs $\X, \X' \in \cX$, it holds that
    \begin{equation*}
        \Pr[\cM(\X)\in S] \leq \exp(\epsilon)\Pr[\cM(\X')\in S] + \delta\,,
    \end{equation*}
    where $(\epsilon, 0)$-DP is referred to as $\varepsilon$-DP.
\end{definition}

Zero-Concentrated Differential Privacy (zCDP) is a notion of differential privacy that provides a simple but accurate analysis of privacy loss, particularly under composition.
\begin{definition}[\citet{bun_steinke_2016}, $\rho$-zCDP]
Let $\rho > 0$.
An algorithm ${\cM : \cX \rightarrow \cY}$ satisfies $\rho$-zCDP, 
if for all $\alpha > 1$ and all pairs of neighboring inputs $\X, \X' \in \cX$, it holds that 
\begin{equation*}
    \begin{array}{c}
        D_\alpha \left( 
            \cM \left( \X \right) || \cM \left( \X' \right) 
        \right)
        \le \rho \alpha, 
    \end{array}
\end{equation*}
where 
$
    D_\alpha \left( 
        \cM \left( \X \right) || \cM \left(\X' \right)\right)
$ 
denotes the $\alpha$-Rényi divergence between two output distributions of $\cM(\X)$ and $\cM(\X')$.
\end{definition}

\begin{lemma}[\citet{bun_steinke_2016}, Composition]\label{fact:composition-zdp}
    If $\cM_1(\X)$ and $\cM_2(\X)$ satisfy $\rho_1$-zCDP and $\rho_2$-zCDP, respectively,
    then $(\cM_1(\X), \cM_2(\X))$ satisfies $(\rho_1 + \rho_2)$-zCDP.
\end{lemma}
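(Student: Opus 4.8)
The plan is to reduce the claim to the additivity of the R\'enyi divergence over product distributions. Fix an order $\alpha > 1$ and a pair of neighboring inputs $\X, \X'$. Under the (non-adaptive) reading of the statement, $\cM_1$ and $\cM_2$ use independent internal randomness, so the joint output $(\cM_1(\X), \cM_2(\X))$ is distributed as the product measure $\cM_1(\X)\otimes\cM_2(\X)$, and likewise $(\cM_1(\X'), \cM_2(\X'))$ is distributed as $\cM_1(\X')\otimes\cM_2(\X')$. It therefore suffices to establish the identity
\[
    D_\alpha\bigl(\cM_1(\X)\otimes\cM_2(\X)\,\|\,\cM_1(\X')\otimes\cM_2(\X')\bigr)
    = D_\alpha\bigl(\cM_1(\X)\|\cM_1(\X')\bigr) + D_\alpha\bigl(\cM_2(\X)\|\cM_2(\X')\bigr).
\]

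To prove this identity I would recall the definition $D_\alpha(P\|Q) = \tfrac{1}{\alpha-1}\log \EX_{x\sim Q}\bigl[(P(x)/Q(x))^\alpha\bigr]$ (interpreting the ratio as a Radon--Nikodym derivative in the continuous case) and expand the expectation for the product measures. Since the density of a product measure is the product of the marginal densities, the integrand factorizes across the two coordinates, and by independence the expectation splits into a product of two expectations, one per coordinate. Taking logarithms turns this product into the sum of the two per-coordinate R\'enyi divergences, which is exactly the displayed identity. This is the only computation involved and it is routine.

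Finally, combining the identity with the hypotheses $D_\alpha(\cM_i(\X)\|\cM_i(\X')) \le \rho_i\alpha$ gives
\[
    D_\alpha\bigl((\cM_1(\X),\cM_2(\X))\,\|\,(\cM_1(\X'),\cM_2(\X'))\bigr) \le \rho_1\alpha + \rho_2\alpha = (\rho_1+\rho_2)\alpha,
\]
and since $\alpha > 1$ and the neighboring pair $\X,\X'$ were arbitrary, $(\cM_1,\cM_2)$ satisfies $(\rho_1+\rho_2)$-zCDP. The main subtlety to be careful about is the independence assumption: if $\cM_2$ were allowed to depend adaptively on the output of $\cM_1$, the density of the joint distribution would no longer factor into marginals, and one would instead need a chain-rule argument for R\'enyi divergence---conditioning on the first output and bounding the conditional divergence of the second mechanism uniformly over that output. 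For the statement exactly as phrased here, the product-measure factorization is all that is required.
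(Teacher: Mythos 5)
Your proof is correct: the paper states this lemma as a cited result from Bun and Steinke (2016) without proof, and your argument via the additivity of R\'enyi divergence over product measures is exactly the standard one for the non-adaptive statement as phrased. Your closing remark correctly identifies that the adaptive version would require a conditional (chain-rule) bound rather than the product factorization.
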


\begin{lemma}[\citet{bun_steinke_2016}, Gaussian Mechanism]
Let $f : \mathcal{X} \to \R^d$ be a query with $\ell_2$-sensitivity $\Delta_2$.
Consider the mechanism $\mathcal{G}_{f, \rho} : \mathcal{X} \to \R^d$ that, on private input $\X$, releases a sample from $f(\X) + \NN\left(0, \frac{\Delta_2^2}{2\rho}\right)^d$.
Then $\mathcal{G}_{f, \rho}$ satisfies $\rho$-zCDP.
\end{lemma}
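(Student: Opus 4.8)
The plan is to reduce to the well-known closed form for the Rényi divergence between two spherical Gaussians that share a common covariance. Fix a pair of neighboring inputs $\X, \X' \in \cX$ and write $\mu = f(\X)$, $\mu' = f(\X')$, and $\sigma^2 = \Delta_2^2/(2\rho)$, so that $\mathcal{G}_{f,\rho}(\X) \sim \NN(\mu, \sigma^2)^d$ and $\mathcal{G}_{f,\rho}(\X') \sim \NN(\mu', \sigma^2)^d$. Since both output distributions are products of $d$ independent one-dimensional Gaussians and the $\alpha$-Rényi divergence is additive over product distributions, it suffices to handle each coordinate separately and sum.

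First I would establish the one-dimensional fact that $D_\alpha(\NN(a, \sigma^2) \,\|\, \NN(b, \sigma^2)) = \frac{\alpha (a-b)^2}{2\sigma^2}$ for every $\alpha > 1$. This is a direct Gaussian integral: writing $D_\alpha$ as $\frac{1}{\alpha-1}\log \int p(x)^\alpha q(x)^{1-\alpha}\,dx$, the integrand is (up to a constant) an unnormalized Gaussian density in $x$, and completing the square in the exponent leaves exactly the stated value after the normalizing constants are accounted for. Summing over the $d$ coordinates then gives $D_\alpha(\mathcal{G}_{f,\rho}(\X) \,\|\, \mathcal{G}_{f,\rho}(\X')) = \frac{\alpha \|\mu - \mu'\|_2^2}{2\sigma^2}$.

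Finally, I would invoke the sensitivity bound $\|\mu - \mu'\|_2 = \|f(\X) - f(\X')\|_2 \le \Delta_2$ together with the choice $\sigma^2 = \Delta_2^2/(2\rho)$, obtaining $D_\alpha(\mathcal{G}_{f,\rho}(\X) \,\|\, \mathcal{G}_{f,\rho}(\X')) \le \frac{\alpha \Delta_2^2}{2\sigma^2} = \alpha \rho$, which is precisely the inequality required by the definition of $\rho$-zCDP. Since the pair $\X, \X'$ and the order $\alpha$ were arbitrary, this finishes the proof.

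The only mildly technical ingredient is the one-dimensional Gaussian integral; everything else is bookkeeping (additivity of Rényi divergence across independent coordinates and the definition of $\ell_2$-sensitivity). I expect the main obstacle, to the extent there is one, to be tracking the constants through the completed square so that the $\frac{1}{\alpha-1}$ prefactor cancels cleanly and the result is $\alpha\rho$ rather than, say, $\alpha(\alpha-1)\rho$; this is the step where a sign or coefficient slip is easiest to make.
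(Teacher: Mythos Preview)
Your proof is correct and follows the standard route: the closed-form R\'enyi divergence between equal-variance Gaussians, additivity across independent coordinates, and the sensitivity bound combine exactly as you describe to give $D_\alpha \le \alpha\rho$.

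Note, however, that the paper does not actually prove this lemma; it is stated in the preliminaries as a citation to \citet{bun_steinke_2016} without any accompanying argument. So there is no ``paper's own proof'' to compare against. Your sketch is precisely the argument that appears in the cited reference (their Proposition~1.6 / Lemma~2.5), and your warning about tracking the $\frac{1}{\alpha-1}$ prefactor is apt but not a real obstacle once the square is completed carefully.
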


One way to uniquely describe probability distributions is via their \emph{Characteristic Functions}. 
\begin{definition}[Characteristic Function]\label{def:CF}
The characteristic function of a random variable $X$ is defined as $\varphi_X(t) = \EX[e^{itX}]$.
\end{definition}

For proving \cref{lem:lap-conv,lem:exp-conv}, we will use a nice property of CFs for the convolution of two random variables:
\begin{lemma}[Convolution of Characteristic Functions]
Let $X,Y$ be two independent RVs with CFs $\varphi_X(t)$ and $\varphi_Y(t)$ respectively, then $\varphi_{X+Y}(t) = \varphi_X(t) \cdot \varphi_Y(t)$.
\end{lemma}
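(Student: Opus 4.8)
The plan is to unwind the definition of the characteristic function and exploit independence to factor an expectation. First I would write $\varphi_{X+Y}(t) = \EX\!\left[e^{it(X+Y)}\right]$ directly from \Cref{def:CF}. Then I would use the elementary identity $e^{it(X+Y)} = e^{itX}\,e^{itY}$, valid pointwise since the complex exponential turns sums into products, to rewrite this as $\EX\!\left[e^{itX}\,e^{itY}\right]$.

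The key step is then to pass the product through the expectation. Since $X$ and $Y$ are independent, the random variables $e^{itX}$ and $e^{itY}$ are independent as well (being measurable functions of $X$ and $Y$ respectively), so $\EX\!\left[e^{itX}\,e^{itY}\right] = \EX\!\left[e^{itX}\right]\cdot\EX\!\left[e^{itY}\right] = \varphi_X(t)\,\varphi_Y(t)$, which is the claim. To be careful about the factorization of expectations of complex-valued random variables, I would note that $|e^{itX}| = |e^{itY}| = 1$, so all expectations in sight are finite, and the factorization can be applied separately to the real and imaginary parts (each a product of bounded independent real random variables) and recombined.

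I do not expect any real obstacle here; this is a textbook fact, and the only point requiring a word of justification is that functions of independent random variables remain independent and that the expectation of their product factors, which is immediate from the boundedness of $e^{itX}$. The proof is two lines of display math with one sentence of commentary on independence.

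\begin{proof}
By \Cref{def:CF},
\begin{equation*}
    \varphi_{X+Y}(t) = \EX\!\left[e^{it(X+Y)}\right] = \EX\!\left[e^{itX}e^{itY}\right]\,.
\end{equation*}
Since $X$ and $Y$ are independent, so are $e^{itX}$ and $e^{itY}$; both are bounded (of modulus $1$), so the expectation of their product factors:
\begin{equation*}
    \EX\!\left[e^{itX}e^{itY}\right] = \EX\!\left[e^{itX}\right]\EX\!\left[e^{itY}\right] = \varphi_X(t)\,\varphi_Y(t)\,. \qedhere
\end{equation*}
\end{proof}
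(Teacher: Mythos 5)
Your proof is correct and is the standard argument: rewrite $e^{it(X+Y)}$ as the product $e^{itX}e^{itY}$ and factor the expectation using independence (with boundedness guaranteeing everything is finite). This is clearly the argument the paper intends, and your write-up is in fact more careful than the paper's own, which erroneously invokes ``linearity of expectation'' and writes $\EX[e^{itX}] + \EX[e^{itY}]$ where the product $\EX[e^{itX}]\cdot\EX[e^{itY}]$ (justified by independence, not linearity) is what is actually needed.
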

\begin{proof}
Furthermore, by linearity of expectation, we have for the sum of $X+Y$:
\begin{equation*}
    \varphi_{X+Y}(t) = \EX[e^{it(X+Y)}] = \EX[e^{itX}] + \EX[e^{itY}] =\varphi_X(t)\cdot \varphi_X(t)\,.\qedhere
\end{equation*}
\end{proof}
 
\section{The Poisson Mechanism}\label{sec:poisson}

We are not aware of any explicit statements in the literature on the privacy guarantees obtained by adding Poisson distributed noise to a $d$-dimensional vector of integers.
However, since the Poisson distribution is the limiting distribution of binomial distributions with the same mean $\lambda = N p$, where $N$ is the number of trials, such bounds can be derived from existing bounds on the binomial mechanism.
For the sake of completeness, we include such statements based on the following theorem from~\cite{agarwal2018cpsgd}:

\begin{theorem}[\citet{agarwal2018cpsgd}]
For any \(\delta\), parameters \(N, p\) and sensitivity bounds \(\Delta_1, \Delta_2, \Delta_\infty\) such that
\[
Np(1 - p) \geq \max \left( 23 \log \left( \frac{10d}{\delta} \right), \, 2\Delta_\infty \right),
\]
the $d$-dimensional Binomial mechanism is \((\varepsilon, \delta)\)-differentially private for
\[
\varepsilon =
\frac{\Delta_2 \sqrt{2 \log \frac{1.25}{\delta}}}{\sqrt{N p (1 - p)}}
+ \frac{\Delta_2 c_p \sqrt{\log \frac{10}{\delta}} + \Delta_1 b_p}{N p (1 - p) (1 - \delta/10)}
+ \frac{\frac{2}{3} \Delta_\infty \log \frac{1.25}{\delta} + \Delta_\infty d_p \log \frac{20d}{\delta} \log \frac{10}{\delta}}{N p (1 - p)}.
\]
where
\[
d_p \triangleq \frac{4}{3} \cdot \left( p^2 + (1 - p)^2 \right), \quad
b_p \triangleq \frac{2(p^2 + (1 - p)^2)}{3} + (1 - 2p), \quad
c_p \triangleq \sqrt{2} \left( 3p^3 + 3(1 - p)^3 + 2p^2 + 2(1 - p)^2 \right).
\]
\end{theorem}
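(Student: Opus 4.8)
The plan is to recap the standard ``good-region plus privacy-loss expansion'' argument for integer-valued noise mechanisms, specialized to binomial noise. First I would fix a pair of neighboring datasets and write $a = f(\X) - f(\X') \in \mathbb{Z}^d$, so that $\|a\|_1 \le \Delta_1$, $\|a\|_2 \le \Delta_2$, and $\|a\|_\infty \le \Delta_\infty$. The mechanism outputs $f(\X) + Z$ with $Z_j$ i.i.d.\ $\operatorname{Bin}(N,p)$, so it suffices to control the privacy-loss random variable $\mathcal{L}(Z) = \sum_{j=1}^d \log\!\big(p_{N,p}(Z_j)/p_{N,p}(Z_j - a_j)\big)$, where $p_{N,p}$ denotes the binomial pmf. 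I would define a ``good event'' $\mathcal{G}$ on $Z$ — each coordinate lying within $O\!\big(\sqrt{Np(1-p)\log(d/\delta)}\big)$ of its mean $Np$ — and use a Chernoff/Bernstein tail bound for the binomial plus a union bound over the $d$ coordinates to show $\Pr[Z \notin \mathcal{G}] \le \delta$; the hypothesis $Np(1-p) \ge 23\log(10d/\delta)$ is exactly what calibrates the constant so this holds, and $Np(1-p) \ge 2\Delta_\infty$ additionally keeps $Z_j - a_j$ inside $\{0,\dots,N\}$ so the ratio is well defined on $\mathcal{G}$.

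Next, conditioned on $\mathcal{G}$, I would bound $\mathcal{L}(Z)$ by Taylor-expanding each summand. Writing $\log p_{N,p}(k) = \log\binom{N}{k} + k\log p + (N-k)\log(1-p)$, the $j$th term equals $a_j\log\frac{p}{1-p} + \big(\log\binom{N}{Z_j} - \log\binom{N}{Z_j-a_j}\big)$; expanding the log-binomial difference (equivalently via $\log\Gamma$) to second order around $k = Np$ splits $\mathcal{L}(Z)$, up to a third-order remainder, into (i) a part linear in the centered noise, roughly $\sum_j \frac{a_j}{Np(1-p)}(Z_j - Np)$, (ii) deterministic quadratic-in-$a_j$ ``bias'' terms, and (iii) a bounded remainder. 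The linear part is a sum of independent bounded centered variables, so a sub-Gaussian tail bound in terms of $\|a\|_2 = \Delta_2$ yields the leading contribution $\frac{\Delta_2\sqrt{2\log(1.25/\delta)}}{\sqrt{Np(1-p)}}$. On $\mathcal{G}$ the bias and remainder terms are deterministically bounded, and tracking the skewness of the binomial together with the coefficients of the expansion produces the second summand involving $b_p$ and $c_p$ (with the $Np(1-p)(1-\delta/10)$ denominator absorbing the conditioning slack) and the third summand involving $d_p$ and the $\Delta_\infty$-scaled logarithmic factors.

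Putting the pieces together: on $\mathcal{G}$, an event of probability at least $1-\delta$, the privacy loss is at most the stated $\varepsilon$, and the standard lemma that ``$\mathcal{L} \le \varepsilon$ except with probability $\le \delta$ implies $(\varepsilon,\delta)$-DP'' closes the argument; one also reruns the expansion with $a$ replaced by $-a$ to handle the symmetric direction $\X \leftrightarrow \X'$, which changes nothing since only $|a_j|$ and $\|a\|_2$ enter the bound. The main obstacle is the bookkeeping in the middle step: extracting the exact constants $b_p$, $c_p$, $d_p$ requires careful third-order control of $\log\binom{N}{k}$ and of the binomial's third moment, valid uniformly over $k$ in the good region, and one must allocate the overall failure budget $\delta$ across the Chernoff bound defining $\mathcal{G}$, the sub-Gaussian bound on the linear term, and the $\delta/10$ slack so that the pieces compose correctly — this delicate constant-tracking, rather than any conceptual difficulty, is where the effort goes. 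Since the statement is a verbatim restatement of a result of \citet{agarwal2018cpsgd}, I would ultimately defer to their proof for the full details.
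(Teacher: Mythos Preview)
Your proposal is appropriate: the paper does not prove this theorem at all but simply quotes it verbatim from \citet{agarwal2018cpsgd} as an input to derive the Poisson mechanism's privacy guarantee, so there is nothing to compare against. Your high-level sketch of the good-region/privacy-loss-expansion argument is the standard route and your closing remark deferring to the original reference is exactly what the paper does.
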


Setting $p = \lambda/N$ and considering the limiting bound when $N\rightarrow\infty$ we get:

\begin{theorem}[Privacy guarantees of the Poisson Mechanism]
The $d$-dimensional Poisson mechanism with parameter
\( \lambda > \max( 23 \log(10d/\delta),\, 2\Delta_\infty) \),
is \((\varepsilon, \delta)\)-differentially private with
\[
\varepsilon =
\frac{\Delta_2 \sqrt{2 \log \frac{1.25}{\delta}}}{\sqrt{\lambda}}
+ \frac{5\sqrt{2}\,\Delta_2 \sqrt{\log \frac{10}{\delta}} + \frac{5}{3}\Delta_1}{\lambda (1-\delta/10)}
+ \frac{\frac{2}{3}\Delta_\infty \log \frac{1.25}{\delta} + \frac{4}{3}\Delta_\infty \log \frac{20d}{\delta}\,\log \frac{10}{\delta}}{\lambda} \,.
\]
\end{theorem}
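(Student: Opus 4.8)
The plan is to realize the Poisson mechanism as a limit of binomial mechanisms and push the bound of~\citet{agarwal2018cpsgd} through this limit. Fix a query $f : \mathcal{X} \to \Z^d$ with sensitivities $\Delta_1, \Delta_2, \Delta_\infty$, fix $\delta$, and fix $\lambda > \max(23\log(10d/\delta),\, 2\Delta_\infty)$; for each positive integer $N$ put $p_N = \lambda/N$. The $d$-dimensional binomial mechanism that adds independent $\operatorname{Bin}(N, p_N)$ noise to each coordinate of $f(\X)$ has per-coordinate mean $N p_N = \lambda$, and its differential privacy is unaffected by recentering the noise. By the classical Poisson limit theorem, $\operatorname{Bin}(N, \lambda/N)$ converges pointwise in probability mass to $\operatorname{Poi}(\lambda)$, and taking a $d$-fold product the whole noise law converges pointwise on $\Z^d$ to that of the $d$-dimensional Poisson mechanism.

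First I would check that the hypothesis of the cited theorem holds for all large $N$: since $N p_N(1 - p_N) = \lambda(1 - \lambda/N) \to \lambda$ and $\lambda$ strictly exceeds $\max(23\log(10d/\delta),\, 2\Delta_\infty)$, there is an $N_0$ with $N p_N(1 - p_N) \ge \max(23\log(10d/\delta),\, 2\Delta_\infty)$ for every $N \ge N_0$. Thus for $N \ge N_0$ the binomial mechanism is $(\varepsilon_N, \delta)$-DP, where $\varepsilon_N$ is the displayed formula evaluated at $(N, p_N)$. Next I would compute $\lim_{N \to \infty}\varepsilon_N$: since $p_N \to 0$ we get $d_{p_N} \to 4/3$, $b_{p_N} \to 5/3$, $c_{p_N} \to 5\sqrt{2}$, while $N p_N(1-p_N) \to \lambda$ and $\sqrt{N p_N(1-p_N)} \to \sqrt{\lambda}$; substituting term by term shows $\varepsilon_N$ converges to exactly the $\varepsilon$ in the statement.

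Then I would pass the privacy inequality to the limit. For neighboring $\X, \X'$, write $\mu_N, \mu_N'$ for the output laws of the binomial mechanism and $\mu, \mu'$ for those of the Poisson mechanism, all supported on the countable set $\Z^d$. Pointwise convergence of the mass functions, combined with the fact that all of them are probability mass functions, gives $\|\mu_N - \mu\|_{\mathrm{TV}} \to 0$ and $\|\mu_N' - \mu'\|_{\mathrm{TV}} \to 0$ by Scheff\'e's lemma. For every $S \subseteq \Z^d$ we have $\mu_N(S) \le e^{\varepsilon_N}\mu_N'(S) + \delta$ whenever $N \ge N_0$; letting $N \to \infty$ and invoking total-variation convergence together with $\varepsilon_N \to \varepsilon$ gives $\mu(S) \le e^{\varepsilon}\mu'(S) + \delta$. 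As this holds for all such $S$ and all neighboring pairs, the $d$-dimensional Poisson mechanism is $(\varepsilon, \delta)$-DP.

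The main obstacle is this last limiting step: one needs a mode of convergence strong enough to bound $\mu_N(S)$ uniformly over $S$ --- total variation via Scheff\'e, rather than mere weak convergence --- because the supports of the binomial laws grow with $N$ even though they converge at each fixed integer point, and one also needs $\varepsilon_N$ to genuinely converge, which is why the explicit term-by-term limit of the Agarwal et al.\ expression is carried out rather than argued softly. The remaining steps are routine substitution and elementary limits.
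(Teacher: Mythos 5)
Your proposal is correct and follows essentially the same route as the paper, which simply sets $p=\lambda/N$ in the binomial-mechanism bound of Agarwal et al.\ and takes the limit $N\to\infty$. Your write-up is in fact more careful than the paper's one-line argument, since you verify the hypothesis for large $N$, compute the limits $d_p\to 4/3$, $b_p\to 5/3$, $c_p\to 5\sqrt{2}$ explicitly, and justify passing the $(\varepsilon_N,\delta)$-DP inequality to the limit via total-variation convergence (Scheff\'e), a step the paper leaves implicit.
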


A simpler expression can be derived for unit sensitivities by relaxing the constants and assuming that \(\delta\) is not too large:

\begin{corollary}[Simplified upper bound with unit sensitivities]
    Assume that \(\Delta_1=\Delta_2=\Delta_\infty=1\). Then for \(\delta < 1/100\), the $d$-dimensional Poisson mechanism with parameter \(\lambda > 23 \log(10d/\delta)\) is \((\varepsilon, \delta)\)-differentially private for
    \[
    \varepsilon = \frac{\sqrt{2\log\frac{1.25}{\delta}}}{\sqrt{\lambda}} 
    + \frac{2\,\log\frac{20d}{\delta}\,\log\frac{10}{\delta}}{\lambda}\,.
    \]
    \end{corollary}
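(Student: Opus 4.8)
The plan is to obtain this corollary purely as a simplification of the preceding \textbf{Theorem (Privacy guarantees of the Poisson Mechanism)}. Setting $\Delta_1 = \Delta_2 = \Delta_\infty = 1$, the hypothesis $\lambda > \max(23\log(10d/\delta),\, 2)$ reduces to $\lambda > 23\log(10d/\delta)$, since $\delta < 1/100$ and $d \ge 1$ force $23\log(10d/\delta) > 23\log 1000 > 2$. Under the same substitution the stated privacy parameter becomes
\[
\varepsilon = \frac{\sqrt{2\log\tfrac{1.25}{\delta}}}{\sqrt{\lambda}}
+ \frac{5\sqrt{2}\,\sqrt{\log\tfrac{10}{\delta}} + \tfrac{5}{3}}{\lambda(1-\delta/10)}
+ \frac{\tfrac{2}{3}\log\tfrac{1.25}{\delta} + \tfrac{4}{3}\log\tfrac{20d}{\delta}\log\tfrac{10}{\delta}}{\lambda}\,.
\]
The first summand already equals the claimed first term, so it remains to show the last two summands together do not exceed $\tfrac{2}{\lambda}\log\tfrac{20d}{\delta}\log\tfrac{10}{\delta}$; equivalently, that bumping the coefficient of $\log\tfrac{20d}{\delta}\log\tfrac{10}{\delta}$ from $\tfrac{4}{3}$ to $2$ is enough to absorb everything else.

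First I would clear the factor $\tfrac{1}{1-\delta/10}$: since $\delta < 1/100$ gives $1 - \delta/10 > 999/1000$, we may replace the denominator $\lambda(1-\delta/10)$ with $\lambda$ at the cost of multiplying the numerator by $1000/999 < 1.01$. Then I would set $u := \log\tfrac{10}{\delta}$, noting $u > \log 1000 > 6.9$, and use $\log\tfrac{1.25}{\delta} = u - \log 8 \le u$ together with $\log\tfrac{20d}{\delta} \ge u + \log 2$ (valid since $d \ge 1$). The required inequality then follows from
\[
1.01\Bigl(5\sqrt{2}\,\sqrt{u} + \tfrac{5}{3}\Bigr) + \tfrac{2}{3}u \;\le\; \tfrac{2}{3}(u + \log 2)\,u\,,
\]
which, after cancelling the common $\tfrac{2}{3}u$-order contribution, reduces to a one-variable inequality of the shape $c_1\sqrt{u} + c_2 \le \tfrac{2}{3}u^2 - c_3 u$ for explicit small constants $c_1, c_2, c_3$.

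Finally I would verify this last inequality on $u \ge 6.9$: it holds at the left endpoint with a comfortable margin (the right side is near $30$, the left near $20$), and since the right side grows quadratically while the left grows like $\sqrt{u}$, a single derivative check confirms it continues to hold for all larger $u$. Chaining the three reductions yields $\varepsilon \le \tfrac{\sqrt{2\log(1.25/\delta)}}{\sqrt{\lambda}} + \tfrac{2\log(20d/\delta)\log(10/\delta)}{\lambda}$, as claimed. The only subtle point, and the main obstacle, is keeping the constants tight enough in the second step that the clean final coefficient $2$ --- rather than, say, $2.1$ --- actually goes through; this is routine constant-chasing resting only on $\delta < 1/100$, $d \ge 1$, and monotonicity, with no new probabilistic input needed.
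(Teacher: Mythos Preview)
Your proposal is correct and follows precisely the route the paper intends: the paper gives no explicit proof of this corollary, merely stating that ``a simpler expression can be derived for unit sensitivities by relaxing the constants and assuming that $\delta$ is not too large.'' Your argument makes this constant-chasing explicit---reducing the hypothesis, bounding $1/(1-\delta/10)$, and then verifying a one-variable inequality in $u=\log(10/\delta)$ on the range $u>\log 1000$---and the numerical checks go through as you describe.
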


We will need the following lemma that determines the sampling step of private lossless multiple release for the Poisson mechanism:

\begin{lemma}\label{lemma:conditional-poisson}
    Let $X_1 \sim \operatorname{Poi}(\lambda_1)$ and $X_2 \sim \operatorname{Poi}(\lambda_2)$ be independent Poisson random variables. 
    Then, for any nonnegative integer $k$, the conditional distribution of $X_1$ given $X_1+X_2 = k$ is
    \begin{align*}
        X_1\mid (X_1+X_2=k) \sim \operatorname{Binomial}\left(k, \frac{\lambda_1}{\lambda_1+\lambda_2}\right)\,.
    \end{align*}
    \end{lemma}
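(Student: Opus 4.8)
The plan is to compute the conditional probability mass function directly from the definition of conditional probability. Fix a nonnegative integer $k$ and consider any $j\in\{0,1,\dots,k\}$ (for $j>k$ the conditional probability is trivially zero since $X_2$ cannot be negative). First I would recall the standard fact that a sum of independent Poisson variables is Poisson, so $X_1+X_2\sim\operatorname{Poi}(\lambda_1+\lambda_2)$; this can itself be proved in one line via characteristic functions (\Cref{def:CF} and the convolution property stated in the excerpt), since $\varphi_{X_1+X_2}(t)=\varphi_{X_1}(t)\varphi_{X_2}(t)=\exp(\lambda_1(e^{it}-1))\exp(\lambda_2(e^{it}-1))=\exp((\lambda_1+\lambda_2)(e^{it}-1))$.

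Next I would write, using independence of $X_1$ and $X_2$,
\begin{equation*}
    \Pr[X_1=j\mid X_1+X_2=k]=\frac{\Pr[X_1=j]\,\Pr[X_2=k-j]}{\Pr[X_1+X_2=k]}=\frac{\dfrac{e^{-\lambda_1}\lambda_1^{j}}{j!}\cdot\dfrac{e^{-\lambda_2}\lambda_2^{k-j}}{(k-j)!}}{\dfrac{e^{-(\lambda_1+\lambda_2)}(\lambda_1+\lambda_2)^{k}}{k!}}\,.
\end{equation*}
The exponential factors cancel, and regrouping the factorials as $\tfrac{k!}{j!(k-j)!}=\binom{k}{j}$ leaves
\begin{equation*}
    \Pr[X_1=j\mid X_1+X_2=k]=\binom{k}{j}\frac{\lambda_1^{j}\lambda_2^{k-j}}{(\lambda_1+\lambda_2)^{k}}=\binom{k}{j}\left(\frac{\lambda_1}{\lambda_1+\lambda_2}\right)^{j}\left(\frac{\lambda_2}{\lambda_1+\lambda_2}\right)^{k-j}\,,
\end{equation*}
which is exactly the $\operatorname{Binomial}(k,\lambda_1/(\lambda_1+\lambda_2))$ pmf evaluated at $j$, since $\lambda_2/(\lambda_1+\lambda_2)=1-\lambda_1/(\lambda_1+\lambda_2)$.

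Since this is a classical computation, there is essentially no genuine obstacle; the only point requiring any care is making sure the support is handled correctly (the identity holds for $0\le j\le k$ and both sides vanish otherwise) and invoking the closure of the Poisson family under independent sums, which is why I would dispatch that sub-fact first. I would then conclude that the displayed identity holds for all $j$, establishing the claimed conditional distribution.
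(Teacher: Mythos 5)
Your proposal is correct and follows essentially the same route as the paper's proof: compute the joint pmf via independence, divide by the Poisson$(\lambda_1+\lambda_2)$ pmf of the sum, and recognize the binomial coefficient. The only (harmless) addition is your explicit characteristic-function justification of the Poisson convolution fact, which the paper simply states.
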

    
    \begin{proof}
    Since \( X_1 \) and \( X_2 \) are independent, their joint probability mass function is for all $x = 0,1, \cdots, k$
    \begin{align*}
    P[X_1=x,\,X_2=k-x] = e^{-(\lambda_1+\lambda_2)}\frac{\lambda_1^x}{x!}\frac{\lambda_2^{\,k-x}}{(k-x)!}\,.
    \end{align*}
    Moreover, the sum \( X_1+X_2 \) is Poisson with parameter \( \lambda_1+\lambda_2 \), so that
    \[
    \Pr[X_1+X_2=k] = e^{-(\lambda_1+\lambda_2)}\frac{(\lambda_1+\lambda_2)^k}{k!}\,.
    \]
    Thus, by the definition of conditional probability,
    \begin{align*}
    \Pr[X_1=x \mid X_1+X_2=k] &= \frac{\Pr[X_1=x,\,X_2=k-x]}{\Pr[X_1+X_2=k]} \\
    &= \dfrac{e^{-(\lambda_1+\lambda_2)}\dfrac{\lambda_1^x}{x!}\dfrac{\lambda_2^{\,k-x}}{(k-x)!}}{e^{-(\lambda_1+\lambda_2)}\dfrac{(\lambda_1+\lambda_2)^k}{k!}} \\
    &= \binom{k}{x}\left(\dfrac{\lambda_1}{\lambda_1+\lambda_2}\right)^x \left(\dfrac{\lambda_2}{\lambda_1+\lambda_2}\right)^{k-x}.
    \end{align*}
    This is precisely the probability mass function of a Binomial random variable with parameters \( k \) and \( \frac{\lambda_1}{\lambda_1+\lambda_2} \). 
    \end{proof}

 \section{The Laplace Mechanism}\label{sec:laplace}

\citet{koufogiannis_2016} showed how to do lossless \emph{gradual} releases for the Laplace mechanism and supports either tightening the privacy guarantees or loosening them. 
We will now strengthen this result by exactly showing how to do them in arbitrary order, similar to what was done in \cref{sec:lossless-multiple-release} for the Gaussian mechanism.
We will first show that the Laplace distribution satisfies \cref{def:decomposable}.
This already implies the existence of an algorithm supporting gradual lossless release via \cref{lemma:additive_noise_means_lossless}.
After, we will derive how to sample a new release with scale parameter $b$ given two distinct releases with scaling $b_2 < b$ and $b < b_1$.

\begin{definition}[Laplace distribution]\label{def:lap}
The \emph{zero-centered Laplace distribution} $\lap(0, b)$ with scale parameter $b > 0$ has probability density function $f_b(x) = \frac{1}{2b} \exp( -|x|/b )$ for all $x \in \mathbb{R}$.
\end{definition}

\begin{lemma}[{\citet{kotzLaplaceDistributionGeneralizations2001}} Characteristic function of Laplace]\label{lem:CFLap}
Let $X$ be Laplace random variable with probability density function as in \cref{def:lap}, then for all $t\in \R$, the characteristic function of $X$ is
\[ 
\varphi_X(t) = \EX[e^{itX}] = \int\limits_{-\infty}^{\infty}e^{i t x}\frac{1}{2b}e^{-|x|/b} dx = \frac{1}{1+b^2t^2}\,.
\]
\end{lemma}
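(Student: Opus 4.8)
The plan is to compute the integral directly by exploiting the piecewise definition of $|x|$. First I would split the domain of integration at the origin, writing
\[
\varphi_X(t) = \frac{1}{2b}\int_{-\infty}^{0} e^{itx} e^{x/b}\,dx + \frac{1}{2b}\int_{0}^{\infty} e^{itx} e^{-x/b}\,dx\,,
\]
so that each piece is an integral of a single complex exponential $e^{x(it \pm 1/b)}$. The key observation making both integrals converge is that the real part of the exponent is $x/b$ on the negative half-line (decaying as $x\to-\infty$) and $-x/b$ on the positive half-line (decaying as $x\to+\infty$); in both cases $b>0$ guarantees absolute integrability, which also justifies using the Lebesgue integral / Fubini-type manipulations implicitly.

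Next I would evaluate the two elementary antiderivatives: $\int_{-\infty}^{0} e^{x(it+1/b)}\,dx = \frac{1}{it + 1/b}$ and $\int_{0}^{\infty} e^{x(it-1/b)}\,dx = \frac{1}{1/b - it}$, taking care with the sign when the exponent's real part is negative. Adding these and putting over a common denominator gives
\[
\frac{1}{1/b+it} + \frac{1}{1/b-it} = \frac{2/b}{1/b^2 + t^2}\,,
\]
and multiplying by the prefactor $\frac{1}{2b}$ collapses this to $\frac{1/b^2}{1/b^2+t^2} = \frac{1}{1+b^2 t^2}$, as claimed. As a sanity check I would verify $\varphi_X(0) = 1$ and that the answer is real and even in $t$, consistent with the symmetry of the Laplace density.

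Honestly, there is no genuine obstacle here — the statement is a standard computation and the only thing to be careful about is bookkeeping of signs in the two one-sided exponential integrals and confirming convergence at $\pm\infty$. If one wanted a proof avoiding complex integration entirely, an alternative is to note $e^{itx} = \cos(tx) + i\sin(tx)$, observe that the $\sin$ term integrates to zero by oddness, and reduce to the real integral $\frac{1}{b}\int_0^\infty \cos(tx) e^{-x/b}\,dx$, which is a textbook Laplace-transform identity; but the direct complex computation above is shorter and self-contained.
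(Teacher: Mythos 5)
Your computation is correct and follows the same route as the paper: split the integral at the origin, evaluate the two one-sided exponential integrals $\frac{1}{1/b+it}$ and $\frac{1}{1/b-it}$, and combine to get $\frac{1}{1+b^2t^2}$. No gaps.
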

\begin{proof}
    By a simple integration:
    \begin{align*}
        \EX[e^{itX}] &= \frac{1}{2b} \int\limits_{-\infty}^{\infty}e^{itx -|x|/b} dx
        = \frac{1}{2b}\left(\int\limits_{-\infty}^{0}e^{(1/b + ti)x} dx
        + \int\limits_{0}^\infty e^{(-1/b + ti)x}dx \right)\\
        &= \dfrac{1}{2b}\left(\dfrac{1}{1/b + ti}+ \dfrac{1}{1/b - ti}\right) = \dfrac{1}{1+b^2t^2}\,.\qedhere
    \end{align*}
\end{proof}
We will now show that the zero-centered Laplace distribution with scale parameter $b$ satisfies convolution preorder (\cref{def:decomposable}).
Note that a larger value of $b$ corresponds to a more private release by definition.
Building on this, we will condition on the two closest releases to create a new one in the middle, as done in \cref{lemma:gaussian_seq_ooo} for the Gaussian.
The following fact was already shown in~\cite{koufogiannis_2016}, but we give a proof here for completeness.
\begin{claim}[Convolution preorder: Laplace]\label{lem:lap-cpo}
    Fix $b_2 < b_1 \in \R^+$ let $X \sim \lap(0, b_2)$ and draw 
    \begin{align*}
        W = \begin{cases}
        0 & \text{with probability } b_2^2/b_1^2\\
        \lap(0, b_1) & \text{otherwise}
    \end{cases},  \quad \text{then~} X + W \sim \lap(0, b_1)\,.
    \end{align*}
\end{claim}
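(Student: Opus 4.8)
The plan is to work entirely with characteristic functions, using that $X$ and $W$ are drawn independently so that $\varphi_{X+W}(t) = \varphi_X(t)\cdot\varphi_W(t)$. By \cref{lem:CFLap} we have $\varphi_X(t) = 1/(1+b_2^2 t^2)$. Since $W$ is a two-component mixture---a point mass at $0$ with weight $b_2^2/b_1^2$ and a $\lap(0,b_1)$ component with the remaining weight $1-b_2^2/b_1^2$---its characteristic function is the corresponding convex combination of the two component characteristic functions, namely
\[
    \varphi_W(t) = \frac{b_2^2}{b_1^2} + \left(1 - \frac{b_2^2}{b_1^2}\right)\frac{1}{1+b_1^2 t^2}\,,
\]
where we also use that $b_2 < b_1$ guarantees $b_2^2/b_1^2\in(0,1)$ is a genuine probability.

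Next I would put $\varphi_W(t)$ over the common denominator $b_1^2(1+b_1^2 t^2)$: the numerator becomes $b_2^2(1+b_1^2 t^2) + (b_1^2 - b_2^2) = b_1^2(1+b_2^2 t^2)$, so that $\varphi_W(t) = (1+b_2^2 t^2)/(1+b_1^2 t^2)$. Multiplying by $\varphi_X(t)$ then gives
\[
    \varphi_{X+W}(t) = \frac{1}{1+b_2^2 t^2}\cdot\frac{1+b_2^2 t^2}{1+b_1^2 t^2} = \frac{1}{1+b_1^2 t^2}\,,
\]
which by \cref{lem:CFLap} is exactly the characteristic function of $\lap(0,b_1)$. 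Since a probability distribution is uniquely determined by its characteristic function, it follows that $X + W \sim \lap(0,b_1)$, proving the claim.

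There is no real obstacle here: the computation is a short algebraic simplification, and the only points that need care are recording the independence of $X$ and $W$ (so the product rule applies) and the admissibility of the mixture weight. The key structural feature is the cancellation of the factor $1+b_2^2 t^2$, which is what forces the bridging distribution to have this particular ``spike at $0$ plus rescaled Laplace'' form; it is the same telescoping-style cancellation that will be exploited when chaining several such bridging distributions together for lossless multiple release of Laplace noise.
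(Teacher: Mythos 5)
Your proof is correct and uses essentially the same argument as the paper: both work with characteristic functions via \cref{lem:CFLap} and the product rule for independent sums, and both hinge on the identity $b_2^2(1+b_1^2t^2) + (b_1^2-b_2^2) = b_1^2(1+b_2^2t^2)$. The only cosmetic difference is direction---you compute $\varphi_W$ from the mixture and verify the product equals $1/(1+b_1^2t^2)$, whereas the paper solves $\varphi_W = \varphi_{X+W}/\varphi_X$ and recognizes the result as the mixture---so no further comment is needed.
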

\begin{proof}
We know that the characteristic function of $X$ is $\varphi_{X}(t) = \frac{1}{1+b_2^2t^2}$ and for the convolution $\varphi_{X+W} = \frac{1}{1+b_1^2t^2}$.
Because of independence, the convolution is defined for all $t$ as:
\begin{align*}
\varphi_{X}(t)\varphi_{W}(t) &= \varphi_{X+W}(t)\\
\Leftrightarrow \varphi_{W}(t) &= \frac{1+b_2^2t^2}{1+b_1^2t^2} = \frac{b_1^2 + b_1^2b_2^2t^2}{b_1^2(1+b_1^2t^2)}
= \frac{b_2^2(1+b_1^2t^2) + b_1^2-b_2^2}{b_1^2(1+b_1^2t^2)}
= \frac{b_2^2}{b_1^2}+ \left(1 - \frac{b_2^2}{b_1^2}\right) \cdot \frac{1}{1+b_1^2t^2}\,.
\end{align*}

The last expression encodes the convex combination of the claimed mixture distribution because $\frac{1}{1+b_1^2t^2}$ is again the characteristic function of a zero-centered Laplace distribution with scaling parameter $b_1$.
\end{proof}

We next prove a simple result about the convolution of two Laplace distributions (compare also eq., 2.3.23 of \citet{kotzLaplaceDistributionGeneralizations2001}).
\begin{lemma}[Convolution]\label{lem:lap-conv}
    For two fixed scaling parameters $b_1 \neq b_2 \in \R^+$, let $X_1 \sim \lap(0, b_1)$ and $X_2 \sim \lap(0, b_2)$. Then the density of $X_1 + X_2$ is given by
    \begin{align*}
        (f_{b_1} \ast f_{b_2})(t)
        = \dfrac{1}{2(b_1^2 - b_2^2)}\left( b_1 e^{-|t|/b_1} - b_2 e^{-|t|/b_2}\right)\,.
    \end{align*}
\end{lemma}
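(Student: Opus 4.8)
The plan is to work with characteristic functions, exactly as the paper does for \Cref{lem:lap-cpo}. By \Cref{lem:CFLap} we have $\varphi_{X_1}(t) = 1/(1+b_1^2 t^2)$ and $\varphi_{X_2}(t) = 1/(1+b_2^2 t^2)$, and since $X_1$ and $X_2$ are independent, the characteristic function of $X_1 + X_2$ is the product
\[
\varphi_{X_1+X_2}(t) = \frac{1}{(1+b_1^2 t^2)(1+b_2^2 t^2)}\,.
\]
First I would decompose this by partial fractions in the auxiliary variable $u = t^2$: writing $\frac{1}{(1+b_1^2 u)(1+b_2^2 u)} = \frac{A}{1+b_1^2 u} + \frac{B}{1+b_2^2 u}$ and clearing denominators gives the linear system $A + B = 1$ and $A b_2^2 + B b_1^2 = 0$, which (using $b_1 \neq b_2$) has the unique solution $A = b_1^2/(b_1^2 - b_2^2)$ and $B = -b_2^2/(b_1^2 - b_2^2)$.

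Next I would recognize each term $1/(1+b_j^2 t^2)$ as the characteristic function of $\lap(0, b_j)$, whose density is $f_{b_j}(x) = \frac{1}{2b_j}e^{-|x|/b_j}$ by \Cref{def:lap}. Since a probability distribution is uniquely determined by its characteristic function, and a finite linear combination of characteristic functions is inverted termwise, the density of $X_1 + X_2$ is
\[
(f_{b_1} \ast f_{b_2})(t) = \frac{b_1^2}{b_1^2 - b_2^2}\, f_{b_1}(t) - \frac{b_2^2}{b_1^2 - b_2^2}\, f_{b_2}(t)\,.
\]
Substituting the explicit densities and simplifying, using $\frac{b_1^2}{b_1^2-b_2^2}\cdot\frac{1}{2b_1} = \frac{b_1}{2(b_1^2-b_2^2)}$ and the analogous identity for the $b_2$ term, yields the claimed formula
\[
(f_{b_1} \ast f_{b_2})(t) = \frac{1}{2(b_1^2 - b_2^2)}\left( b_1 e^{-|t|/b_1} - b_2 e^{-|t|/b_2}\right)\,.
\]

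I do not anticipate a genuine obstacle here; the only step requiring a little care is the appeal to Fourier inversion / uniqueness of characteristic functions to pass from the partial-fraction identity for $\varphi_{X_1+X_2}$ back to a statement about densities, but this is precisely the principle already used in the proof of \Cref{lem:lap-cpo}, so it fits the style of the paper. A more pedestrian alternative would be to evaluate the convolution integral $\int_{-\infty}^{\infty} f_{b_1}(x)\, f_{b_2}(t-x)\, dx$ directly, splitting the real line according to the signs of $x$ and $t-x$ (handling $t \geq 0$ and then invoking symmetry for $t < 0$); this also works but is more tedious and offers no conceptual gain, so I would keep the characteristic-function argument as the main proof.
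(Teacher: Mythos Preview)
Your proof is correct, but it takes a different route from the paper's. The paper actually chooses what you call the ``more pedestrian alternative'': it evaluates the convolution integral $\int_{-\infty}^{\infty} f_{b_1}(x)\,f_{b_2}(t-x)\,dx$ directly by assuming $t\geq 0$, splitting into the three ranges $(-\infty,0)$, $(0,t)$, $(t,\infty)$ according to the signs of $x$ and $t-x$, integrating each piece in closed form, and then recombining (with symmetry covering $t<0$). Your characteristic-function argument is cleaner and reuses \Cref{lem:CFLap} and the same Fourier-inversion principle invoked in \Cref{lem:lap-cpo}; the paper's approach avoids any appeal to inversion (the partial-fraction step produces a signed linear combination, so one is really using linearity of the inverse Fourier transform rather than uniqueness of characteristic functions for probability measures, though this is harmless here since both $1/(1+b_j^2 t^2)$ are integrable). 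Either approach is fine; yours is arguably more in keeping with the surrounding material, while the paper's is entirely self-contained.
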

\begin{proof}
Assume $t\geq 0$ and note that the other case follows by symmetry.
We compute the density of the convolution by a straightforward integration:
\begin{align*}
    (f_{b_1} \ast f_{b_2})(t)  &= \int_{-\infty}^{\infty}f_{b_1}(x)f_{b_2}(t-x)dx = \dfrac{1}{4b_1b_2}\int_{-\infty}^{\infty}\exp\left(-\frac{|x|}{b_1} - \frac{|t-x|}{b_2}\right)dx\\
    &= \dfrac{1}{4b_1b_2}\cdot \left(\int_{-\infty}^{0} e^{\frac{x}{b_1} - \frac{t-x}{b_2}} dx
    + \int_{0}^{t} e^{-\frac{x}{b_1} - \frac{t-x}{b_2}} dx
    + \int_{t}^{\infty}e^{-\frac{x}{b_1} - \frac{x-t}{b_2}} dx\right)\\
    &=\dfrac{1}{4b_1b_2}\cdot \bigg(\left[\dfrac{\exp\left(x/b_1 - (t-x)/b_2\right)}{b_1^{-1} +b_2^{-1}}\right]_{-\infty}^{0}\\
    &\qquad+ \left[\dfrac{\exp(-x/b_1 - (t-x)/b_2)}{b_2^{-1} - b_1^{-1}}\right]_{0}^{t}
    + \left[\dfrac{\exp(-x/b_1 - (x-t)/b_2)}{-b_2^{-1} - b_1^{-1}}\right]_{t}^{\infty}\bigg)\\
    &= \dfrac{1}{4} \cdot \left(\dfrac{\exp(-t/b_2)}{b_1 + b_2}
    + \dfrac{\exp(-t/b_1) - \exp(-t/b_2)}{b_1 - b_2}
    + \dfrac{\exp(-t/b_1)}{b_1 + b_2}\right)\\
    &= \frac{1}{2(b_1^2 - b_2^2)}\left( b_1 e^{-t/b_1} - b_2 e^{-t/b_2}\right)\,.\qedhere
\end{align*}
\end{proof}

Now, we are ready to show that the Laplace mechanism can be implemented with multiple releases.
\begin{lemma}[Multiple release Laplace]\label{lem:lap-mult}
For fixed $0 < b_2 < b < b_1$, let $\mu_1 = b^2/b_1^2$ and $\mu_2 = b_2^2/b^2$ and $D_1 \sim \ber(\mu_1)$ and $D_2 \sim \ber(\mu_2)$.
Furthermore, let
\begin{align*}
X_1 =
\begin{cases}0 & \text{if} \quad D_1 = 1 \\ 
\lap(0, b) & \text{otherwise }\end{cases}
\quad \text{and} \quad  
X_2 = 
\begin{cases}0 & \text{if} \quad D_2 = 1 \\ \lap(0, b_2) & \text{otherwise } \end{cases}.
\end{align*}
Then we have that $\Pr[X_1 = 0 \mid X_1 + X_2 = 0] = 1$, and
for every real number $k\neq 0$ the distribution of $X_1$ conditioned on $X_1 + X_2 = k$ is:
\begin{align}\label{alg:sample}
X_1\mid{(X_1 + X_2 = k)} \sim \begin{cases}
    0 & \text{with probability } \mu_1\cdot(1-\mu_2)\cdot \dfrac{\exp(-|k|/b_2)}{2b_2\cdot f_{X_1+X_2}(k)} \\[2ex]
    k & \text{with probability } (1-\mu_1)\cdot \mu_2\cdot\dfrac{\exp(-|k|/b)}{2b\cdot f_{X_1+X_2}(k)}\\[2ex]
    H(b, b_2, k) & \text{with probability } (1-\mu_1)\cdot (1- \mu_2)\cdot \dfrac{(f_b \ast f_{b_2})(k)}{f_{X_1+X_2}(k)}\\
    \end{cases} 
\end{align}    
where $0$ and $k$ denote the constant distributions, $H(b, b_2, k)$ is the probability distribution with probability density function $h(x) = \dfrac{f_b(x)f_{b_2}(k-x)}{(f_b \ast f_{b_2})(k)}$, 
\begin{align*}
f_{X_1 + X_2}(k) &= \frac{\mu_1(1-\mu_2}{2b_2} e^{-|k|/b_2} + \frac{(1-\mu_1)\mu_2}{2b} e^{-|k|/b}
  \frac{(1-\mu_1)(1-\mu_2)}{2(b^2-b_2^2)} \left(be^{-|k|/b}- b_2 e^{-|k|/b_2}\right) + \mu_1\mu_2\delta_0(x) ,\text{ and}\\
(f_b \ast f_{b_2})(k) &= \frac{1}{2(b_1^2 - b_2^2)}\left( b_1 e^{-t/b_1} - b_2 e^{-t/b_2}\right)\,,\end{align*}
where $\delta_0$ is the Dirac delta function.
\end{lemma}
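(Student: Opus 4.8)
The plan is to treat \Cref{lem:lap-mult} as a routine Bayes-rule computation over the four-way mixture describing the joint law of $(X_1,X_2)$. First I would condition on $(D_1,D_2)\in\{0,1\}^2$: with probability $\mu_1\mu_2$ we get $(X_1,X_2)=(0,0)$; with probability $\mu_1(1-\mu_2)$ we get $X_1=0$ and $X_2\sim\lap(0,b_2)$; with probability $(1-\mu_1)\mu_2$ we get $X_1\sim\lap(0,b)$ and $X_2=0$; and with probability $(1-\mu_1)(1-\mu_2)$ we get $X_1\sim\lap(0,b)$ and $X_2\sim\lap(0,b_2)$ independently. Pushing each of these four components forward under $(x_1,x_2)\mapsto(x_1,\,x_1+x_2)$ and then projecting onto the second coordinate yields the claimed mixed law of $X_1+X_2$: the first component contributes the atom $\mu_1\mu_2\,\delta_0$; the second contributes $\mu_1(1-\mu_2)\,f_{b_2}$ (since $X_1=0$ forces $X_2=k$); the third contributes $(1-\mu_1)\mu_2\,f_b$; and the fourth contributes $(1-\mu_1)(1-\mu_2)\,(f_b\ast f_{b_2})$, with the convolution evaluated via \Cref{lem:lap-conv}. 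Summing these four terms establishes the stated expression for $f_{X_1+X_2}(k)$.

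Next, for $k\neq 0$ the atom at $0$ plays no role, so I would apply Bayes' rule across the three absolutely continuous components of the joint law. The posterior probability of the $(D_1,D_2)=(1,0)$ component is $\mu_1(1-\mu_2)f_{b_2}(k)/f_{X_1+X_2}(k)$ and on it $X_1=0$; the posterior probability of $(0,1)$ is $(1-\mu_1)\mu_2 f_b(k)/f_{X_1+X_2}(k)$ and on it $X_1=k$; and the remaining posterior probability is $(1-\mu_1)(1-\mu_2)(f_b\ast f_{b_2})(k)/f_{X_1+X_2}(k)$, on which the conditional density of $X_1$ given $X_1+X_2=k$ is the standard expression $f_b(x)f_{b_2}(k-x)/(f_b\ast f_{b_2})(k)$, i.e.\ the law $H(b,b_2,k)$. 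Using $\frac{\exp(-|k|/b_2)}{2b_2}=f_{b_2}(k)$ and $\frac{\exp(-|k|/b)}{2b}=f_b(k)$ this is exactly \eqref{alg:sample}. For $k=0$ I would argue directly: among the four components only $(D_1,D_2)=(1,1)$ places positive mass on the event $\{X_1+X_2=0\}$ (the other three contribute to it only through their absolutely continuous, hence null, part), so conditioning on $X_1+X_2=0$ forces $X_1=X_2=0$ and $\Pr[X_1=0\mid X_1+X_2=0]=1$.

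The arithmetic here is light; the only point that warrants care is the handling of the mixed (atom-plus-continuous) distributions and the conditioning on the measure-zero event $\{X_1+X_2=k\}$, which I would make rigorous via a regular conditional distribution / disintegration of the law of $(X_1,\,X_1+X_2)$ rather than by informal manipulation of densities. The sole external input is \Cref{lem:lap-conv} for the density $f_b\ast f_{b_2}$; everything else follows immediately from the definitions of $X_1$ and $X_2$ together with Bayes' rule.
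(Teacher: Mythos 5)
Your proposal is correct and follows essentially the same route as the paper: decompose the joint law over the four values of $(D_1,D_2)$, obtain $f_{X_1+X_2}$ as the corresponding four-term mixture (using \cref{lem:lap-conv} for the doubly continuous part), handle $k=0$ by noting only the atom-atom component charges that event, and for $k\neq 0$ read off the three posterior component weights and the conditional density $f_b(x)f_{b_2}(k-x)/(f_b\ast f_{b_2})(k)$ via Bayes' rule. Your remark about formalizing the conditioning on the null event $\{X_1+X_2=k\}$ via disintegration is a slightly cleaner framing of what the paper does with density ratios as limits, but the substance is identical.
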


\begin{proof}
First we consider $\Pr[X_1 = 0| X_1 + X_2 = 0]$, arguing that
\begin{align*}
    \Pr[X_1 = 0| X_1 + X_2 = 0]
    = \frac{\Pr[X_1 = 0]\Pr[X_2=0]}{\Pr[X_1 + X_2 = 0]} = \frac{\mu_1\mu_2}{\mu_1\mu_2 + 0}= 1 \,.
\end{align*}
To see why the first equality holds split $\Pr[X_1 + X_2 = 0]$ into each of the four combinations of discrete/continuous for $X_1, X_2$.
The only non-zero contribution to the probability mass comes from the discrete/discrete case, as the remaining cases contribute mass proportional to the probability that a continuous distribution assumes an exact value, which is zero.

Next we turn to the distribution of $X_1$ conditioned on $X_1 + X_2 = k$ where $k\neq 0$.
Denote by $f_b$ the probability density function of $\lap(0, b)$.
Note that the mixture densities become
\begin{align*}
    f_{X_1}(x) &= \mu_1 \delta_0(x) + (1-\mu_1) f_b(x)\,,\\
    f_{X_2}(x) &= \mu_2 \delta_0(x) + (1-\mu_2)f_{b_2}(x)\,.
\end{align*}
Before analyzing the different cases separately, we compute the convolution $X_1 + X_2$.

\allbold{Convolution $f_{X_1 + X_2}$.}

Note that have to take care of a subtle technicality: The case that both $X_1$ and $X_2$ are zero from the discrete part can only happen when we condition on $X_1 + X_2 = 0$.
We can now give the density of the convolution $X_1 + X_2$ at any real point $x$:
\begin{align*}
f_{X_1 + X_2}(x) &=  \sum_{(d_1,d_2) \in \{0,1\}^2} f_{X_1+X_2|D_1 = d_1, D_2 = d_2}(x) \cdot \Pr[D_1 = d_1 \wedge D_2 = d_2]\\
&=\mu_1 (1-\mu_2) \cdot  f_{b_2}(x) + (1-\mu_1) \mu_2 \cdot f_{b}(x)
+ (1-\mu_1)(1-\mu_2) \cdot (f_{b} \ast f_{b_2})(x)
+ \mu_1\mu_2\delta_0(x)\\
&=\dfrac{\mu_1(1-\mu_2)}{2b_2}e^{-|x|/b_2}
+ \frac{(1-\mu_1)\mu_2}{2b}e^{-|x|/b}
+ (1-\mu_1)(1-\mu_2) \frac{1}{2(b^2-b_2^2)}\left(b e^{-|x|/b}
- b_2 e^{-|x|/b_2}\right)\\
&\qquad + \mu_1\mu_2\delta_0(x)\,,
\end{align*}
where the third term follows from \Cref{lem:lap-conv}. 
Note that the last term only contributes when $x=0$.

We can now show how the sampling procedure in the claim is justified. 
We first assume $k \neq 0$ and analyze three possible cases how $X_1 + X_2$ is built up: 
Either both of them are drawn from the (continuous) Laplace distribution or exactly one.
(The case where both are from their respective discrete parts can only happen if $k = 0$, analyzed above.)
\begin{caseof}
\case{$X_1 = 0$ and $X_2 = k$.}
$X_2 = k$ is necessarily from its continuous part.
By the definition of conditional probability, we have for $k \neq 0$:
\begin{align*}
    \Pr[X_1 = 0 | X_1 + X_2 = k, k \neq 0] 
    &= \frac{\Pr[X_1 = 0 \wedge X_1+X_2=k]}{\Pr[X_1 + X_2 = k]} 
    = \frac{\Pr[X_1 = 0]\Pr[X_2=k]}{\Pr[X_1 + X_2 = k]}\\
    &= \mu_1\frac{f_{X_2}(k) }{f_{X_1 + X_2}(k)}
    = \frac{\mu_1(1-\mu_2)}{2b_2 f_{X_1+X_2}(k)} \cdot e^{-|k| / b_2} \coloneqq p_1\,.
\end{align*}
For the third equality, we simply used definition of a probability density function via its limit:
\[
\lim_{\Delta\to 0^+} \frac{\Pr\left[X_2\in[k-\Delta, k+\Delta]\right]}{\Pr\left[X_1+X_2\in[k-\Delta, k+\Delta]\right]}
= \frac{f_{X_2}(k)}{f_{X_1+X_2}(k)}\,.
\]

\case{$X_1 = k$ and $X_2 = 0$}
Now assume the flipped case.
By a similar argument:
\[
\Pr[X_1 = k | X_1 + X_2 = k] = \frac{\Pr[X_1 = k, X_1 + X_2 = k]}{\Pr[X_1 + X_2 = k]}  = \frac{\Pr[X_1 = k]\Pr[X_2 = 0]}{\Pr[X_1 + X_2 = k]}
=\frac{(1-\mu_1)\mu_2}{2bf_{X_1+X_2}(k)} \cdot e^{-|k|/b} \coloneqq p_2 \,.
\]

\case{$X_1 \neq 0, X_2 \neq 0, X_1 + X_2 = k$} 
In the remaining case, both $X_1$ and $X_2$ are independently sampled from continuous Laplace distributions with probability density functions $f_b(x)$ and $f_{b_2}(x)$.
Therefore, with the remaining probability $1-(p_1 + p_2)$, we know that $X_1$ is sampled according to the following conditional probability density function:
\[
f_{X_1}(x)\vert_{X_1 + X_2 = k} =
\frac{f_{X_1, X_1+X_2}(x,k)}{(f_b \ast f_{b_2})(k)}
=\frac{f_b(x)f_{b_2}(k-x)}{(f_{b}\ast f_{b_2})(k)}\,,
\]
where the last line follows from the trivial identity $X_1 + X_2 = k \Leftrightarrow X_2 = k - X_1$.
This is a valid probability density function because $f$ is trivially non-negative due to its parts being non-negative and furthermore
\begin{align*}
\int_{-\infty}^{\infty}\frac{f_b(x)f_{b_2}(x-k)}{(f_{b} \ast f_{b_2})(k)} dx
= \frac{1}{(f_b \ast f_{b_2})(k)}\int_{-\infty}^{\infty}f_b(x)f_{b_2}(k-x) dx
= \frac{(f_b \ast f_{b_2})(k)}{(f_{b}\ast f_{b_2})(k)} = 1\,.
\end{align*}

\end{caseof}
What is left is to verify that the probabilities in \cref{alg:sample} indeed add up to one for $k\ne 0$:
\begin{align*}
      &\frac{(1-\mu_1)(1- \mu_2) (f_b \ast f_{b_2})(k)}{f_{X_1+X_2}(k)} + \frac{  
     (1-\mu_1)\mu_2\exp(-|k|/b)}{2b f_{X_1+X_2}(k)}  +\frac{\mu_1(1-\mu_2) \exp(-|k|/b_2)}{2b_2 f_{X_1+X_2}(k)} \\
    &  = \frac{1}{f_{X_1+X_2}(k)}
    \underbrace{\left((1-\mu_1) (1- \mu_2) (f_b \ast f_{b_2})(k) + \frac{(1-\mu_1)\mu_2 \exp(-|k|/b)}{2b} + \frac{\mu_1(1-\mu_2) \exp(-|k|/b_2)}{2b_2}e^{-\lambda_2 |k|} \right)}_{f_{X_1 + X_2}(k) \ \text{for} \ k \neq 0} = 1
    \end{align*}
\end{proof}

\subsection{Showing Convolution Preorder for Exponential Noise}\label{sec:exponential}

The exponential distribution is closely related to the Laplace distribution, but gives poor differential privacy guarantees.
Nevertheless, it still serves as a building block for some private mechanisms, e.g., Report-Noisy-Max~\cite{Ding2021}.
We show next that it also satisfies a convolution preorder.

\begin{definition}[Exponential distribution]\label{def:Exp}
The \emph{exponential distribution} $\operatorname{Exp}(\lambda)$ with rate parameter $\lambda > 0$ has probability density function $f_\lambda(x) = \lambda \exp( -\lambda x)$ for all $x \in \mathbb{R}_+$.
Furthermore, its characteristic function is given by $\varphi_X(t) = \EX[e^{itX}] = \frac{\lambda}{\lambda- it}$
\end{definition}

\begin{claim}[Convolution preorder: Exponential distribution]\label{lem:exp-conv}
    Fix $\lambda_1 < \lambda_2  \in \R_+$, let $X \sim \operatorname{Exp}(\lambda_2)$ and draw 
    \begin{align*}
    W = \begin{cases}
        0 & \text{with probability } \lambda_1/\lambda_2 \\
        \operatorname{Exp}( \lambda_1) & \text{otherwise} 
    \end{cases},  \quad \text{then~} X + W \sim \operatorname{Exp}(\lambda_1)\,.
    \end{align*}
\end{claim}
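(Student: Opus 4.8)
The plan is to mirror the proof of \Cref{lem:lap-cpo} and argue via characteristic functions. Recall from \Cref{def:Exp} that $\varphi_{\operatorname{Exp}(\lambda)}(t) = \lambda/(\lambda - it)$. Since $X$ and $W$ are independent, $\varphi_{X+W}(t) = \varphi_X(t)\,\varphi_W(t)$, so it suffices to show that the mixture distribution defining $W$ has characteristic function $\varphi_W(t) = \varphi_{\operatorname{Exp}(\lambda_1)}(t)/\varphi_{\operatorname{Exp}(\lambda_2)}(t)$; equality of characteristic functions then gives equality in distribution, which is the claim.

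First I would note that $W$ is well-defined: since $\lambda_1 < \lambda_2$ we have $\lambda_1/\lambda_2 \in (0,1)$, so the stated mixture is a genuine probability distribution. Its characteristic function is the corresponding convex combination of the point mass at $0$ (characteristic function $\equiv 1$) and $\operatorname{Exp}(\lambda_1)$, namely
\[
\varphi_W(t) = \frac{\lambda_1}{\lambda_2} + \left(1 - \frac{\lambda_1}{\lambda_2}\right)\frac{\lambda_1}{\lambda_1 - it}.
\]
Then I would clear denominators and simplify the right-hand side to $\frac{\lambda_1(\lambda_2 - it)}{\lambda_2(\lambda_1 - it)}$, and separately compute $\varphi_{\operatorname{Exp}(\lambda_1)}(t)/\varphi_{\operatorname{Exp}(\lambda_2)}(t) = \frac{\lambda_1/(\lambda_1-it)}{\lambda_2/(\lambda_2-it)}$, which is visibly the same expression. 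This establishes $\varphi_{X+W} = \varphi_{\operatorname{Exp}(\lambda_1)}$ and hence the claim.

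Alternatively, and perhaps cleaner to present, one can argue directly with densities: the density of $X + W$ is $\tfrac{\lambda_1}{\lambda_2} f_{\lambda_2} + (1 - \tfrac{\lambda_1}{\lambda_2})(f_{\lambda_2} \ast f_{\lambda_1})$, where a one-line integration analogous to \Cref{lem:lap-conv} gives $(f_{\lambda_1}\ast f_{\lambda_2})(x) = \tfrac{\lambda_1\lambda_2}{\lambda_2 - \lambda_1}(e^{-\lambda_1 x} - e^{-\lambda_2 x})$ for $x \ge 0$; substituting and collecting terms, the $e^{-\lambda_2 x}$ contributions cancel and one is left with $\lambda_1 e^{-\lambda_1 x}$, i.e.\ the $\operatorname{Exp}(\lambda_1)$ density. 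Either way there is no real obstacle: the only mild subtlety is that the exponential is supported on $\mathbb{R}_+$ rather than all of $\mathbb{R}$, but this causes no trouble for the characteristic-function argument and merely dictates the range of integration in the direct argument. The statement then feeds into \Cref{lemma:additive_noise_means_lossless} exactly as in the Laplace and Poisson cases to yield lossless multiple release for exponential noise.
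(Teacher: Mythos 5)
Your proposal is correct and uses essentially the same argument as the paper: both verify via characteristic functions that the mixture $W$ satisfies $\varphi_W(t)=\varphi_{\operatorname{Exp}(\lambda_1)}(t)/\varphi_{\operatorname{Exp}(\lambda_2)}(t)=\frac{\lambda_1}{\lambda_2}\cdot\frac{\lambda_2-it}{\lambda_1-it}$, differing only in the direction of the algebra (the paper decomposes the ratio into the mixture form, you expand the mixture and simplify to the ratio). The supplementary density computation is a fine alternative but not needed.
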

\begin{proof}
    Using the same trick as in the proof of \Cref{lem:lap-cpo}, we have that
\begin{align*}
    \varphi_{W}(t) &= \frac{\varphi_{X + W}(t)}{\varphi_{X}(t)}
    = \frac{\lambda_1}{\lambda_2}\cdot\frac{\lambda_2 - it}{\lambda_1 - it}
    = \frac{\lambda_1}{\lambda_2}\left( 1 + \frac{\lambda_2 - \lambda_1}{\lambda_1 - it}\right)
    = \frac{\lambda_1}{\lambda_2} + \left( 1 - \frac{\lambda_1}{\lambda_2} \right) \varphi_{X+W}(t)
\end{align*}
where the final expression is the characteristic function of the claimed mixture distribution.\end{proof} 

\end{document}